\font\tencmmib=cmmib10 \skewchar\tencmmib '60
\def\lessim{\ \lower4pt\hbox{$
		\buildrel{\displaystyle <}\over\sim$}\ }
\def\gessim{\ \lower4pt\hbox{$\buildrel{\displaystyle >}
		\over\sim$}\ }
\def\la{\langle}
\def\ra{\rangle}
\def\TH{{\rm Th}}
\newcommand{\e}{\mathbb{E}}
\newtheorem{lemma}{\bf Lemma}
\newtheorem{definition}{\bf Definition}
\newtheorem{theorem}{\bf Theorem}
\newtheorem{remark}{\bf Remark}
\newtheorem{example}{\bf Example}
\newtheorem{proposition}{\bf Proposition}
\newtheorem{notation}{\bf Notation}
\newtheorem{basicsetting}{\bf Basic Setting}
\newenvironment{Proof of lemma}{\noindent{\bf Proof of Lemma}}{\hfill$\Box$\newline}
\newenvironment{Proof of theorem}{\noindent{\bf Proof of Theorem}}{\hfill{\footnotesize${\square}$}\newline}
\newenvironment{Proof of theorems}{\noindent{\bf Proof of Theorems}}{\hfill$\Box$\newline}
\newenvironment{Proof of proposition}{\noindent{\bf Proof of Proposition}}{\hfill$\Box$\newline}
\newenvironment{Proof of propositions}{\noindent{\bf Proof of Propositions}}{\hfill$\Box$\newline}
\newenvironment{Proof of exercise}{\noindent{\it Proof of Exercise:}}{\hfill$\Box$}
\begin{document}
	
			\title{On convergence of the cavity and Bolthausen's \\
				TAP iterations to the local magnetization}
	
	%\author{Wei-Kuo Chen, Wai-Kit Lam}\thanks{School of Mathematics, University of Minnesota. Email: wkchen@umn.edu. W.-K. Chen's research is partially supported by NSF grant DMS-17-52184.}
	\author{Wei-Kuo Chen\thanks{University of Minnesota. Email: wkchen@umn.edu.  Partly supported by NSF grant DMS-17-52184} \and  Si Tang \thanks{Lehigh University. Email: sit218@lehigh.edu. Partly supported by the Collaboration Grant from the Simons Foundation \#712728}}
		
	\maketitle
	
	\begin{abstract}
		The cavity and TAP equations are high-dimensional systems of nonlinear equations of the local magnetization in the Sherrington-Kirkpatrick model. In the seminal work \cite{Bolthausen14}, Bolthausen  introduced an iterative scheme that produces an asymptotic solution to the TAP equations if the model lies inside the Almeida-Thouless transition line. However, it was unclear if this asymptotic solution coincides with the local magnetization. In this work, motivated by the cavity equations, we introduce a new iterative scheme  and establish a weak law of large numbers. We show that our new scheme is asymptotically the same as the so-called Approximate Message Passing algorithm, a generalization of Bolthausen's iteration, that has been popularly adapted in compressed sensing, Bayesian inferences, etc. Based on this, we confirm that our cavity iteration and Bolthausen's scheme both converge to the local magnetization as long as the overlap is locally uniformly concentrated. 
	\end{abstract}

%\tableofcontents	

\section{Introduction}

For $n\geq 1$, denote by $[n]:=\{1,\ldots,n\}.$ Let $A_n=(a_{ij})_{i,j\in[n]}$ be a symmetric matrix satisfying that $a_{ii}=0$ for $i\in [n]$ and $a_{ij}$ are i.i.d. standard Gaussian random variables for $i< j.$ For a given (inverse) temperature $\beta>0$ and an external field $h>0$, define the Hamiltonian of the Sherrington-Kirkpatrick (SK) model as
\begin{align*}
H_{n,\beta,h}(\sigma)&=-\frac{\beta}{\sqrt{n}}\sum_{1\leq i<j\leq n}a_{ij}\sigma_i\sigma_j-h\sum_{i=1}^n\sigma_i
\end{align*}
for any $\sigma\in \{\pm 1\}^n$, and set the Gibbs measure on $\{\pm 1\}^n$ by
\begin{align*}
G_{n,\beta,h}(\sigma)=\frac{e^{-H_{n,\beta,h}(\sigma)}}{Z_{n,\beta,h}},
\end{align*}
where $Z_{n,\beta,h}$ is the normalizing constant, i.e., $Z_{n,\beta,h}:=\sum_{\sigma}e^{-H_{n,\beta,h}(\sigma)}.$ %Let $\sigma,\sigma^1,\sigma^2,\ldots$ be i.i.d. samples from $G_{n,\beta,h}.$ 
Denote by $\la \cdot\ra_{n,\beta,h}$ the expectation with respect to the Gibbs measure. Whenever there is no ambiguity, we will simply write $\la\cdot\ra_{n,\beta,h}$ by $\la \cdot\ra.$ 

The SK model  is a mean-field disordered spin system introduced in  \cite{SK72} to study some unusual magnetic behaviors of certain alloys. Although its formulation is very simple, the SK model exhibits very profound structures commonly shared in a number of disordered systems with large complexities. Using the replica method, the SK model has been intensively studied in the physics literature (see \cite{MPV87}). Rigorous mathematical treatments have also been successfully developed in the past decades (see \cite{Pan13,Tal111,Tal112}).

In this work, we investigate two classical approaches, the cavity method and the TAP equations, to studying the local magnetizations of spins $$\la \sigma\ra:=(\la \sigma_1\ra,\ldots,\la \sigma_n\ra)$$ in the SK model in the high-temperature regime.
Here, this  regime, denoted by $\mathcal{D}$, is defined as the collection of all pairs $\beta,h>0$ such that 
\begin{align}\label{add:overlap}
	\lim_{n\to\infty}\e\bigl\la \bigl|R(\sigma^1,\sigma^2)-q\bigr|^2\bigr\ra=0,
\end{align}
where  $R(\sigma^1,\sigma^2):=n^{-1}\sum_{i=1}^n\sigma_i^1\sigma_i^2$ is called the overlap of two  spin configurations $\sigma^1$ and $\sigma^2$ that are independently sampled from the Gibbs measure $G_{n,\beta,h}$.  The constant $q=q_{\beta,h}$ in \eqref{add:overlap} and hereafter is the unique solution to the following equation
$$
q_{\beta,h} =\e\tanh^2(\beta z\sqrt{q_{\beta,h}} +h)
$$
for any $\beta,h>0$ (see \cite{Chen17} and \cite[Proposition 1.3.8]{Tal111}).  Whenever \eqref{add:overlap} is satisfied, using the cavity method, Talagrand \cite[Proposition 1.6.8]{Tal111} showed that the limiting free energy is
\begin{align}\label{freeenergy}
	\lim_{n\to\infty}\frac{1}{n}\log Z_{n,\beta, h}=\log 2+\frac{\beta^2}{4}(1-q)^2+\e \log \cosh(\beta z\sqrt{q}+h)
\end{align}
for $z\thicksim N(0,1)$. In \cite{AT78}, de Almeida and Thouless conjectured that the high-temperature regime $\mathcal D$ can also be characterized by the so-called AT-line condition, that is, the collection $\mathcal A$ of all pairs $\beta, h>0$ such that
\begin{align}\label{ATline}
\beta^2\e\frac{1}{\cosh^4(\beta z\sqrt{q_{\beta,h}}+h)}\leq 1.
\end{align}
While it can be shown \cite{Chen17,JT17,Tal111,Ton02} that {$\mathcal{D}\subseteq \mathcal{A}$}, it was also understood in \cite{JT17,Tal112} that fairly large portions of {$\mathcal A$ is contained in $\mathcal D$.} However, a complete proof for $\mathcal{A}\subseteq \mathcal{D}$ remains missing. Incidentally, it was recently shown in \cite{WKC21} that if we replace the external field $h\sum_{i=1}^n\sigma_i$ by $\sum_{i=1}^n h_i\sigma_i$ for $h_1,\ldots,h_n$ i.i.d. centered normal, then the corresponding AT-line condition %, that is, replacing $h$ in \eqref{ATline} by a standard Gaussian random variable independent of $z$,  
is indeed the  right curve to describe the high-temperature regime in the SK model. 

The asymptotic behavior of the local magnetizations can be described by the cavity equations and the TAP equations, both of which are high-dimensional systems of nonlinear equations.  
Initially proposed by M\'ezard-Parisi-Varosoro \cite{MPV87}, the cavity method allows one to compute asymptotically the local magnetization of an $n$-spin system through a nonlinear transformation of a Gaussian field in terms of the local magnetization of an $(n-1)$-spin system, namely,
\begin{align}\label{cavity}
	\la \sigma_n\ra \approx \tanh\Bigl(\frac{\beta}{\sqrt{n}}\sum_{j\neq n}a_{nj}\la \sigma_j\ra_{n-1,\beta',h}+h\Bigr),
\end{align}
where $\beta':=\beta\sqrt{(n-1)/n}.$ By symmetry, this equation is also valid for $\la \sigma_i\ra$, in which case, the local magnetizations on the right-hand side will correspond to the $(n-1)$-system excluding the $i$-th spin (see Lemma \ref{talagrand} below).

The TAP equations, named after Thouless, Anderson, and Palmer \cite{TAP}, describe the local magnetization from a different perspective. These equations assert that the local magnetization asymptotically satisfies a system of consistency equations,
\begin{align}\label{tap}
\la \sigma_i\ra&\approx \tanh\Bigl(\frac{\beta}{\sqrt{n}}\sum_{j\neq i}a_{ij}\la \sigma_j\ra+h-\beta^2\bigl(1-\bigl\|\la \sigma\ra\|^2\bigr)\la \sigma_i\ra\Bigr),\,\,\forall 1\leq i\leq n,
\end{align}
where 
$
\|x\|:=n^{-1}(\sum_{i=1}^n|x_i|^2)^{1/2}
$ for $x\in \mathbb{R}^n.$
 Here, the term $\beta^2\bigl(1-\bigl\|\la \sigma\ra\|^2\bigr)\la \sigma_i\ra$ (called the Onsager term) is introduced essentially to account for the substitution of $\la \sigma_j\ra_{n-1,\beta',h}$ in the cavity equations \eqref{cavity} by $\la \sigma_j\ra$, which is dependent on the entries $(a_{ij})_{j\neq i}$.

The systems of equations \eqref{cavity} and \eqref{tap} are valid for certain temperature $\beta$ and  external field $h.$ Assuming a very high temperature for the SK model, $\beta<1/2$, one can prove both the cavity equation and the TAP equations rigorously (see \cite{Chatterjee10}, \cite{Tal111}).   More subtle versions of the TAP equations in the entire temperature regime as well as for some variants of the SK model were also derived recently in \cite{AJ192,AJ191,BK19,CP2018,CPS18,CPS19}, where $\la \sigma\ra$ and the Onsager term were replaced by the notion of pure states or, more generally, the TAP states.

 It is natural to ask whether one can construct solutions to these equations asymptotically and show that they converge to the local magnetization in the entire high-temperature regime. The first attempt to this question was made by Bolthausen \cite{Bolthausen14}, in which he proposed an iterative scheme to construct an asymptotic solution to the TAP equations \eqref{tap}. More precisely, let $\mathbf{0}$ and $\mathbf{1}$ be the $n$-dimensional column vectors with all entries being $0$ and $1$, respectively. Starting from  $m^{[0]}=\mathbf{0}$ and $m^{[1]}=\sqrt{q_{\beta,h}}\mathbf{1}$, his iteration was defined as 
\begin{align*}
	m_i^{[k+1]}&=\tanh\Bigl(\frac{\beta}{\sqrt{n}}\sum_{j=1}^na_{ij}m_j^{[k]}+h-\beta^2\bigl(1-\|m^{[k]}\|^2\bigr)m_i^{[k-1]}\Bigr),\,1\leq i\leq n
\end{align*}
for $k\geq 1.$ Utilizing successive Gaussian conditioning arguments, it was shown in \cite{Bolthausen14} that this scheme converges in the sense that
\begin{align*}
	\lim_{k,k'\to\infty}\lim_{n\to\infty}\e\bigl\|m^{[k]}-m^{[k']}\bigr\|^2=0
\end{align*}
whenever $(\beta,h)$ lies in the regime $\mathcal{A}$,  but it was not answered whether his iteration converges to the local magnetization. 
In a more general formulation, Bolthausen's scheme is also known as the Approximate Message Passing (AMP) algorithm. Following the same conditioning argument in \cite{Bolthausen14}, one can show that this algorithm satisfies a law of large numbers, and efficient algorithms can be developed to solve many
estimation and optimization problems arising from compress sensing, Bayesian inference, etc.; see \cite{DJM13,DJM131,DMM09,DMM10,MV+18}. 

In this paper, motivated by the cavity equations, we propose a new nonlinear iterative scheme and establish three main results. First, we show that our scheme exhibits the same law of large numbers as the AMP algorithm. Second, we prove that our iteration based on the cavity equations produces asymptotically the same output as the AMP algorithm at all iterations. From these two results, we further establish that our and Bolthausen's iterations both converge to the local magnetization assuming that the overlap is locally uniformly concentrated. 

\section{Main results}

To prepare for the statements of our main results, we begin with

\begin{basicsetting}\label{basicsetting}\rm
	Let $u^n$ be an $n$-dimensional random vector independent of $A_n$ with $\|u^n\|\leq 1.$ Assume that the empirical distribution of $u^n$ converges to some {random variable} $W_0$ as $n\to\infty.$ As usual, we will simply write $u=u^n$ for notational clarity. Let $(f_k)_{k\geq 0}$ be a sequence of bounded and smooth functions on $\mathbb{R}$ with bounded derivatives of all orders. Whenever $f$ is a real-valued function on $\mathbb{R}$ and $w\in \mathbb{R}^n$, $f(w)\in \mathbb{R}^n$ is defined as a column vector $f(w)=(f(w_1),\dots, f(w_n))^T.$
\end{basicsetting}

\begin{definition}[Cavity Iteration]\label{myiteration}
		For each $n\geq 1$ and $0\leq k\leq n-1$, set
	\begin{align*}
		[n]_k=\bigl\{S\subseteq [n]\big||S|\leq n-(k+1)\bigr\}.
	\end{align*} 
	Let $n\geq 1.$ For any $S\in [n]_0$, define $w_S^{[0]}\in \mathbb{R}^{[n]\setminus S}$ by  $$\mbox{$w_{S,i}^{[0]}=u_i,\,\,\forall i\in [n]\setminus S.$}$$ For any $0\leq k\leq n-2$ and $S\in [n]_{k+1}$, define $w_S^{[k+1]}\in \mathbb{R}^{[n]\setminus S}$ iteratively by 
	\begin{align}\label{iteration}
		w_{S,i}^{[k+1]}=\frac{1}{\sqrt{n}}\sum_{j\notin S\cup \{i\}}a_{ij}f_k\bigl(w_{S\cup\{i\},j}^{[k]}\bigr),\,\,\forall i\in [n]\setminus S.
	\end{align}
	Finally, for $S=\emptyset$ and $0\leq k\leq n-1,$ we write $w^{[k]}= w_{\emptyset}^{[k]}\in \mathbb{R}^{[n]}$ and $w_{i}^{[k]}= w_{\emptyset, i}^{[k]}$ for each $i\in[n]$. 
\end{definition}

\begin{example}\rm The above definition gives that for $n\geq 2,$
	\begin{align*}
		w_i^{[1]}&=\frac{1}{\sqrt{n}}\sum_{j\neq i}a_{ij}f_0(u_j),\,\,i\in[n]
	\end{align*}
	and for $n\geq 3,$
	\begin{align*}
		w_i^{[2]}&=\frac{1}{\sqrt{n}}\sum_{j\neq i}a_{ij}f_1\bigl(w_{\{i\},j}^{[1]}\bigr)=\frac{1}{\sqrt{n}}\sum_{j\neq i}a_{ij}f_1\Bigl(\frac{1}{\sqrt{n}}\sum_{r\neq i,j}a_{jr}f_0(u_r)\Bigr),\,\,i\in[n].
	\end{align*}
	Also, for $n\geq 4,$
	\begin{align*}
		w_i^{[3]}&=\frac{1}{\sqrt{n}}\sum_{j\neq i}a_{ij}f_2\bigl(w_{\{i\},j}^{[2]}\bigr)\\
		&=\frac{1}{\sqrt{n}}\sum_{j\neq i}a_{ij}f_2\Bigl(\frac{1}{\sqrt{n}}\sum_{r\neq i,j}a_{jr}f_1\bigl(w_{\{i,j\},r}^{[1]}\bigr)\Bigr)\\
		&=\frac{1}{\sqrt{n}}\sum_{j\neq i}a_{ij}f_2\Bigl(\frac{1}{\sqrt{n}}\sum_{r\neq i,j}a_{jr}f_1\Bigl({\frac{1}{\sqrt{n}}\sum_{l\neq i,j,r}a_{rl}f_0(u_l)}\Bigr)\Bigr),\,\,i\in[n].
	\end{align*}
	We see that $w_{i}^{[3]}$ is implemented by considering all self-avoiding paths $i\to j\to r\to l$, as $j\neq i,$ $r\neq i,j,$ and $l\neq i,j,r.$ The computations of $w_{i}^{[1]}$, $w_{i}^{[2]}$ and $w_{i}^{[3]}$ essentially resemble that of $\la \sigma_n\ra_{n,\beta,h}$ by applying \eqref{talagrandlem:eq1} once, twice, and three times, respectively.
\end{example}
\begin{remark}\rm
Algorithms based on self-avoiding walks have been proposed in the literature, for example, in \cite{HS17} for community detection of sparse stochastic block model and in \cite{ding2020estimating} for the recovery problem in the generalized spiked Wigner model in the heavy-tailed setting. In these works, their iterations correspond to Definition \ref{myiteration} with the specific choice $f_{k}(x)=x$ for all $k\ge 0$. 
\end{remark}

In the iteration \eqref{iteration}, we exclude the columns and rows in $A_n$ corresponding to the set $S\cup\{i\}$ so that
$
\mbox{$(a_{ij})_{j\notin S\cup\{i\}}$ is independent of $\bigl(f_k(w_{S\cup\{i\},j}^{[k]})\bigr)_{j\notin S\cup\{i\}},$}
$
which readily implies that 	$w_{S,i}^{[k+1]}$ is a centered Gaussian random variable conditionally on 
$
\bigl(f_k(w_{S\cup\{i\},j}^{[k]})\bigr)_{j\notin S\cup\{i\}}.
$
Our first result establishes a weak law of large numbers for the random vectors $w^{[k]},w^{[k-1]},\ldots,w^{[0]}$.

\begin{theorem}\label{thm2}
	Let $k\geq 0.$ For any bounded Lipschitz function $\psi:\mathbb{R}^{k+1}\to \mathbb{R}$, we have that in probability,
	\begin{align*}
		%\label{thm1:eq2}
		\lim_{n\to\infty}\frac{1}{n}\sum_{i\in[n]}\psi\bigl(w_i^{[k]},w_i^{[k-1]},\ldots,w_i^{[0]}\bigr)&=\e\psi\bigl(W_{k},W_{k-1},\ldots,W_0\bigr),
	\end{align*}
	where $(W_k,\ldots, W_1)$ is jointly centered Gaussian independent of $W_0$ with covariance structure
	\begin{align}\label{lem2:eq3}
		\e W_{a+1}W_{b+1}=\e f_a(W_a)f_b(W_b)
	\end{align}
	for all $0\leq a,b\leq k-1.$
\end{theorem}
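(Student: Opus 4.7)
The proof proceeds by induction on $k$, with a strengthened hypothesis asserting that the claimed LLN holds not only for $w^{[j]}=w_\emptyset^{[j]}$, but also for $w_S^{[j]}$ restricted to $[n]\setminus S$ for every $S$ of bounded size. The base case $k=0$ is immediate from Basic Setting~\ref{basicsetting}, since $w_{S,i}^{[0]}=u_i$ for all $S$ and the empirical distribution of $u$ converges to $W_0$.

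For the inductive step, the crucial feature is the conditional Gaussianity afforded by the set-exclusion construction. Fix $i\in[n]$ and condition on $\mathcal{G}_i:=\sigma\bigl(w_{\{i\},r}^{[m]}:r\neq i,\ 0\leq m\leq k\bigr)$. Each $w_{\{i\},r}^{[m]}$ involves only entries $a_{pq}$ with $p,q\neq i$, so it is independent of $(a_{ir})_{r\neq i}$. Consequently, under $\mathcal{G}_i$ the vector
\[
\bigl(w_i^{[j+1]}\bigr)_{0\leq j\leq k}=\Bigl(\frac{1}{\sqrt{n}}\sum_{r\neq i}a_{ir}f_j\bigl(w_{\{i\},r}^{[j]}\bigr)\Bigr)_{0\leq j\leq k}
\]
is jointly centered Gaussian with conditional covariance
\[
\e\bigl[w_i^{[a+1]}w_i^{[b+1]}\,\big|\,\mathcal{G}_i\bigr]=\frac{1}{n}\sum_{r\neq i}f_a\bigl(w_{\{i\},r}^{[a]}\bigr)f_b\bigl(w_{\{i\},r}^{[b]}\bigr),
\]
which, by the strengthened inductive hypothesis applied to $S=\{i\}$, converges in probability to $\e f_a(W_a)f_b(W_b)$, matching \eqref{lem2:eq3}. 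Thus the conditional law of $(w_i^{[k+1]},\ldots,w_i^{[1]})$ approaches the target Gaussian law, and it is independent of $u_i=w_i^{[0]}$ because $u$ is independent of $A_n$.

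The LLN then follows by a first and second moment computation. For the mean, the tower property gives $\e[\psi(w_i^{[k+1]},\ldots,w_i^{[0]})]=\e\bigl[\e[\psi\,|\,\mathcal{G}_i,u_i]\bigr]$; inserting the Gaussian approximation and averaging over $i$ (and invoking the convergence of the empirical distribution of $u$) produces $\e\psi(W_{k+1},\ldots,W_0)$. For the variance, one performs an analogous conditioning with $\mathcal{G}_{\{i,i'\}}:=\sigma(w_{\{i,i'\},r}^{[m]}:r\notin\{i,i'\},\ 0\leq m\leq k)$, under which the pair $(w_i^{[\cdot]},w_{i'}^{[\cdot]})$ becomes jointly conditionally Gaussian with vanishing cross-covariances (since $a_{ir}$ and $a_{i'r}$ are independent and both independent of the conditioning). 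This yields $\mathrm{Var}\bigl(n^{-1}\sum_i\psi(\cdot)\bigr)\to 0$, and concentration gives the desired convergence in probability.

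The main obstacle I anticipate is exactly the strengthened inductive hypothesis: the Gaussian conditioning argument requires the LLN for $w_S^{[j]}$ with $|S|$ bounded (and uniformly in $S$), and the limiting covariances must coincide with those for the $S=\emptyset$ case. Intuitively this is clear because $w_{S,i}^{[k]}$ and $w_{S',i}^{[k]}$ differ only by $O(|S\triangle S'|/\sqrt{n})$ worth of terms while preserving the same Gaussian conditioning structure, but carefully propagating this uniformity across levels and maintaining the right error estimates in the inductive bookkeeping is where the real technical work lies.
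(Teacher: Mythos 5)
Your proposal follows essentially the same route as the paper: induction on $k$, conditional Gaussianity coming from the set-exclusion construction, and a first/second moment computation, with the $O(n^{-1/2})$ stability of $w_{S,i}^{[k]}$ under enlarging $S$ (the paper's Proposition \ref{lem3}, proved before this theorem) supplying exactly the uniformity you flag at the end, so no strengthened induction hypothesis over $S$ is actually needed. One caveat: as written, the pair $\bigl(w_i^{[\cdot]},w_{i'}^{[\cdot]}\bigr)$ is not conditionally Gaussian given $\mathcal{G}_{\{i,i'\}}$, since the coefficients $f_j\bigl(w_{\{i\},r}^{[j]}\bigr)$ involve row-$i'$ entries and the term $r=i'$ involves the shared $a_{ii'}$; one must first replace $w_i^{[\cdot]},w_{i'}^{[\cdot]}$ by $w_{\{i'\},i}^{[\cdot]},w_{\{i\},i'}^{[\cdot]}$ via that stability estimate—precisely the paper's move—after which the conditional independence and Gaussianity hold and your variance argument goes through.
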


While the cavity iteration adapts self-avoiding paths, the AMP iteration is a mean-field method in the sense that all sites $i\in[n]$ are used without  preference.

\begin{definition}[AMP Iteration] Recall the $n$-dimensional random vector $u$ and the real-valued functions $(f_k)_{k\geq 0}$ considered in Basic Setting \ref{basicsetting}. Set $u^{[0]}=u$ and 
	\begin{align*}
		u_i^{[1]}&=\frac{1}{\sqrt{n}}\sum_{j=1}^na_{ij}f_0(u_j^{[0]}),\,\,\forall i\in [n].
	\end{align*}
	For $k\geq 1,$ the AMP iteration is defined as
	\begin{align}\label{amp}
		u_i^{[k+1]}&=\frac{1}{\sqrt{n}}\sum_{j=1}^na_{ij}f_k(u_j^{[k]})-\Bigl(\frac{1}{n}\sum_{j=1}^nf_k'(u_j^{[k]})\Bigr)f_{k-1}(u_i^{[k-1]}),\,\,\forall i\in [n].
	\end{align}
\end{definition}

As we have mentioned before, Bolthausen's iteration can be viewed as a special case of  the AMP algorithms. Specifically, it corresponds to the AMP iteration with $m^{[k]}=f_k(u^{[k]})$ and the following choice of functions, \begin{align}\label{add:eq21}
	\mbox{$u=\mathbf{0}$, $f_0(x)=0,$ $f_1(x)=\sqrt{q_{\beta,h}},$ and $f_k(x)=\tanh(\beta x+h)$ for all $k\geq 2$}.
\end{align}
Our next result shows that the iterative scheme  in Definition \ref{myiteration} is asymptotically the same as the AMP iteration.

\begin{theorem}\label{thm1}
	For any $k\geq 0$, there exists a constant $C_k>0$ such that for any $n\geq k+1$,
	\begin{align}\label{thm1:eq1}
	\e\bigl\|u^{[k]}-w^{[k]}\bigr\|^2\leq \frac{C_k}{n}.
	\end{align}
\end{theorem}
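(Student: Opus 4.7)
The plan is induction on $k$. The base case $k=0$ is immediate since $u^{[0]}=w^{[0]}=u$. For the inductive step, using $a_{ii}=0$ to extend the cavity sum to all $j$ and inserting $\pm f_k(w_j^{[k]})$ in the inner bracket, I decompose the per-coordinate difference as
\begin{align*}
u_i^{[k+1]}-w_i^{[k+1]}
&= \underbrace{\frac{1}{\sqrt{n}}\sum_{j=1}^n a_{ij}\bigl[f_k(u_j^{[k]})-f_k(w_j^{[k]})\bigr]}_{\mathrm{(I)}_i}
+ \underbrace{\frac{1}{\sqrt{n}}\sum_{j=1}^n a_{ij}\bigl[f_k(w_j^{[k]})-f_k(w_{\{i\},j}^{[k]})\bigr]}_{\mathrm{(II)}_i} \\
&\quad - \underbrace{\Bigl(\frac{1}{n}\sum_{j=1}^n f_k'(u_j^{[k]})\Bigr)f_{k-1}(u_i^{[k-1]})}_{\mathrm{(III)}_i},
\end{align*}
with the inductive goal being $\e\sum_i(u_i^{[k+1]}-w_i^{[k+1]})^2 = O(1)$.

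The term $\mathrm{(I)}$ is controlled directly by the induction hypothesis: it is a bilinear form in the Gaussian row $(a_{ij})_j$ and the inductively small vector $(u_j^{[k]}-w_j^{[k]})_j$. Squaring and taking expectations, the diagonal-in-$j$ piece is of the right order thanks to $\|f_k'\|_\infty<\infty$ and $\e\sum_j(u_j^{[k]}-w_j^{[k]})^2\le C_k$, while the off-diagonal cross-terms have vanishing mean upon conditioning on the rows of $A_n$ outside $\{i\}$.

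The core of the argument is that the leading part of $\mathrm{(II)}$ cancels against the Onsager correction $\mathrm{(III)}$. This requires an auxiliary \emph{cavity-closeness} bound, proved by a parallel induction on $k$:
\begin{equation*}
\sup_{i\neq j}\,\e\,|w_j^{[k]}-w_{\{i\},j}^{[k]}|^2 \le C_k'/n.
\end{equation*}
Its proof peels off the $l=i$ summand in $w_j^{[k]}=n^{-1/2}\sum_{l\neq j}a_{jl}f_{k-1}(w_{\{j\},l}^{[k-1]})$, producing the explicit $n^{-1/2}$-size term $\tfrac{a_{ij}}{\sqrt n}f_{k-1}(w_{\{j\},i}^{[k-1]})$; the residual sum, being a conditionally Gaussian linear combination of $(k-1)$-level cavity differences, has $L^2$-norm squared equal to $n^{-1}\sum_l\e|f_{k-1}(w_{\{j\},l}^{[k-1]})-f_{k-1}(w_{\{i,j\},l}^{[k-1]})|^2$, which inherits the bound from the previous level. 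Plugging this expansion into $\mathrm{(II)}_i$ through a Taylor expansion of $f_k$, the leading piece becomes $\frac{1}{n}\sum_j a_{ij}^2\,f_k'(w_j^{[k]})f_{k-1}(w_{\{j\},i}^{[k-1]})$. By the law of large numbers $\frac{1}{n}\sum_j a_{ij}^2\to 1$ and by swapping $w_j^{[k]}\leftrightarrow u_j^{[k]}$ and $w_{\{j\},i}^{[k-1]}\leftrightarrow u_i^{[k-1]}$ using the main induction hypothesis together with the cavity-closeness bound, this matches $\mathrm{(III)}_i$ up to an $L^2$-error of $O(n^{-1/2})$ per coordinate; the second-order Taylor remainder is controlled via $\|f_k''\|_\infty<\infty$ and routine $L^4$-moment estimates for $w_j^{[k]}-w_{\{i\},j}^{[k]}$.

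The main obstacle is the careful bookkeeping of the dependencies between the Gaussian entries $a_{ij}$ and the iterates, which themselves depend on those same $a_{ij}$. This is addressed by the standard cavity device of conditioning on the $\sigma$-algebra generated by rows of $A_n$ outside the relevant small index set ($\{i\}$ or $\{i,j\}$): the appropriate cavity iterates $w_{\{i\},\cdot}^{[k]}$ or $w_{\{i,j\},\cdot}^{[k-1]}$ become measurable while the remaining $a_{ij}$ are preserved as standard Gaussians, so that cross-terms have vanishing conditional mean and the $L^2$ computations reduce to clean variance estimates driven by the two parallel inductions.
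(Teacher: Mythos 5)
There is a genuine gap. After you Taylor-expand $f_k(w_j^{[k]})-f_k(w_{\{i\},j}^{[k]})$ inside $\mathrm{(II)}_i$, the cavity difference $w_j^{[k]}-w_{\{i\},j}^{[k]}$ splits into the explicit term $\tfrac{a_{ij}}{\sqrt n}f_{k-1}(w_{\{j\},i}^{[k-1]})$, which indeed produces the $\frac{1}{n}\sum_j a_{ij}^2 f_k'(\cdot)f_{k-1}(\cdot)$ piece that cancels $\mathrm{(III)}_i$, \emph{and} a residual $R_j := \tfrac{1}{\sqrt n}\sum_{l\neq i,j}a_{jl}\bigl[f_{k-1}(w_{\{j\},l}^{[k-1]})-f_{k-1}(w_{\{i,j\},l}^{[k-1]})\bigr]$. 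You account for the cancellation and for the second-order Taylor remainder, but not for the first-order contribution $\tfrac{1}{\sqrt n}\sum_{j\neq i}a_{ij}f_k'(\cdot)R_j$. This term is the crux of the proof and it is \emph{not} routine: since $\e|R_j|^2 = O(1/n)$, the diagonal ($j=j'$) contribution to its second moment is indeed $O(1/n)$, but the $\sim n^2$ off-diagonal cross-terms $\e[a_{ij}a_{ij'}f_k'(\cdot)f_k'(\cdot)R_jR_{j'}]$ cannot be killed by conditioning — $a_{ij}$ does appear in $R_{j'}$ through the iterate $w_{\{j'\},j}^{[k-1]}$ inside the $l=j$ summand — and a Cauchy–Schwarz bound only gives each cross-term $O(1/n)$, so the naive total is $O(1)$ per coordinate, not $O(1/n)$. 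Showing the actual cancellation requires Gaussian integration by parts together with a careful combinatorial count of the self-avoiding paths that can share edges (the paper's Lemmas 8 and 9 plus the moment controls of Propositions 6 and 7); this path-counting argument occupies the entirety of the paper's ``Off-diagonal case'' subsection and has no substitute in your sketch.

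A secondary, fixable issue: in controlling $\mathrm{(I)}_i$ you claim the cross-terms vanish upon conditioning on the rows of $A_n$ outside $\{i\}$, but $u_j^{[k]}$ and $w_j^{[k]}$ for $j\neq i$ each contain entries from row $i$ (namely $a_{ji}=a_{ij}$), so the conditional expectation does not factor. The paper sidesteps this by bounding the whole sum $\tfrac{1}{\sqrt n}A_n\bigl(f_k(u^{[k]})-f_k(w^{[k]})\bigr)$ via the operator norm $\|A_n\|/\sqrt n = O(1)$ together with the Lipschitz property of $f_k$ and the induction hypothesis on $\|u^{[k]}-w^{[k]}\|$; you should do the same. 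Your auxiliary cavity-closeness lemma and its proof by peeling off the $l=i$ summand are correct and match the paper's Proposition 4.
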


\begin{remark}
	\rm It was shown in \cite{DJM13} that the AMP iteration enjoys the same weak law of large numbers as Theorem~\ref{thm2}, where a Gaussian conditioning argument as in \cite{Bolthausen14} was adapted. Here, Theorems \ref{thm2} and \ref{thm1} together provide an independent proof for the convergence of the AMP iteration without using Gaussian conditioning.
\end{remark}

Our last result shows that Bolthausen's scheme converges to the local magnetization as long as the overlap is locally uniformly concentrated.

\begin{theorem}\label{thm0}
	Assume that $\beta,h>0$ satisfy that for some $\delta>0,$
	\begin{align}\label{shc}
		\lim_{n\to\infty}\sup_{\beta-\delta\leq \beta'\leq \beta}\e\bigl\la\bigl| R(\sigma^1,\sigma^2)-q_{\beta',h}\bigr|^2\bigr\ra_{n,\beta',h}=0.
	\end{align}
	We have that
	\begin{align*}
		\lim_{k\to\infty}\lim_{n\to\infty}\e\bigl\|\la \sigma\ra-m^{[k]}\bigr\|^2=0.
	\end{align*}
	In particular, here the inner limit exists for any $k\ge 0$.
\end{theorem}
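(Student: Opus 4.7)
My plan is to combine Theorems \ref{thm1} and \ref{thm2} with the cavity equation \eqref{cavity} and to expand $\e\|\la\sigma\ra-m^{[k]}\|^2$ into three pieces whose limits can be identified separately. Since $f_k=\tanh(\beta\cdot+h)$ is $\beta$-Lipschitz for $k\geq 2$, Theorem \ref{thm1} yields $\e\|m^{[k]}-f_k(w^{[k]})\|^2\to 0$ as $n\to\infty$ for each fixed $k$, so it suffices to show
\begin{align*}
\lim_{k\to\infty}\lim_{n\to\infty}\e\|\la\sigma\ra-f_k(w^{[k]})\|^2=0.
\end{align*}
Expanding the squared norm, this becomes $\lim_{k,n}\bigl[\e\|\la\sigma\ra\|^2+\e\|f_k(w^{[k]})\|^2-\frac{2}{n}\e\sum_{i=1}^n\la\sigma_i\ra f_k(w_i^{[k]})\bigr]$. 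The first summand equals $\e\la R(\sigma^1,\sigma^2)\ra$, which tends to $q$ by \eqref{shc}. The second, by Theorem \ref{thm2} with $\psi(x)=\tanh^2(\beta x+h)$, tends to $\e\tanh^2(\beta W_k+h)$; inducting on $k$ via \eqref{lem2:eq3} together with the fixed-point equation for $q$ shows $W_k\sim N(0,q)$ for every $k\geq 2$, so this summand also tends to $q$.

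The heart of the argument is the cross term. Under \eqref{shc} the cavity formula \eqref{cavity} gives
\begin{align*}
\la\sigma_i\ra\approx\tanh\bigl(\beta V_i+h\bigr),\qquad V_i:=\frac{1}{\sqrt{n}}\sum_{j\neq i}a_{ij}\la\sigma_j\ra_{n-1,\beta',h}^{\{i\}},
\end{align*}
with $L^2$-error tending to $0$. Both $V_i$ and $w^{[k]}_i=n^{-1/2}\sum_{j\neq i}a_{ij}f_{k-1}(w^{[k-1]}_{\{i\},j})$ are linear combinations of the Gaussians $(a_{ij})_{j\neq i}$ whose coefficient vectors are independent of the $i$-th row of $A_n$. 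Conditioning on those vectors, $(V_i,w^{[k]}_i)$ is centered bivariate Gaussian with variances tending to $q$ (the first by \eqref{shc} at the $(n-1)$-subsystem with $\beta'=\beta\sqrt{(n-1)/n}$, the second by Theorem \ref{thm2}) and with covariance $n^{-1}\sum_{j\neq i}\la\sigma_j\ra_{n-1,\beta',h}^{\{i\}}f_{k-1}(w^{[k-1]}_{\{i\},j})$. By a site-symmetry argument this covariance is asymptotically $\rho_{k-1}:=\lim_n n^{-1}\e\sum_j\la\sigma_j\ra f_{k-1}(w^{[k-1]}_j)$, so, defining $\rho_k$ analogously, I obtain the recursion
\begin{align*}
\rho_k=\e\tanh(\beta V_1+h)\tanh(\beta V_2+h),
\end{align*}
where $(V_1,V_2)$ is centered bivariate Gaussian with variances $q$ and covariance $\rho_{k-1}$. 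The value $\rho=q$ is a fixed point (then $V_1=V_2$ a.s., so the right-hand side equals $\e\tanh^2(\beta z\sqrt{q}+h)=q$), the right-hand side is nondecreasing in $\rho_{k-1}$ by Gaussian interpolation, and it is bounded above by $q$ by Cauchy-Schwarz. Under \eqref{shc} the only fixed point of this recursion in $[0,q]$ is $q$ itself, so $\rho_k\to q$; combined with the first two summands this yields the conclusion.

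The principal obstacle is rigorously establishing the joint Gaussian LLN for $(V_i,w^{[k]}_i)$: Theorem \ref{thm2} covers the cavity iteration alone, so the required extension must treat the subsystem magnetizations $(\la\sigma_j\ra_{n-1,\beta',h}^{\{i\}})_{j\neq i}$ as bounded random coefficients of the Gaussian sum $V_i$, and must accommodate the mild temperature drift $\beta\mapsto\beta'$ between the full system and its subsystems. The \emph{local} uniform concentration in \eqref{shc} is precisely what underwrites this: iterating the cavity method down through finitely many levels, it forces the subsystem magnetizations to retain squared empirical average $q$ uniformly for inverse temperatures in a $\delta$-neighborhood of $\beta$, and this stability drives both the variance computation and the symmetry identification of the conditional covariance with $\rho_{k-1}$.
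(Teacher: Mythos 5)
Your route is essentially the paper's own: reduce to $f_k(w^{[k]})$ via Theorem \ref{thm1}, expand the square into the two diagonal terms and a cross term, send the diagonal terms to $q$, and show the cross term obeys the recursion $\rho_k=\e\tanh(\beta V_1+h)\tanh(\beta V_2+h)$ with $(V_1,V_2)$ centered Gaussian of variance $q$ and covariance $\rho_{k-1}$ --- this is exactly the map $\Delta$ of \eqref{delta} and the content of Proposition \ref{lem-1}. The genuine gap is in your endgame: you assert that ``under \eqref{shc} the only fixed point of this recursion in $[0,q]$ is $q$ itself,'' but you give no mechanism by which \eqref{shc} delivers this, and it is not a formal consequence of monotonicity of the map plus the Cauchy--Schwarz bound $\rho\leq q$ (those only give convergence of $\rho_k$ to \emph{some} fixed point). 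The paper has to work for uniqueness: \eqref{shc} yields the replica-symmetric free energy \eqref{freeenergy} via Talagrand, Toninelli's theorem then forces the AT condition \eqref{ATline}, and Gaussian integration by parts plus Cauchy--Schwarz convert \eqref{ATline} into $\Delta'(t)<1$ on $(-q,q)$; only then does the mean value theorem exclude fixed points below $q$ and give $\Delta^{\circ k}\to q$. Without this AT input the iteration could a priori stall at a smaller fixed point (multiple fixed points are exactly the AT instability), so the step as written is unjustified, and it is in fact the heart of the theorem rather than a routine consequence of \eqref{shc}.

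Two secondary shortfalls: first, you never pin down the initial value of the recursion; the paper needs Lemma \ref{add:lem2} (convergence of the averaged magnetization, itself a second-moment cavity computation) to identify it as $Q(\beta,h)=\sqrt{q}\,\e\TH(\beta z\sqrt{q})$, and although with the strict inequality $\Delta(t)>t$ on $[-q,q)$ the precise starting value matters less, you still need it to be located in $[-q,q]$, which again rests on concentration. Second, identifying the conditional covariance of $(V_i,w_i^{[k]})$ with the deterministic $\rho_{k-1}$ ``by site symmetry'' requires a genuine concentration argument --- in the paper this is Lemma \ref{add:lem3} combined with Propositions \ref{add:lem1} and \ref{lem3}, where one removes two rows and columns to decouple and controls the error of passing between the $S\cup\{i\}$ and $S\cup\{i,i'\}$ systems, and the locally uniform form of \eqref{shc} enters precisely there. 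You flag this honestly as the principal obstacle, but it remains undone; still, it is the fixed-point uniqueness step above that would make the argument fail as stated.
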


The complexity of Bolthausen’s iteration is $O(n^2)$ and consequently, Theorem \ref{thm0} guarantees a polynomial-time  algorithm to approximate the local magnetization. Due to Theorem \ref{thm1}, our cavity iteration corresponding to \eqref{add:eq21} also converges to the local magnetization under the same assumption as Theorem \ref{thm0}.
In a related direction, we refer the readers to check \cite{Montanari18} for a polynomial-time algorithm to produce near-ground states in the SK
model via the AMP algorithm under the ``full replica symmetry breaking'' assumption. See more related results in \cite{EMS20,GJ2019,GJW2004,Subag2018}.

\begin{remark}\rm
The local magnetization is the barycenter of the Gibbs measure; when the high-temperature condition \eqref{add:overlap} is satisfied, for any $k\geq 2$ and i.i.d. samples $\sigma^1,\ldots,\sigma^k$ from the Gibbs measure, the vectors $\sigma^1-\la\sigma\ra,\ldots,\sigma^k-\la \sigma\ra$ are mutually orthogonal to each other and to the local magnetization. From these properties, it is tempting to believe that one can study the free energy of the SK model via large deviation techniques, by tilting the Gibbs measure according to $\la\sigma\ra$. This strategy was implemented  in \cite{Bolthausen19},  where the Gibbs measure was tilted with respect to $m^{[k]}$ at very high temperature. With the result of Theorem \ref{thm0}, it is of interest to see if one can establish the limiting free energy \eqref{freeenergy} of the SK model via large deviation arguments with respect to $\la \sigma\ra$.
\end{remark}

We close this section with a sketch of our proofs. Theorem \ref{thm2}  follows essentially from the way we define our scheme as its construction via self-avoiding paths already makes it clear on how we should manage the correlation between different layers. The proof of Theorem \ref{thm1} is the most delicate in this work; we have to remove all components corresponding to paths with loops in the AMP iteration $u^{[k+1]}$. While the basic idea is to rewrite $u_i^{[k+1]}$ by applying Taylor's theorem to the function $f_k$, the main challenge here is to carefully track the total error, again utilizing the self-avoiding feature of the paths along the iteration, see Section~\ref{sec:example} for an example and more detailed elaboration.  Finally, the proof of Theorem \ref{thm0} is based on the validities of Theorems \ref{thm2} and \ref{thm1}. We first argue that $m^{[k]}$ in Bolthausen's iteration is close to our scheme along with an explicit quantification of their distance, when  the high-temperature condition \eqref{shc} is in force. From this, Theorem~\ref{thm0} then follows immediately by the virtue of Theorem \ref{thm1}.
For the rest of the paper, Section~\ref{sec3} presents the proof of Theorem~\ref{thm0} assuming that Theorems \ref{thm2} and \ref{thm1} hold. Section \ref{sec4.1} establishes the weak law of large numbers of our scheme in Theorem~\ref{thm2}. Section \ref{sec4} prepares a number of moment controls for the partial derivatives of our scheme, which are the key ingredients in the proof of Theorem \ref{thm1} presented in Section \ref{sec5}.

\medskip
\medskip

{\noindent\bf Acknowledgements.} Both authors thank Antonio Auffinger for some useful discussions. 
In addition, they are grateful for the reviewer's careful reading and valuable comments regarding the presentation of this work.

\section{Proof of Theorem \ref{thm0}}\label{sec3}

In this section, we establish the proof of Theorem \ref{thm0} assuming the validity of Theorems \ref{thm2} and \ref{thm1}. First of all, we recall the statement of the cavity equations.

\begin{lemma}
	[Chapter 5 in \cite{MPV87} and Lemma 1.7.4 in \cite{Tal111}] \label{talagrandlem} If $\beta,h>0$ satisfy \eqref{shc}, then there exists a constant $\delta>0$ such that
	\begin{align}\label{talagrandlem:eq1}
		&\lim_{n\to\infty}\sup_{\beta-\delta\leq \beta'\leq \beta}\e\Bigl|\la \sigma_n\ra_{n,\beta',h}-\tanh\Bigl(\frac{\beta}{\sqrt{n}}\sum_{j\neq n}a_{nj}\la \sigma_j\ra_{n-1,\beta_n',h}+h\Bigr)\Bigr|^2=0
	\end{align}
	and
	\begin{align}\label{talagrandlem:eq2}
		&\lim_{n\to\infty}\sup_{\beta-\delta\leq \beta'\leq \beta}\e\bigl|\la \sigma_1\ra_{n,\beta',h}-\la \sigma_1\ra_{n-1,\beta_n',h}\bigr|^2=0,
	\end{align}
	where $\beta_n':=\beta' \sqrt{(n-1)/n}.$
\end{lemma}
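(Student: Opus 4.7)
My plan is to use the classical cavity decomposition: decouple spin $n$ from the other spins and then use the overlap concentration hypothesis \eqref{shc} to justify replacing each remaining spin in the cavity field by its local magnetization in the $(n-1)$-spin system. The key bookkeeping identity is $\beta'/\sqrt{n}=\beta_n'/\sqrt{n-1}$, so if one separates spin $n$ out of $H_{n,\beta',h}$, the residual $(n-1)$-spin Hamiltonian is exactly $H_{n-1,\beta_n',h}$ plus the linear coupling $\sigma_n Y(\rho)$, where $\rho=(\sigma_1,\ldots,\sigma_{n-1})$ and
$$Y(\rho):=\frac{\beta'}{\sqrt{n}}\sum_{j<n}a_{nj}\rho_j+h.$$
Summing $\sigma_n=\pm 1$ first yields $\la\sigma_n\ra_{n,\beta',h}=\la\sinh Y\ra_{n-1,\beta_n',h}/\la\cosh Y\ra_{n-1,\beta_n',h}$. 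Setting $\bar Y:=\frac{\beta'}{\sqrt{n}}\sum_{j<n}a_{nj}\la\sigma_j\ra_{n-1,\beta_n',h}+h$ and $X:=Y-\bar Y$, the trigonometric identity $\sinh(a+b)-\cosh(a+b)\tanh a=\sinh(b)/\cosh(a)$ gives
$$\la\sigma_n\ra_{n,\beta',h}-\tanh\bar Y=\frac{\la\sinh X\ra_{n-1,\beta_n',h}}{\cosh\bar Y\cdot\la\cosh Y\ra_{n-1,\beta_n',h}}.$$
Since both factors in the denominator are $\geq 1$, proving \eqref{talagrandlem:eq1} reduces to $\e|\la\sinh X\ra_{n-1,\beta_n',h}|^2\to 0$ uniformly on $[\beta-\delta,\beta]$ (and I will absorb the harmless change of $\beta$ to $\beta'$ in the statement using $|\beta-\beta'|\leq\delta$ and the boundedness of tanh's derivative).

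To estimate $\e|\la\sinh X\ra|^2$, I would write it as a two-replica average and then integrate the independent Gaussian row $(a_{nj})_{j<n}$ first, using $\e e^{tG}=e^{t^2\sigma^2/2}$ for centered Gaussian $G$. With i.i.d. replicas $\rho^1,\rho^2$ from the $(n-1)$-spin Gibbs measure,
$$\e|\la\sinh X\ra|^2=\frac{1}{2}\,\e\bigl\la e^{V_+/2}-e^{V_-/2}\bigr\ra,$$
where $V_\pm$ is the $(a_{nj})$-variance of $X(\rho^1)\pm X(\rho^2)$. A direct expansion gives $V_-=2(\beta')^2(1-R(\rho^1,\rho^2))+O(1/n)$ and $V_+-V_-=4(\beta')^2\,R(\rho^1-m,\rho^2-m)$ with $m_j=\la\sigma_j\ra_{n-1,\beta_n',h}$. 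Since $V_-\leq 4(\beta')^2$ is bounded, a mean-value estimate reduces matters to showing $\e\la|R(\rho^1-m,\rho^2-m)|\ra\to 0$ uniformly. Expanding, $R(\rho^1-m,\rho^2-m)=R(\rho^1,\rho^2)-R(\rho^1,m)-R(\rho^2,m)+\|m\|_{n-1}^2$, each term of which is controlled by \eqref{shc} applied at $\beta_n'$ (which lies in $[\beta-\delta,\beta]$ for $n$ large) together with the identity $\la R(\rho^1,m)\ra=\|m\|_{n-1}^2$.

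For \eqref{talagrandlem:eq2}, I would apply the same cavity decomposition with $\sigma_1$ as the test observable, obtaining
$$\la\sigma_1\ra_{n,\beta',h}-\la\sigma_1\ra_{n-1,\beta_n',h}=\frac{\la(\sigma_1-\la\sigma_1\ra_{n-1,\beta_n',h})(\cosh Y-\cosh\bar Y)\ra_{n-1,\beta_n',h}}{\la\cosh Y\ra_{n-1,\beta_n',h}},$$
since $\cosh\bar Y$ is $\rho$-independent. Controlling $\cosh Y-\cosh\bar Y$ through $X$ and running the same replica/Gaussian-integration argument as above reduces matters to exactly the same overlap concentration input. The exchangeability of the disorder makes the choice of distinguished spin irrelevant.

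The main obstacle is the one substantive analytic ingredient beyond one-overlap concentration, namely showing that $\e\la(R(\rho^1,m)-\|m\|_{n-1}^2)^2\ra\to 0$ uniformly in $\beta'\in[\beta-\delta,\beta]$. This does not follow by centering alone and requires a second-moment computation tying three-replica correlators to two-replica overlap fluctuations; I would handle it through the standard Gaussian-integration-by-parts identity linking $\la R(\rho^1,m)^2\ra$ to $\la R(\rho^1,\rho^2)R(\rho^1,\rho^3)\ra$ and then invoke \eqref{shc} (possibly upgraded to three-replica concentration by Cauchy-Schwarz). Tracking uniformity in $\beta'$ is automatic from the uniformity built into the hypothesis.
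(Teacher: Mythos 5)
The paper itself offers no proof of this lemma: it is quoted from Talagrand's Lemma~1.7.4 (and MPV), with the remark that Talagrand's cavity argument, which for $\beta<1/2$ rests on $\e\la (R-q)^2\ra\le K/n$, carries over once that bound is replaced by the locally uniform hypothesis \eqref{shc}. Your proposal reconstructs such a cavity proof by a replica second-moment computation, and the part concerning \eqref{talagrandlem:eq1} is essentially correct: the exact identity for $\la\sigma_n\ra_{n,\beta',h}-\tanh\bar Y$, the lower bound $1$ on both denominator factors, the integration of the independent cavity row giving $\tfrac12\e\bigl\la e^{V_+/2}-e^{V_-/2}\bigr\ra$, the mean-value reduction to $\e\la|R(\rho^1-m,\rho^2-m)|\ra\to0$, and the remaining ingredient $\e\la(R(\rho^1,m)-\|m\|^2)^2\ra\to0$, which indeed follows from the replica identity $\la R(\rho^1,m)^2\ra=\la R(\rho^1,\rho^2)R(\rho^1,\rho^3)\ra$ (this is pure algebra from $m_j=\la\sigma_j\ra_{n-1,\beta_n',h}$; no Gaussian integration by parts is involved) together with \eqref{shc} and Cauchy--Schwarz. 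Two caveats: since $\beta_n'=\beta'\sqrt{(n-1)/n}$ can fall slightly below $\beta-\delta$, the $\delta$ you output must be taken strictly smaller than the one in \eqref{shc}; and the replacement of $\beta$ by $\beta'$ in the $\tanh$ is \emph{not} harmless uniformly over $[\beta-\delta,\beta]$ --- the two arguments differ by $(\beta-\beta')n^{-1/2}\sum_{j\neq n}a_{nj}m_j$, whose $L^2$ norm is of order $(\beta-\beta')\sqrt q$ rather than $o(1)$ --- so what your argument actually yields is the version with $\beta'$ (equivalently $\beta_n'$) inside the $\tanh$, which is the version needed for Lemma~\ref{talagrand}, where $|\beta-\beta'|=O(1/n)$.

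The genuine gap is in \eqref{talagrandlem:eq2}. Your identity for $\la\sigma_1\ra_{n,\beta',h}-\la\sigma_1\ra_{n-1,\beta_n',h}$ is correct, but the announced continuation --- ``controlling $\cosh Y-\cosh\bar Y$ through $X$'' and running ``the same'' argument --- fails if executed as in part one, because $X$ is not small: for fixed $\rho$ it is a centered Gaussian in the cavity row with variance $(\beta')^2\tfrac{n-1}{n}\bigl(1-2R(\rho,m)+\|m\|^2\bigr)$, which concentrates near $\beta^2(1-q)>0$, so $\cosh Y-\cosh\bar Y$ is of order one and no estimate of the type $\e\la|\cosh Y-\cosh\bar Y|\ra\to0$ can hold. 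In \eqref{talagrandlem:eq1} the smallness came from the conditionally integrated quantity $\tfrac12(e^{V_+/2}-e^{V_-/2})$ itself being small; here the analogue, e.g.\ $\e_{a}\bigl[\cosh Y(\rho^1)\cosh Y(\rho^2)\bigr]=\tfrac12\bigl[\cosh(2h)\,e^{(\beta')^2\frac{n-1}{n}(1+R(\rho^1,\rho^2))}+e^{(\beta')^2\frac{n-1}{n}(1-R(\rho^1,\rho^2))}\bigr]$, converges to a strictly positive constant. The missing idea is a centering-against-near-constancy step: with $\epsilon:=\sigma_1-\la\sigma_1\ra_{n-1,\beta_n',h}$ write the numerator simply as $\la\epsilon\cosh Y\ra_{n-1,\beta_n',h}$ (subtracting $\cosh\bar Y$ is then superfluous), bound the denominator below by $1$, square, and integrate the cavity row; the resulting conditional expectation depends on the replicas only through overlaps that concentrate by \eqref{shc}, its constant part is annihilated because $\e\bigl[\la\epsilon\ra_{n-1,\beta_n',h}^2\bigr]=0$, and the fluctuation part is $O(\e\la|R(\rho^1,\rho^2)-q|\ra)+O(1/n)$. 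With that step supplied your route closes and uses the same overlap-concentration input; without it, the sketch for \eqref{talagrandlem:eq2} does not go through.
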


\begin{remark}
	\rm  The original result in Talagrand's book \cite[Lemma 1.7.4]{Tal111} states only for  $\beta<1/2$ and $\delta=0$ instead of the locally uniformly limits. The condition $\beta<1/2$ ensures that  there exist some $K>0$ and $\delta>0$ such that $$\e\la \bigl|R(\sigma^1,\sigma^2)-q_{\beta,h}\bigr|^2\ra_{n,\beta,h}\leq \frac{K}{n}$$ for all $n\geq 1.$ Using this bound, his results stated that the expectations on the left-hand sides of  \eqref{talagrandlem:eq1} and \eqref{talagrandlem:eq2} are bounded above by $C/n$ for some universal constant $C>0$. If we now assume \eqref{shc} instead, the proof in  \cite[Lemma 1.7.4]{Tal111} still carries through for Lemma {\ref{talagrandlem}} without essential changes.
\end{remark}

We continue to restate Talagrand's lemma in a slightly more general formulation. Fix $\beta,h>0.$
Let $n\geq 2.$ For $S\subsetneq[n]$, consider the SK model on the sites $[n]\setminus S$ defined by
\begin{align*}
H_{S,n}(\sigma)&=-\frac{\beta}{\sqrt{n}}\sum_{i,j\in [n]\setminus S:i<j}a_{ij}\sigma_i\sigma_j-h\sum_{i\in [n]\setminus S}\sigma_i
\end{align*}
for all $\sigma\in \{\pm 1\}^{[n]\setminus S}.$ Note that when $S=\emptyset$, $H_{S,n}=H_n$.
Denote the Gibbs average associated to this Hamiltonian as $\la \cdot\ra_{n,\beta,h,S}$. Throughout the rest of the paper, for notational convenience, we denote this expectation simply by $\la \cdot\ra_S.$ We also set
$
\TH(x)=\tanh(x+h)
$ and denote $q=q_{\beta,h}.$ By the symmetry among sites, we can rewrite Lemma \ref{talagrandlem} as

\begin{lemma}\label{talagrand}
	Assume that $\beta,h>0$ satisfy \eqref{shc}. For any $k\geq 0$, we have that
	\begin{align}\label{talagrand:eq1}
	\lim_{n\to\infty}\sup_{(i,S):0\leq |S|\leq k,i\notin S}\e\Bigl|\la \sigma_i\ra_S-\TH\Bigl(\frac{\beta}{\sqrt{n}}\sum_{j\notin S\cup\{i\}}a_{ij}\la \sigma_j\ra_{S\cup\{i\}}\Bigr)\Bigr|^2=0
	\end{align}
	and
	\begin{align}\label{Talagrand:eq2}
	\lim_{n\to\infty}\sup_{(i,i',S):0\leq |S|\leq k,i,i'\notin S,i\neq i'}\e\bigl|\la \sigma_i\ra_S-\la \sigma_i\ra_{S\cup\{i'\}}\bigr|^2=0.
	\end{align}
\end{lemma}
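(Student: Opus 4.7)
The plan is to deduce Lemma \ref{talagrand} from Lemma \ref{talagrandlem} by combining two ingredients: exchangeability of the disorder and a parameter rescaling that identifies the restricted Gibbs measure $\la \cdot\ra_S$ with a standard SK system on fewer spins. The key first observation is that the joint law of $(a_{ij})_{i,j\in[n]}$ is invariant under permutations of $[n]$, so the expectations in \eqref{talagrand:eq1} and \eqref{Talagrand:eq2} depend only on $|S|$ (for the first) or on $|S|$ together with the fact that $i\neq i'$ (for the second). Consequently, the supremum over $(i,S)$ with $|S|\le k$ reduces to a maximum over the finite set $m\in\{0,1,\ldots,k\}$, and it suffices to prove that for each fixed $m$ the corresponding expectation vanishes as $n\to\infty$.

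For such a fixed $m$, I would then observe that $H_{S,n}$ on the $\tilde n:=n-m$ remaining spins is exactly the SK Hamiltonian with external field $h$ and coupling normalization $\beta/\sqrt{n}=\tilde\beta_n/\sqrt{\tilde n}$, where $\tilde\beta_n:=\beta\sqrt{\tilde n/n}$. Hence $\la \cdot\ra_S$ coincides in law with the standard SK Gibbs measure on $\tilde n$ spins at inverse temperature $\tilde\beta_n$ and external field $h$. Since $(\beta,h)$ satisfies the uniform overlap concentration \eqref{shc}, Lemma \ref{talagrandlem} applies and furnishes some $\delta>0$ such that the cavity and stability estimates hold uniformly over $\beta'\in[\beta-\delta,\beta]$. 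Because $m\le k$ is fixed and $\tilde\beta_n\to\beta$, we have $\tilde\beta_n\in[\beta-\delta,\beta]$ for all $n$ sufficiently large.

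I would then apply \eqref{talagrandlem:eq1} to the $\tilde n$-spin model with $\beta'=\tilde\beta_n$. Under this rescaling, the inner Gibbs measure appearing in that estimate (on $\tilde n-1$ spins with parameter $\tilde\beta_n\sqrt{(\tilde n-1)/\tilde n}=\beta\sqrt{(n-m-1)/n}$) agrees in law with $\la\cdot\ra_{S\cup\{i\}}$, and the prefactor inside the tanh becomes $\tilde\beta_n/\sqrt{\tilde n}=\beta/\sqrt{n}$. Thus the expectation in \eqref{talagrand:eq1} tends to $0$ for this $m$, and taking the maximum over the finitely many $m\in\{0,\ldots,k\}$ gives the required uniform vanishing. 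The proof of \eqref{Talagrand:eq2} is entirely analogous, invoking \eqref{talagrandlem:eq2} in place of \eqref{talagrandlem:eq1} after the same rescaling with an additional site $i'$ removed.

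The main obstacle is really just careful parameter bookkeeping—verifying that the two-layer rescaling $(n,\beta)\to(\tilde n,\tilde\beta_n)\to(\tilde n-1,\tilde\beta_n\sqrt{(\tilde n-1)/\tilde n})$ lines up cleanly with the nested Gibbs measures $\la\cdot\ra_S$ and $\la\cdot\ra_{S\cup\{i\}}$, and that the original hypothesis \eqref{shc} on $(\beta,h)$ underwrites each application of Lemma \ref{talagrandlem} through the uniformity in $\beta'$. The substantive analytic work is already packaged inside that lemma, so no new probabilistic estimates should be required.
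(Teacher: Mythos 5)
Your proposal is correct and follows essentially the same route as the paper: rewrite $H_{S,n}$ as an SK Hamiltonian on $n-|S|$ spins at the rescaled temperature $\beta'=\beta\sqrt{(n-|S|)/n}\in[\beta-\delta,\beta]$ for large $n$, then invoke the locally uniform version of Lemma \ref{talagrandlem} together with the symmetry among sites. Your extra bookkeeping (reduction of the supremum to finitely many values of $|S|$ and the identity $\tilde\beta_n/\sqrt{\tilde n}=\beta/\sqrt{n}$) just makes explicit what the paper leaves implicit.
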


\begin{proof}
Let $k\geq 0$ be fixed. Consider any $n>k.$ Let $(i,S)$ satisfy $S\subset [n]$ with $|S|\leq k$ and $i\notin S.$ Note that
\begin{align*}
	H_{S,n}(\sigma)&=-\frac{\beta'}{\sqrt{|[n]\setminus S|}}\sum_{s,t\in [n]\setminus S:s<t}a_{st}\sigma_s\sigma_t-h\sum_{s\in [n]\setminus S}\sigma_s
\end{align*} 
for $\sigma\in \{-1,1\}^{[n]\setminus S}$,
where $\beta':=\beta \sqrt{(n-|S|)/n}.$ In other words, $H_{S,n}(\sigma)$ can be regarded as the Hamiltonian of the SK model of size $n-|S|$ with temperature $\beta'$ and external field $h.$ Since $\beta(1-k/n)\leq \beta'\leq \beta$ and $\lim_{n\to\infty}\beta'=\beta$, our assertions follow from the symmetry among sites and Lemma \ref{talagrandlem}. 
\end{proof}

\subsection{Two crucial propositions}

We establish two important propositions in this subsection. First, we show that the summation in  \eqref{talagrand:eq1} can also be approximated by excluding one more row  and its corresponding column of the Gaussian matrix $(a_{r,r'})_{r,r'\in [n]\setminus(S\cup\{i\})}$ in $\la \sigma_j\ra_{S\cup\{i\}}.$ This will be used throughout the proof of Theorem~\ref{thm0}.

\begin{proposition}\label{add:lem1}
Assume that $\beta,h>0$ satisfy \eqref{shc}. For all $k\geq 2,$ we have that
\begin{align}\label{add:lem1:eq1}
\lim_{n\to\infty}\sup_{(i,i',S):0\leq |S|\leq k,i,i'\notin S,i\neq i'}\e\Bigl|\frac{1}{\sqrt{n}}\sum_{j\notin S\cup\{i\}}a_{ij}\la \sigma_j\ra_{S\cup\{i\}}-\frac{1}{\sqrt{n}}\sum_{j\notin S\cup\{i,i'\}}a_{ij}\la \sigma_j\ra_{S\cup\{i,i'\}}\Bigr|^2=0.
\end{align}
\end{proposition}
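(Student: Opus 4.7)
The plan is to decompose the difference between the two sums into a ``boundary'' term (a single missing entry) and a ``bulk'' term (a sum over the common indices with different Gibbs averages), and bound each in $L^2$ uniformly in $(i,i',S)$. Specifically, I would write
$$D := \frac{a_{ii'}}{\sqrt{n}}\la \sigma_{i'}\ra_{S\cup\{i\}} + \frac{1}{\sqrt{n}}\sum_{j\notin S\cup\{i,i'\}}a_{ij}\bigl(\la \sigma_j\ra_{S\cup\{i\}} - \la \sigma_j\ra_{S\cup\{i,i'\}}\bigr),$$
so the task splits into controlling these two summands.

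The boundary piece is immediate: since $|\la\sigma_{i'}\ra_{S\cup\{i\}}|\le 1$ and $\e a_{ii'}^2 = 1$, its $L^2$ norm is bounded by $1/\sqrt{n}$, uniformly over all admissible $(i,i',S)$.

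The bulk piece is the substantive part. The key observation is that for each $j\notin S\cup\{i,i'\}$, the entry $a_{ij}$ is independent of both Gibbs averages $\la\sigma_j\ra_{S\cup\{i\}}$ and $\la\sigma_j\ra_{S\cup\{i,i'\}}$, because these averages are defined via Hamiltonians that only use entries $a_{s,t}$ with $s,t\in[n]\setminus(S\cup\{i\})$ and $s,t\in[n]\setminus(S\cup\{i,i'\})$, respectively, and neither index set contains $i$. Setting $A_j := \la\sigma_j\ra_{S\cup\{i\}} - \la\sigma_j\ra_{S\cup\{i,i'\}}$, the cross terms for distinct $j,j'$ in the square vanish because $a_{ij}$ and $a_{ij'}$ are independent centered Gaussians, each independent of the $A_\cdot$'s. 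Hence
$$\e\Bigl|\frac{1}{\sqrt{n}}\sum_{j\notin S\cup\{i,i'\}}a_{ij}A_j\Bigr|^2 = \frac{1}{n}\sum_{j\notin S\cup\{i,i'\}}\e A_j^2.$$

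To finish, I would invoke \eqref{Talagrand:eq2} of Lemma \ref{talagrand} with the set $S\cup\{i\}$ (of size at most $k+1$) and the extra site $i'$, which gives $\sup \e A_j^2 \to 0$ uniformly over admissible triples $(j,i',S\cup\{i\})$. Averaging then forces the bulk piece to vanish in $L^2$ uniformly, completing the proof. The only mild subtlety is matching indices correctly so that Lemma \ref{talagrand} is applied with a set enlarged by one; this is harmless since its conclusion holds uniformly for every fixed upper bound on $|S|$.
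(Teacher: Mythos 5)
Your proposal is correct and follows essentially the same route as the paper: split off the single term $\frac{a_{ii'}}{\sqrt{n}}\la\sigma_{i'}\ra_{S\cup\{i\}}$ (which is $O(1/\sqrt{n})$ in $L^2$), use the independence of $(a_{ij})_{j\notin S\cup\{i,i'\}}$ from the differences of Gibbs averages to reduce the bulk term to $\frac{1}{n}\sum_j \e\bigl|\la\sigma_j\ra_{S\cup\{i\}}-\la\sigma_j\ra_{S\cup\{i,i'\}}\bigr|^2$, and conclude via \eqref{Talagrand:eq2} applied with the enlarged set $S\cup\{i\}$. The only (cosmetic) difference is that you treat the two pieces by the triangle inequality rather than the paper's bound by $2a^2+2b^2$.
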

\begin{proof}
Note that the expectation in \eqref{add:lem1:eq1} is bounded from above by
\begin{align*}
&2\e\Bigl|\frac{1}{\sqrt{n}}\sum_{j\notin S\cup\{i,i'\}}a_{ij}\bigl(\la \sigma_j\ra_{S\cup\{i\}}-\la \sigma_j\ra_{S\cup\{i,i'\}}\bigr)\Bigr|^2+\frac{2}{n}\\
&=\frac{2}{n}\sum_{j\notin S\cup\{i,i'\}}\e\bigl|\la \sigma_j\ra_{S\cup\{i\}}-\la \sigma_j\ra_{S\cup\{i,i'\}}\bigr|^2+\frac{2}{n},
\end{align*}
where the equality here used the fact that $(a_{ij})_{j\notin S\cup\{ i,i'\}}$ is independent of $$
\bigl(\la \sigma_j\ra_{S\cup\{i\}}-\la \sigma_j\ra_{S\cup\{i,i'\}}\bigr)_{j\notin S\cup\{i,i'\}}.
$$ Using \eqref{Talagrand:eq2} completes our proof.
\end{proof}

Recall the iterative scheme $(w_{S}^{[k]})_{k\geq 0,S\subset[n]}$ from \eqref{iteration} with Basic Setting \ref{basicsetting}. The next proposition establishes an analogous statement as \eqref{Talagrand:eq2} for $w_S^{[k]},$ which will not only be critical to the proof of Theorem \ref{thm0}, but also to those of Theorems \ref{thm2} and \ref{thm1}. 

\begin{proposition}\label{lem3}
	For any $k\geq 0$ and $p\geq 1,$ there exists a constant $C_{k,p}>0$ such that for any $n\geq k+3,$
	\begin{align}\label{lem3:eq1}
	\sup\bigl(\e \bigl|w_{S,i}^{[k]}-w_{S\cup\{i'\},i}^{[k]}\bigr|^p\bigr)^{1/p}\leq \frac{C_{k,p}}{n^{1/2}},
	\end{align}
	where the supremum is over all $i,i'\in [n]$ and $S\subset [n]$ with $i\neq i'$, $i,i'\notin S,$ and $|S|\leq n-(k+2).$
\end{proposition}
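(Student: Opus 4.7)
\textbf{Proof plan for Proposition \ref{lem3}.} I will proceed by induction on $k$. The base case $k=0$ is immediate: by definition, $w_{S,i}^{[0]}=u_i=w_{S\cup\{i'\},i}^{[0]}$ for every admissible $(i,i',S)$, so the difference vanishes.

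For the inductive step, fix admissible $(i,i',S)$ with $|S|\le n-(k+3)$ and decompose
\begin{align*}
w_{S,i}^{[k+1]}-w_{S\cup\{i'\},i}^{[k+1]}
=\underbrace{\frac{a_{ii'}}{\sqrt n}\,f_k\bigl(w_{S\cup\{i\},i'}^{[k]}\bigr)}_{=:T_1}
+\underbrace{\frac{1}{\sqrt n}\sum_{j\notin S\cup\{i,i'\}} a_{ij}\,\Delta_j}_{=:T_2},
\end{align*}
where $\Delta_j:=f_k\bigl(w_{S\cup\{i\},j}^{[k]}\bigr)-f_k\bigl(w_{S\cup\{i,i'\},j}^{[k]}\bigr)$. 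Since $f_k$ is bounded and $a_{ii'}$ is standard Gaussian, $\e|T_1|^p\leq C n^{-p/2}$ directly.

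The heart of the argument is to control $T_2$ using the self-avoiding structure. The key observation is that $w_{S\cup\{i\},j}^{[k]}$ (and analogously $w_{S\cup\{i,i'\},j}^{[k]}$) is a function of the entries $(a_{rs})_{r,s\in[n]\setminus(S\cup\{i\})}$ with both indices distinct from $i$; in particular the family $(\Delta_j)_{j\notin S\cup\{i,i'\}}$ is independent of the Gaussian vector $(a_{ij})_{j\notin S\cup\{i,i'\}}$. Conditioning on the $\Delta_j$'s, the sum $T_2$ becomes a centered Gaussian with conditional variance $n^{-1}\sum_j\Delta_j^2$, so the Gaussian moment formula gives
\begin{align*}
\e|T_2|^p = c_p\,\e\Bigl(\frac{1}{n}\sum_{j\notin S\cup\{i,i'\}}\Delta_j^2\Bigr)^{p/2}.
\end{align*}
For $p\ge 2$, Jensen's inequality applied to $x\mapsto x^{p/2}$ yields $\bigl(n^{-1}\sum_j\Delta_j^2\bigr)^{p/2}\le n^{-1}\sum_j|\Delta_j|^p$; combined with the Lipschitz property $|\Delta_j|\le\|f_k'\|_\infty\bigl|w_{S\cup\{i\},j}^{[k]}-w_{S\cup\{i,i'\},j}^{[k]}\bigr|$ and the inductive hypothesis (applied with the set $S\cup\{i\}$, whose size is at most $n-(k+2)$), this produces $\e|T_2|^p\leq C_{k,p}' n^{-p/2}$. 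The case $1\le p<2$ is then obtained from the case $p=2$ by Lyapunov's inequality.

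The main obstacle is purely bookkeeping: one must verify rigorously the independence statement ``$w_{S\cup\{i\},j}^{[k]}$ does not depend on the row/column of $A_n$ indexed by $i$,'' which is precisely why the definition excludes $i$ from the index set at every recursive step. Once this is properly justified by induction on $k$ in parallel with the moment bound, the inductive constant $C_{k+1,p}$ is immediately extracted from the bounds on $T_1$ and $T_2$, namely $C_{k+1,p}\le C(1+\|f_k\|_\infty+\|f_k'\|_\infty C_{k,p})$ for a universal $C=C(p)$.
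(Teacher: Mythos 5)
Your proposal is correct and follows essentially the same route as the paper: the identical decomposition into the single $a_{ii'}$ term plus the sum of $a_{ij}\,\Delta_j$, the same independence-of-row-$i$ observation, and the same conditional-Gaussian moment computation with the inductive hypothesis applied to $S\cup\{i\}$. The only (immaterial) difference is that you close the estimate with Jensen's inequality at exponent $p$, whereas the paper invokes the inductive bound at exponent $2p$ after a Cauchy--Schwarz step.
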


\begin{proof}
It is easy to see that \eqref{lem3:eq1} is valid for $k=0$ and all $p\geq 1.$ Assume that \eqref{lem3:eq1} is valid for some $k\geq 0$ and all $p\geq 1.$ 
Consider an arbitrary $p\geq 1.$ Let $n\geq k+4.$ Fix $i,i'\in [n]$ and $S\subset [n]$ with $i\neq i'$, $i,i'\notin S,$ and $|S|\leq n-(k+3)$. Let $$
B_{l}:=f_{k}\bigl(w_{S\cup\{i\},l}^{[k]}\bigr)\,\,\mbox{and}\,\,D_l=f_{k}\bigl(w_{S\cup\{i,i'\},l}^{[k]}\bigr).$$
Observe that since the index $i$ does not appear in all indices of the Gaussian random variables in $(B_l)_{l\notin S\cup\{i,i'\}}$ and $(D_l)_{l\notin S\cup\{i,i'\}}$, we have that $(a_{il})_{l\notin S\cup\{i,i'\}}$ is independent of both $(B_l)_{l\notin S\cup\{i,i'\}}$ and $(D_l)_{l\notin S\cup\{i,i'\}}$. From this, we can write
\begin{align*}
w_{S,i}^{[k+1]}-w_{S\cup\{i'\},i}^{[k+1]}&=\frac{1}{\sqrt{n}}\sum_{l\notin S\cup\{i,i'\}}a_{il}(B_l-D_l)+\frac{1}{\sqrt{n}}a_{ii'}B_{i'}\\
&\stackrel{d}{=}z\Bigl(\frac{1}{n}\sum_{l\notin S\cup\{i,i'\}}(B_l-D_l)^2\Bigr)^{1/2}+\frac{1}{\sqrt{n}}a_{ii'}B_{i'},
\end{align*}
where $z$ is a standard normal random variable independent of $B_l$ and $D_l.$
Using the induction hypothesis and the fact that $f_k$'s are bounded and Lipschitz, it follows that
\begin{align*}
\bigl(\e\bigl|w_{S,i}^{[k+1]}-w_{S\cup\{i'\},i}^{[k+1]}\bigr|^p\bigr)^{1/p}&\leq \bigl(\e|z|^p\bigr)^{1/p}\Bigl(\frac{1}{n}\sum_{l\notin S\cup\{i,i'\}}\e |B_l-D_l|^{2p}\Bigr)^{1/2p}+\frac{\bigl(\e|z|^p\bigr)^{1/p}M_{k}}{n^{1/2}} \\
&\leq \frac{\bigl(\e|z|^p\bigr)^{1/p}C_{k,2p}}{n^{1/2}}+\frac{\bigl(\e|z|^p\bigr)^{1/p}M_{k}}{n^{1/2}},
\end{align*} 
where $M_{k}$ is the supremum norm of $f_{k}.$ This completes our proof.
\end{proof}

\subsection{Covariance structure}\label{sub3.2}

Recall $u$ and $(f_k)_{k\geq 0}$ from \eqref{add:eq21}. Recall the iterative scheme $w_{S}^{[k]}$ from \eqref{iteration} by applying the setting \eqref{add:eq21}. For $0\leq k\leq n-1$ and  any $S\in [n]_k$, set $\nu_{S}^{[k]}=\bigl(\nu_{S,i}^{[k]}\bigr)_{i\notin S}$ by
\begin{align*}
\nu _{S,i}^{[k]}=f_k\bigl(w_{S,i}^{[k]}\bigr),\,\,i\in [n]\setminus S.
\end{align*}
As before, if $S=\emptyset,$ we will simply denote $\nu_S^{[k]}$ by $\nu^{[k]}.$ Define the overlap between $\la \sigma\ra_S$ and $\nu_S^{[k]}$ by
\begin{align*}
R_S^k&=\frac{1}{n}\sum_{j\notin S}\la \sigma_j\ra_{S}\nu _{S,j}^{[k]}
\end{align*}
and denote
\begin{align*}
D_{S}=\frac{1}{n}\sum_{j\notin S}\la \sigma_j\ra_{S}^2,\quad
E_{S}^{k}=\frac{1}{n}\sum_{j\notin  S}\nu _{S,j}^{[k]2}.
\end{align*}
Define an auxiliary function $\Gamma(t;\gamma,\gamma')$ for $t\in [-1,1]$ and $\gamma,\gamma'\geq 0$ by
\begin{align*}
\Gamma(t;{\gamma,\gamma'})&:=\e \TH\bigl(\beta z\sqrt{\gamma|t|}+\beta z_1\sqrt{\gamma(1-|t|)}\bigr)\\
&\quad\cdot\TH\bigl(\beta\mbox{\rm sign}(t)z\sqrt{\gamma'|t|}+\beta z_2\sqrt{\gamma'(1-|t|)}\bigr)
\end{align*}
for $z,z_1,z_2$ i.i.d. standard Gaussian.  The following proposition takes care of the limits of $D_S,E_S^k,R_S^k.$

\begin{proposition}\label{lem-1}
	Assume that $\beta,h>0$ satisfy \eqref{shc}.	For any $k\geq 2$ and $\ell\geq 0,$ we have that
	\begin{align}
	\begin{split}\label{assertion:2}
	\lim_{n\to\infty}\sup_{|S|=\ell}\e\bigl|D_S-q\bigr|^2=0,\\
	%\end{split}\\
	%\begin{split}\label{assertion:3}
	\lim_{n\to\infty}\sup_{|S|=\ell}\e\bigl|E_S^k-q\bigr|^2=0.
	\end{split}
	\end{align}
   Furthermore,
	\begin{align}
	\begin{split}\label{assertion:1}
	\lim_{n\to\infty}\sup_{|S|=\ell}\e\Bigl|R_S^k-\Delta^{\circ(k-1)}\bigl(Q(\beta,h)\bigr)\Bigr|^2=0,
	\end{split}
	\end{align}
	where $Q(\beta,h):=\sqrt{q}\e \TH(\beta z\sqrt{q})$ and 
	\begin{align}\label{delta}
	\Delta\bigl(t\bigr)=\Gamma\bigl(t/q;q,q\bigr),\,\,t\in [-q,q].
	\end{align}
	The notation $\Delta^{\circ(k-1)}$ here means the composition of $\Delta$ for $(k-1)$ times.
\end{proposition}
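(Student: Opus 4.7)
My approach is to treat the three assertions in order, first reducing uniformity over $\{S:|S|=\ell\}$ to the case $S=\emptyset$. Iterating Proposition \ref{lem3} at most $\ell$ times gives $\|w_{S,i}^{[k]}-w_i^{[k]}\|_{L^p}=O_{k,\ell,p}(n^{-1/2})$, and similarly \eqref{Talagrand:eq2} yields $\|\la\sigma_i\ra_S-\la\sigma_i\ra\|_{L^2}=o(1)$ uniformly in $|S|\le\ell$. Since each $f_k$ is bounded and Lipschitz, the quantities $D_S$, $E_S^k$, $R_S^k$ each differ from their $S=\emptyset$ counterparts by $o(1)$ in $L^1$ uniformly; hence it suffices to prove each displayed limit for $S=\emptyset$.

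For the first assertion, $D_\emptyset=\la R(\sigma^1,\sigma^2)\ra$, so \eqref{shc} immediately gives $\e|D_\emptyset-q|^2\to 0$. For the second, I apply Theorem \ref{thm2} with the bounded Lipschitz test function $\psi(x)=f_k(x)^2$ to get $n^{-1}\sum_i f_k(w_i^{[k]})^2\to\e f_k(W_k)^2$. Under the choice \eqref{add:eq21}, the covariance recursion \eqref{lem2:eq3} produces $W_1\equiv 0$ and $W_2\sim N(0,q)$, and the self-consistency identity $q=\e\tanh^2(\beta z\sqrt{q}+h)$ is precisely $\e f_k(W_k)^2=q$ whenever $W_k\sim N(0,q)$; thus by induction $W_k\sim N(0,q)$ and $\e f_k(W_k)^2=q$ for every $k\ge 2$.

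For the third assertion I argue by induction on $k\ge 1$. The base case $k=1$ reads $R_S^1=\sqrt{q}\cdot n^{-1}\sum_{j\notin S}\la\sigma_j\ra_S$; the cavity equation \eqref{talagrand:eq1} together with Gaussian concentration of $\frac{1}{\sqrt n}\sum_{l\notin S\cup\{j\}}a_{jl}\la\sigma_l\ra_{S\cup\{j\}}$ (whose conditional variance is $D_{S\cup\{j\}}\to q$) shows $n^{-1}\sum_j\la\sigma_j\ra_S\to\e\TH(\beta z\sqrt{q})$, whence $R_S^1\to Q(\beta,h)$. For the inductive step $k\Rightarrow k+1$, invoke \eqref{talagrand:eq1} to write $\la\sigma_j\ra_S\approx\TH(\beta X_j)$ with $X_j:=\frac{1}{\sqrt n}\sum_{l\notin S\cup\{j\}}a_{jl}\la\sigma_l\ra_{S\cup\{j\}}$, and observe that $\nu_{S,j}^{[k+1]}=\TH(\beta Y_j)$ with $Y_j:=w_{S,j}^{[k+1]}=\frac{1}{\sqrt n}\sum_{l\notin S\cup\{j\}}a_{jl}f_k(w_{S\cup\{j\},l}^{[k]})$. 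Conditionally on every variable independent of $(a_{jl})_{l\notin S\cup\{j\}}$, the pair $(X_j,Y_j)$ is centered bivariate Gaussian with variances $D_{S\cup\{j\}}$, $E_{S\cup\{j\}}^k$ and covariance $R_{S\cup\{j\}}^k$, which by the first two assertions and the induction hypothesis concentrate around $q$, $q$, and $\Delta^{\circ(k-1)}(Q(\beta,h))$ respectively. Writing $t:=R_{S\cup\{j\}}^k/q\in[-1,1]$ (the bound from Cauchy--Schwarz applied to $D_{S\cup\{j\}}$ and $E_{S\cup\{j\}}^k$) and using the standard Gaussian coupling $X=\sqrt{q|t|}\,z+\sqrt{q(1-|t|)}\,z_1$, $Y=\mathrm{sign}(t)\sqrt{q|t|}\,z+\sqrt{q(1-|t|)}\,z_2$ identifies the limiting conditional expectation of $\TH(\beta X)\TH(\beta Y)$ with $\Gamma(t;q,q)=\Delta(R_{S\cup\{j\}}^k)$; by continuity of $\Delta$ on $[-q,q]$ and symmetry in $j$, averaging yields $\e R_S^{k+1}\to\Delta^{\circ k}(Q(\beta,h))$.

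The main obstacle is the matching second-moment bound that is needed in the inductive step: distinct rows $j\ne j'$ share the Gaussian entry $a_{jj'}$, so the pairs $(X_j,Y_j)$ and $(X_{j'},Y_{j'})$ are coupled and $\e[R_S^{k+1}]^2$ does not factorize directly over the two rows. This is exactly what Proposition \ref{add:lem1} (together with \eqref{Talagrand:eq2} for the $\la\sigma\ra$ factor and Proposition \ref{lem3} for the cavity iteration) is designed to remedy: at $o(1)$ cost I may replace $\la\sigma_l\ra_{S\cup\{j\}}$ by $\la\sigma_l\ra_{S\cup\{j,j'\}}$ and $w_{S\cup\{j\},l}^{[k]}$ by $w_{S\cup\{j,j'\},l}^{[k]}$, whereupon the two restricted Gaussian coefficients become conditionally independent of $a_{jj'}$ and the second-moment computation factorizes, giving $\e[R_S^{k+1}]^2\to\Delta^{\circ k}(Q(\beta,h))^2$ and closing the $L^2$ induction.
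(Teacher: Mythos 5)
Your proposal is correct and follows essentially the same route as the paper: reduce uniformly in $S$ via Proposition \ref{lem3} and \eqref{Talagrand:eq2}, get the $D_S$ and $E_S^k$ limits from \eqref{shc} and Theorem \ref{thm2} with $W_k\thicksim N(0,q)$, and prove \eqref{assertion:1} by induction using the cavity equation, the conditional bivariate-Gaussian computation that produces $\Gamma$, the decoupling of distinct rows via Proposition \ref{add:lem1} (and Proposition \ref{lem3}) for the second moment, and the convergence of the average magnetization for the base case — exactly the content of Lemmas \ref{add:lem3} and \ref{add:lem2}. The only cosmetic difference is your normalization $t=R^k_{S\cup\{j\}}/q$, which Cauchy--Schwarz does not confine to $[-1,1]$; the paper instead uses $\rho^k_{S\cup\{j\}}=R^k_{S\cup\{j\}}/\sqrt{D_{S\cup\{j\}}E^k_{S\cup\{j\}}}$ and the elementary inequality \eqref{add:eq-7}, a fix your argument absorbs without change.
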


For the rest of this subsection, we establish this proposition. 

\begin{notation}\label{notation1} \rm For two sequences of random variables $(a_n)_{n\geq 1}$ and $(b_n)_{n\geq 1}$, we say that $a_n\asymp_1 b_n$ if $\lim_{n\to\infty}\e|a_n-b_n|=0.$ It is straightforward that if $a_{n}\asymp_1 b_{n}$ and $c_{n}\asymp_1 d_{n}$ then (i) $f(a_{n})\asymp_{1} f(b_{n})$ for any Lipschitz function $f$ and (ii) $a_{n}c_{n}\asymp_1 b_{n}d_{n}$ provided $\sup_{n\ge 1}\{ |a_{n}|, |b_{n}|, |c_{n}|, |d_{n}|\} < \infty$. Also, for any $i\neq i',$ we use $\e_{i}$ and $\e_{i,i'}$ to denote the expectations with respect to $(a_{ij})_{j\in [n]}$ and $(a_{ij},a_{i'j})_{j\in[n]}$, respectively. 
	\end{notation}

\begin{proof}[\bf Proof of \eqref{assertion:2} in Proposition \ref{lem-1}:]
	Let $k\geq 2$ and $\ell\geq 0.$ Applying \eqref{Talagrand:eq2} and Proposition \ref{lem3} for $\ell$ many times, we have that uniformly over all $S$ with $|S|=\ell,$
	\begin{align*}
	D_S\asymp_1\frac{1}{n}\sum_{j=1}^n\la \sigma_j\ra^2\,\,\mbox{and}\,\,
    E_S^k\asymp_1\frac{1}{n}\sum_{j=1}^n\nu _{j}^{[k]2}.
	\end{align*}
	From \eqref{shc}, in probability,
	\begin{align*}
	%\label{add:eq24}
	\frac{1}{n}\sum_{j=1}^n\la \sigma_j\ra^2&=\bigl\la R(\sigma^1,\sigma^2)\bigr\ra\to q.
	\end{align*}
	Also, from Theorem \ref{thm2}, we see that $W_k\thicksim N(0,q)$ for $k\geq 2$ so that in probability,
	\begin{align*}
	%\label{add:eq25}
	\frac{1}{n}\sum_{j=1}^n\nu _{j}^{[k]2}&\to \e f_k^2\bigl(W_{k}\bigr)=\e\TH^2(\beta z \sqrt{q})=q.
	\end{align*}
	These imply the announced statement.
\end{proof}

The proof of \eqref{assertion:1} in Proposition \ref{lem-1} requires two lemmas. First, we show that the overlap $R_S^{k+1}$ satisfies the following recursive formula. Set
\begin{align*}
\rho_{S}^k&=\frac{R_S^k}{\sqrt{D_{S}E_{S}^k}}.
\end{align*}

\begin{lemma}\label{add:lem3}
	 Assume that $\beta,h>0$ satisfy \eqref{shc}. For any $k\geq 1$ and $\ell\geq 0,$ 
	\begin{align*}
		&\lim_{n\to\infty}\sup_{|S|=\ell}\e\Bigl|R_S^{k+1}-\frac{1}{n}\sum_{i\notin S}\Gamma\bigl(\rho_{S\cup\{i\}}^k;D_{S\cup\{i\}},E_{S\cup\{i\}}^{k}\bigr)\Bigr|^2=0.
		\end{align*}
\end{lemma}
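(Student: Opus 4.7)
The plan is to combine the cavity equation with a Gaussian conditioning argument, and then concentrate the resulting sample mean.

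\emph{Step 1 (cavity replacement).} Set $G_{S,j} := n^{-1/2}\sum_{i\notin S\cup\{j\}} a_{ji}\la\sigma_i\ra_{S\cup\{j\}}$. Lemma \ref{talagrand} gives $\e|\la\sigma_j\ra_S - \TH(\beta G_{S,j})|^2 \to 0$ uniformly over $|S|\le \ell$ and $j\notin S$. Since $\TH$ is bounded, a Cauchy--Schwarz application reduces the desired estimate to showing $\e|n^{-1}\sum_{j\notin S}(X_j - Y_j)|^2 \to 0$, where $X_j := \TH(\beta G_{S,j})\TH(\beta w_{S,j}^{[k+1]})$ and $Y_j := \Gamma(\rho_{S\cup\{j\}}^k; D_{S\cup\{j\}}, E_{S\cup\{j\}}^k)$.

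\emph{Step 2 (Gaussian conditioning).} Let $\F_j$ be the $\sigma$-algebra generated by $(a_{st})_{s,t\ne j}$. The weights $\la\sigma_i\ra_{S\cup\{j\}}$ and $\nu_{S\cup\{j\},l}^{[k]}$ are $\F_j$-measurable (both the Gibbs system $\la\cdot\ra_{S\cup\{j\}}$ and the iterates $w_{S\cup\{j\},\cdot}^{[k]}$ exclude site $j$), while $(a_{ji})_{i\ne j}$ are i.i.d.\ standard Gaussians independent of $\F_j$. Hence, conditional on $\F_j$, the pair $(\beta G_{S,j},\beta w_{S,j}^{[k+1]})$ is centered Gaussian with variances $\beta^2 D_{S\cup\{j\}}, \beta^2 E_{S\cup\{j\}}^k$ and correlation $\rho_{S\cup\{j\}}^k$. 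Comparison with the definition of $\Gamma$ yields $\e[X_j|\F_j] = Y_j$.

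\emph{Step 3 (decorrelation).} Write $\Delta_j := X_j - Y_j$. The diagonal contribution $n^{-2}\sum_j \e\Delta_j^2$ is $O(1/n)$ by boundedness. For an off-diagonal pair $j\ne j'$, $X_j$ and $X_{j'}$ are correlated because they share the entry $a_{jj'}$ and because the Gibbs/iterate weights depend on rows $j$ and $j'$. Using \eqref{Talagrand:eq2}, Proposition \ref{add:lem1}, and Proposition \ref{lem3}, replace $G_{S,j}, w_{S,j}^{[k+1]}$ (and their $j\leftrightarrow j'$ counterparts) by the ``doubly reduced'' versions $\tilde G_{S,j}, \tilde w_{S,j}^{[k+1]}$ built from the $S\cup\{j,j'\}$-system; note $\tilde w_{S,j}^{[k+1]} = w_{S\cup\{j'\},j}^{[k+1]}$, so Proposition \ref{lem3} applies directly. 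Each replacement costs $o(1)$ in $L^2$ uniformly. The resulting $\tilde X_j, \tilde X_{j'}$ depend on disjoint rows of $A_n$, hence are conditionally independent given $\F_{j,j'} := \sigma((a_{st})_{s,t\notin\{j,j'\}})$, so $\e[\tilde\Delta_j \tilde\Delta_{j'}|\F_{j,j'}] = 0$ once $\tilde Y_j$ is redefined as $\Gamma(\rho_{S\cup\{j,j'\}}^k; D_{S\cup\{j,j'\}}, E_{S\cup\{j,j'\}}^k)$. Continuity of $\Gamma$ (with denominators in $\rho$ bounded away from $0$ by \eqref{assertion:2}) guarantees $|Y_j - \tilde Y_j| \to 0$ in $L^1$. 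Combining these approximations yields $|\e[\Delta_j\Delta_{j'}]| = o(1)$ uniformly in $j\ne j'$, so that $n^{-2}\sum_{j\ne j'} \e[\Delta_j\Delta_{j'}] = o(1)$, as desired.

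The delicate part is Step 3: the $X_j$'s are not independent, with dependence arising both from the shared Gaussian entries $a_{jj'}$ and from the global dependence of the Gibbs and iteration weights on the full matrix. The key to decoupling a pair $(j,j')$ up to a controlled $L^2$ error---thereby securing effective independence under $\F_{j,j'}$---is Proposition \ref{add:lem1}, combined with Proposition \ref{lem3} on the iteration side.
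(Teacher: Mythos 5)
Your proposal is correct and follows essentially the same strategy as the paper's proof: replace $\la\sigma_j\ra_S\nu_{S,j}^{[k+1]}$ by the cavity product $\Theta_{S,j}$ (your $X_j$) via Lemma~\ref{talagrand}, recognize the conditional expectation $\e[X_j\mid\F_j]=\Gamma(\rho_{S\cup\{j\}}^k;D_{S\cup\{j\}},E_{S\cup\{j\}}^k)$, and decorrelate off-diagonal pairs by passing to the doubly-reduced quantities via Proposition~\ref{add:lem1} and Proposition~\ref{lem3}. The only cosmetic difference is in handling $|Y_j-\tilde Y_j|$: you invoke Lipschitz continuity of $\Gamma$ together with \eqref{assertion:2} to keep $\sqrt{D E^k}$ away from $0$, whereas the paper avoids this dependency by observing $\tilde Y_j=\e[\tilde X_j\mid\F_j]$ and transferring $X_j\asymp_1\tilde X_j$ through the $L^1$-contraction of conditional expectation; both routes are valid.
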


\begin{proof} 
	Writing by using conditional expectations, 
	\begin{align}
		\begin{split}\label{add:eq22}
			&\e\Bigl| R_{S}^{k+1} - \frac{1}{n}\sum_{i\not \in S} \e_{i}\bigl[\la \sigma_i\ra_S \nu_{S,i}^{[k+1]}\bigr]\Bigr|^2\\
			&=\frac{1}{n^2}\sum_{i,i'\notin S:i\neq i'}\e\Bigl[\e_{i,i'} \bigl[\la \sigma_i\ra_S \nu_{S,i}^{[k+1]}\la \sigma_{i'}\ra_S \nu_{S,i'}^{[k+1]}\bigr]+\e_{i}\bigl[ \la \sigma_{i}\ra_S \nu_{S,i}^{[k+1]}\bigr]\cdot\e_{i'}\bigl[ \la \sigma_{i'}\ra_S \nu_{S,i'}^{[k+1]}\bigr]
			\\
			&\qquad\qquad-\la \sigma_i\ra_S \nu_{S,i}^{[k+1]}\cdot\e_{i'} \bigl[\la \sigma_{i'}\ra_S  \nu_{S,i'}^{[k+1]}\bigr]-\la \sigma_{i'}\ra_S \nu_{S,i'}^{[k+1]}\cdot\e_{i} \bigl[\la \sigma_{i}\ra_S  \nu_{S,i}^{[k+1]}\bigr]\Bigr]+O(n^{-1}),
		\end{split}
	\end{align}
	where $O(n^{-1})$ arises from the total contribution of the terms for $i=i' \in [n]$. To handle the terms inside the summations, note that from Lemma \ref{talagrand} and Propositions \ref{add:lem1} and \ref{lem3}, we have that 
  uniformly over all $(i,i',S)$ with $|S|=\ell$, $i,i'\notin S$, and $i\neq i',$ 
  \begin{align}\label{add:eq-1}
  \la \sigma_i\ra_S \nu_{S,i}^{[k+1]}&\asymp_1 \Theta_{S,i} \asymp_1 \Theta_{S,i,i'},
  \end{align}
where
  \begin{align*}
	\Theta_{S,i}&:=\TH\Bigl(\frac{\beta}{\sqrt{n}}\sum_{j\notin S\cup\{i\}}a_{ij}\la \sigma_j\ra_{S\cup\{i\}}\Bigr)
	\TH\Bigl(\frac{\beta}{\sqrt{n}}\sum_{j\notin S\cup\{i\}}a_{ij}\nu _{S\cup\{i\},j}^{[k]}\Bigr),\\
	\Theta_{S,i,i'}&:=\TH\Bigl(\frac{\beta}{\sqrt{n}}\sum_{j\notin S
		\cup\{i,i'\}}a_{ij}\la \sigma_j\ra_{S\cup\{i,i'\}}\Bigr)
	\TH\Bigl(\frac{\beta}{\sqrt{n}}\sum_{j\notin S\cup\{ i,i'\}}a_{ij}\nu _{S\cup\{i,i'\},j}^{[k]}\Bigr).
\end{align*}
   Here, note that $(a_{ij})_{j\notin S\cup\{i\}}$ is independent of $\la \sigma\ra_{S\cup\{i\}}$ and $\nu_{S\cup\{i\}}^{[k]}$ and that $(a_{ij})_{j\notin S\cup\{i,i'\}}$ is independent of $\la \sigma\ra_{S\cup\{i,i'\}}$ and $\nu _{S\cup\{i,i'\}}^{[k]}$. It follows that uniformly over all $(i,i',S)$ with $|S|=\ell$, $i,i'\notin S$, and $i\neq i',$
  \begin{align}
  \label{eqn:asymp-expectation}
    \e_{i}\bigl[\la \sigma_i\ra_S \nu_{S,i}^{[k+1]}\bigr] \asymp_{1}\e_i\bigl[\Theta_{S,i,i'}\bigr]\asymp_{1}\e_i\bigl[\Theta_{S,i}\bigr]=\Gamma\bigl(\rho_{S\cup\{i\}}^k;D_{S\cup\{i\}},E_{S\cup\{i\}}^{k}\bigr),
  \end{align}
which implies 
		\begin{align}\label{add:eq-2}
		&\lim_{n\to\infty}\sup_{|S|=\ell}\e\Bigl|\frac{1}{n}\sum_{i\not \in S} \e_{i}\bigl[\la \sigma_i\ra_S \nu_{S,i}^{[k+1]}\bigr] -\frac{1}{n}\sum_{i\notin S}\Gamma\bigl(\rho_{S\cup\{i\}}^k;D_{S\cup\{i\}},E_{S\cup\{i\}}^{k}\bigr)\Bigr|^2=0.
		\end{align}
In a similar manner, by \eqref{add:eq-1}, we have that uniformly over all $(i,i',S)$ with $|S|=\ell$, $i,i'\notin S,$ and $i\neq i',$
  \begin{align}
  \begin{split}\label{add:eq19}
  \e_{i,i'}\bigl[	\la \sigma_i\ra_S \nu_{S,i}^{[k+1]}\la \sigma_{i'}\ra_S \nu_{S,i'}^{[k+1]}\bigr]
  &\asymp_1 \e_{i,i'}\bigl[\Theta_{S,i,i'}\Theta_{S,i',i}\bigr] \asymp_1 \e_i\bigl[\Theta_{S,i,i'}\bigr]\e_{i'}\bigl[\Theta_{S,i',i}\bigr],
%  &\asymp_1\Gamma\bigl(\rho_{S\cup\{i,i'\}}^k;D_{S\cup\{i,i'\}},E_{S\cup\{i,i'\}}^k\bigr)^2
  \end{split}
 % \begin{split}
  %&\asymp_1\Gamma\bigl(\rho_{S\cup\{i\}}^k;D_{S\cup\{i\}},E_{S\cup\{i\}}^{k}\bigr)^2,
  %\end{split}
  \end{align}
  where the second asymptotics is valid since $(a_{ij})_{j\notin S\cup\{i,i'\}}$ is independent of $(a_{i'j})_{j\notin S\cup\{i,i'\}}$.
  In addition, 
  \begin{align}
  %\begin{split}
  \label{add:eq20}
  \e_{i,i'}\bigl[\la \sigma_i\ra_S \nu_{S,i}^{[k+1]}\cdot\e_{i'} \bigl[\la \sigma_{i'}\ra_S  \nu_{S,i'}^{[k+1]}\bigr]\bigr]
  &\asymp_1 \e_{i,i'}\bigl[\Theta_{S,i,i'}\e_{i'}[\Theta_{S,i',i}]\bigr]
  \asymp_1\e_i\bigl[\Theta_{S,i,i'}\bigr]\e_{i'}[\Theta_{S,i',i}].
  % &\asymp_1\Gamma\bigl(\rho_{S\cup\{i,i'\}}^k;D_{S\cup\{i,i'\}},E_{S\cup\{i,i'\}}^k\bigr)^2
  %\end{split}\\
 % \begin{split}
 % &\asymp_1\Gamma\bigl(\rho_{S\cup\{i\}}^k;D_{S\cup\{i\}},E_{S\cup\{i\}}^{k}\bigr)^2,
  %\end{split}
  \end{align}
  Plugging  \eqref{eqn:asymp-expectation}, \eqref{add:eq19}, and \eqref{add:eq20} into \eqref{add:eq22}, we see that the right-hand side of \eqref{add:eq22} vanishes. Finally, applying \eqref{add:eq-2} to the left-hand side of \eqref{add:eq22} completes our proof.
\end{proof}

Next we show that the averaging local magnetization converges.

\begin{lemma}
	\label{add:lem2} Assume that $\beta,h>0$ satisfy \eqref{shc}. We have that in probability,
	\begin{align*}
	\lim_{n\to\infty}\frac{1}{n}\sum_{i\in [n]}\la \sigma_i\ra=\e \TH(\beta z\sqrt{q}).
	\end{align*}
\end{lemma}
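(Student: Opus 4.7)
The plan is to establish $L^2$ convergence of $n^{-1}\sum_{i\in[n]}\la\sigma_i\ra$ to $\e\TH(\beta z\sqrt q)$, which is stronger than the claimed convergence in probability. Set $m := \e\TH(\beta z\sqrt q)$ and note that by site symmetry the claim reduces to showing
\begin{align*}
\e\la\sigma_1\ra \to m \quad\text{and}\quad \frac{1}{n^2}\sum_{i\neq i'}\e[\la\sigma_i\ra \la\sigma_{i'}\ra] \to m^2,
\end{align*}
since $n^{-2}\sum_{i=i'}\e[\la\sigma_i\ra^2]=O(n^{-1})$. Combined, these give $\e\bigl(n^{-1}\sum_i\la\sigma_i\ra - m\bigr)^2\to 0$.

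For the first moment, apply the cavity equation \eqref{talagrand:eq1} with $S=\emptyset$ to obtain
\begin{align*}
\la\sigma_1\ra \asymp_1 \TH\Bigl(\frac{\beta}{\sqrt n}\sum_{j\neq 1}a_{1j}\la\sigma_j\ra_{\{1\}}\Bigr).
\end{align*}
Since $(a_{1j})_{j\neq 1}$ is independent of $\la\sigma\ra_{\{1\}}$, conditionally on $\la\sigma\ra_{\{1\}}$ the argument inside $\TH$ is a centered Gaussian with variance $\beta^2 D_{\{1\}}$, and so
\begin{align*}
\e\la\sigma_1\ra \asymp_1 \e\,\e_z \TH\bigl(\beta z\sqrt{D_{\{1\}}}\bigr),
\end{align*}
where $z\sim N(0,1)$ is independent of everything. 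Since $D_{\{1\}}\to q$ in $L^2$ by \eqref{assertion:2}, dominated convergence and the Lipschitzness of $\TH$ give $\e\la\sigma_1\ra\to \e\TH(\beta z\sqrt q)=m$.

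For the cross terms, fix $i\neq i'$. Using \eqref{talagrand:eq1} for both $\la\sigma_i\ra_\emptyset$ and $\la\sigma_{i'}\ra_\emptyset$, together with Proposition \ref{add:lem1} to replace the arguments by sums ranging over $j\notin\{i,i'\}$ with local magnetizations $\la\sigma_j\ra_{\{i,i'\}}$, we obtain uniformly in $(i,i')$,
\begin{align*}
\la\sigma_i\ra\la\sigma_{i'}\ra \asymp_1 \TH\Bigl(\frac{\beta}{\sqrt n}\sum_{j\neq i,i'}a_{ij}\la\sigma_j\ra_{\{i,i'\}}\Bigr)\TH\Bigl(\frac{\beta}{\sqrt n}\sum_{j\neq i,i'}a_{i'j}\la\sigma_j\ra_{\{i,i'\}}\Bigr).
\end{align*}
The crucial observation is that $(a_{ij})_{j\neq i,i'}$ and $(a_{i'j})_{j\neq i,i'}$ are mutually independent and jointly independent of $\la\sigma\ra_{\{i,i'\}}$, so conditionally on $\la\sigma\ra_{\{i,i'\}}$, the two $\TH$-arguments are independent centered Gaussians with common variance $\beta^2 D_{\{i,i'\}}$. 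Hence
\begin{align*}
\e[\la\sigma_i\ra\la\sigma_{i'}\ra] \asymp_1 \e\bigl(\e_z\TH(\beta z\sqrt{D_{\{i,i'\}}})\bigr)^2,
\end{align*}
and \eqref{assertion:2} yields the limit $(\e\TH(\beta z\sqrt q))^2=m^2$, uniformly in $(i,i')$.

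Averaging over $i\neq i'$ then gives $n^{-2}\sum_{i\neq i'}\e[\la\sigma_i\ra\la\sigma_{i'}\ra]\to m^2$, completing the $L^2$ argument. The bookkeeping is essentially a simpler version of the cascade of $\asymp_1$ reductions already performed in the proof of Lemma \ref{add:lem3}; the main thing to verify is that the uniformity of Lemma \ref{talagrand}, Proposition \ref{add:lem1}, and Proposition \ref{lem-1} over $S$ and $(i,i')$ is indeed enough to conclude both moment limits, which it is since the relevant set sizes $|S|\in\{0,1,2\}$ are bounded.
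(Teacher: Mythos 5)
Your proposal is correct and follows essentially the same route as the paper: reduce to a second-moment computation, apply the cavity equation \eqref{talagrand:eq1} together with Proposition \ref{add:lem1}, use the conditional independence of $(a_{ij})_{j\neq i,i'}$ and $(a_{i'j})_{j\neq i,i'}$ given $\la\sigma\ra_{\{i,i'\}}$ to factor the conditional expectations, and finally invoke the concentration $D_{\{i,i'\}}\to q$ from \eqref{assertion:2}. The only cosmetic difference is that the paper proves a single uniform statement \eqref{add:eq-3} for two arbitrary bounded Lipschitz functions $\phi,\psi$ and then specializes, whereas you treat the first-moment and cross-term contributions separately; these are the same argument packaged differently.
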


\begin{proof}
	Let $\phi$ and $\psi$ be any two Lipschitz continuous functions on $[-1,1].$ From Lemma \ref{talagrand}, Propositions \ref{add:lem1}, and noting that for distinct $i,i'$, $(a_{ij})_{j\notin \{i,i'\}}$ and $(a_{i'j})_{j\notin \{i,i'\}}$ are independent each other, it follows that uniformly over any $i\neq i',$ 
	\begin{align*}
	\e_{i,i'}\phi(\la \sigma_i\ra)\psi(\la \sigma_{i'}\ra)&\asymp_1\e_i \phi\Bigl(\TH\Bigl(\frac{\beta}{\sqrt{n}}\sum_{j\notin \{i,i'\}}a_{ij}\la \sigma_j\ra_{\{i,i'\}}\Bigr)\Bigr)\cdot \e_{i'} \psi\Bigl(\TH\Bigl(\frac{\beta}{\sqrt{n}}\sum_{j\notin \{i,i'\}}a_{i'j}\la \sigma_j\ra_{\{i,i'\}}\Bigr)\Bigr)\\
	&=\e_z \phi\bigl(\TH\bigl(\beta z\sqrt{D_{\{i,i'\}}}\bigr)\bigr)\cdot \e_z \psi\bigl(\TH\bigl(\beta z\sqrt{D_{\{i,i'\}}}\bigr)\bigr),
	\end{align*}
	where the asymptotics are valid since $\phi,\psi,$ and $\TH$ are Lipschitz and $\e_z$ is the expectation with respect to $z$ only.  Next, from \eqref{shc} and \eqref{Talagrand:eq2}, $$
	q\asymp_1\la R(\sigma^1,\sigma^2)\ra=\frac{1}{n}\sum_{j\in [n]}\la \sigma_j\ra^2\asymp_1 \frac{1}{n}\sum_{j\notin \{i,i'\}}\la \sigma_j\ra_{\{i,i'\}}^2=D_{\{i,i'\}}.$$ 
    It follows that from the Lipschitz property of $\phi$ and the fact that $|\TH'(x)|\leq 1$, there exists a positive constant $L>0$ such that
	\begin{align*}
		\e\left| \e_{z}\phi \bigl(\TH(\beta z\sqrt{D_{\{i,i'\}}})\bigr) - \e_{z}\phi \bigl(\TH(\beta z\sqrt{q})\bigr)\right|
		% \le &L \cdot \e\Big[\e_{z} \left|\TH(\beta z\sqrt{D_{\{i,i'\}}}) - \TH(\beta z\sqrt{\la R(\sigma^{1}, \sigma^{2})\ra}) \right| \Big]\\
		\le &L \e|z|\cdot\e \bigl| \sqrt{D_{\{i,i'\}}} -\sqrt{q}  \bigr| \\
				\le &L \e|z|\cdot\bigl(\e \bigl| \sqrt{D_{\{i,i'\}}} -\sqrt{q}  \bigr|^2\bigr)^{1/2} \\
		\le &L\e |z|\cdot\bigl( \e \big| D_{\{i,i'\}}-q\big|\bigr)^{1/2}   \to 0,
	\end{align*} 
where the last inequality used the inequality $(\sqrt{x}-\sqrt{y})^2\leq |x-y|$ for any $x,y\geq 0.$ The same limit is also valid for $\psi.$
	Consequently, 
    \begin{align}\label{add:eq-3}
    \lim_{n\to\infty}\sup_{i,i'\in [n]:i\neq i}\e\bigl|\e_{i,i'}\phi(\la \sigma_i\ra)\psi(\la \sigma_{i'}\ra)-\e \phi\bigl(\TH\bigl(\beta z\sqrt{q}\bigr)\bigr)\cdot \e \psi\bigl(\TH\bigl(\beta z\sqrt{q}\bigr)\bigr)\bigr|=0.
    \end{align}
    Finally, write
    \begin{align}
    	\begin{split}\nonumber
    &\e\Bigl|\frac{1}{n}\sum_{i\in [n]}\la \sigma_i\ra-\e \TH(\beta z\sqrt{q})\Bigr|^2
    \end{split}\\
    \begin{split}\label{add:eq-4}
    &=\frac{1}{n^2}\sum_{i,i'\in [n]:i\neq i'}\e \Bigl[\e_{i,i'}\bigl[\la \sigma_i\ra\la \sigma_{i'}\ra\bigr]+\bigl(\e \TH(\beta z\sqrt{q})\bigr)\bigr)^2\\
    &\qquad-\e_{i,i'}\bigl[\la \sigma_i\ra\bigr]\e \TH(\beta z\sqrt{q})-\e_{i,i'}\bigl[\la \sigma_{i'}\ra\bigr]\e \TH(\beta z\sqrt{q})\Bigr]+O(n^{-1}),
    \end{split}
    \end{align}
where $O(n^{-1})$ comes from the total error of the main diagonal terms. 
From \eqref{add:eq-3}, the first term on the right can be handled by considering $\phi(x)=\psi(x)=x$, whereas the last two terms can be handled by setting $\phi(x)=x$ and $\psi(x)\equiv 1$. From these, the summation on the right-hand side of \eqref{add:eq-4} asymptotically vanishes. This completes our proof.
\end{proof}

\begin{proof}[\bf Proof of \eqref{assertion:1} in Proposition \ref{lem-1}:]
We argue by induction on $k\geq 2.$
	Consider $k=2$ and an arbitrary $\ell\geq 0.$ From Lemma~\ref{add:lem3}, 
	\begin{align}\label{add:eq-6}
R_S^{2}=\frac{1}{n}\sum_{i\notin S}\la \sigma_i\ra_S\nu _{S,i}^{[2]}&\asymp_1\frac{1}{n}\sum_{i\notin S}\Gamma\bigl(\rho_{S\cup\{i\}}^1;D_{S\cup\{i\}},E_{S\cup\{i\}}^{1}\bigr).
	\end{align}
 Now, from \eqref{Talagrand:eq2} and Lemma \ref{add:lem2},
	\begin{align}\label{add:eq-5}
	R_{S\cup\{i\}}^1=\frac{\sqrt{q}}{n}\sum_{j\notin S\cup\{i\}}\la\sigma_j\ra_{S\cup\{i\}}\asymp_1\frac{\sqrt{q}}{n}\sum_{j=1}^n\la\sigma_j\ra\asymp_1 \sqrt{q}\e\TH(\beta z\sqrt{q})=Q(\beta,h).
	\end{align}
    Since $
    |\rho_{S\cup \{i\}}^{1}| \le 1$ by  the Cauchy-Schwarz inequality, it follows that
    \begin{align}
    	\begin{split}\Bigl| \rho_{S\cup \{i\}}^{1} - q^{-1}Q(\beta, h) \Bigr| &=q^{-1}\Bigl|\rho_{S\cup \{i\}}^{1} \bigl(q-\sqrt{D_{S\cup \{i\}} E_{S\cup\{i\}}^{1}}\bigr) + R_{S\cup\{i\}}^{1} - Q(\beta, h) \Bigr| \\
    	&\le  q^{-1}\Bigl(\bigl|q-\sqrt{D_{S\cup \{i\}} E_{S\cup\{i\}}^{1}}\bigr| + \bigl| R_{S\cup\{i\}}^{1} - Q(\beta, h)\bigr|\Bigr). 
    	\end{split}\label{add:eq-7}
    \end{align} 
	Using this, \eqref{assertion:2}, and \eqref{add:eq-5}, we have that uniformly in $(i,S)$ with $|S|=\ell$ and $i\notin S,$ 
	\begin{align}\label{add:eq-9}
	\rho_{S\cup\{i\}}^1&\asymp_1 q^{-1}Q(\beta,h).
	\end{align}
    Consequently, plugging this and \eqref{assertion:2} into \eqref{add:eq-6} yields our assertion for $k=2.$
	Now assume that \eqref{assertion:1} is valid for some $k\geq 2.$ To show that it is also valid for $k+1$, again we use Lemma~\ref{add:lem3} to write that uniformly over all $(i,S)$ with $|S|=\ell$ and $i\notin S,$
	\begin{align}\label{add:eq23}
	\frac{1}{n}\sum_{i\notin S} \la \sigma_i\ra_S\nu_{S,i}^{[k+1]}\asymp_1\frac{1}{n}\sum_{i\notin S}\Gamma\bigl(\rho_{S\cup\{i\}}^k;D_{S\cup\{i\}},E_{S\cup\{i\}}^{k}\bigr).
	\end{align}
	Using the induction hypothesis and again \eqref{assertion:2} yields that uniformly over all $(i,S)$ with $|S|=\ell$ and $i\notin S,$
	\begin{align*}
	\e\bigl|D_{S\cup\{i\}}-q\bigr|^2,\e\bigl|E_{S\cup\{i\}}^k-q\bigr|^2&\to 0,\\
	\e\bigl|\rho_{S\cup\{i\}}^k-q^{-1}\Delta^{\circ(k-1)}\bigl(Q(\beta,h)\bigr)\bigr|^2&\to 0,
	\end{align*}
where the second display is argued in the same way as \eqref{add:eq-9} by using an analogous inequality of \eqref{add:eq-7},
 \begin{align*}
\Bigl| \rho_{S\cup \{i\}}^{k} - q^{-1}\Delta^{\circ(k-1)}\bigl(Q(\beta, h)\bigr) \Bigr| \leq  q^{-1}\Bigl(\bigl|q-\sqrt{D_{S\cup \{i\}} E_{S\cup\{i\}}^{k}}\bigr| + \bigl| R_{S\cup\{i\}}^{k} - \Delta^{\circ(k-1)}\bigl(Q(\beta, h)\bigr)\bigr|\Bigr). 
\end{align*} 
	Plugging the above limits into \eqref{add:eq23}, we see that \eqref{assertion:1} follows for $k+1$ and this completes our proof.
	\end{proof}

\subsection{Establishing Theorem \ref{thm0}}

First of all, from \cite[Proposition 1.6.8]{Tal111} and our assumption \eqref{shc}, we readily see that the free energy corresponding to the Hamiltonian of the SK model converges to the replica-symmetric solution \eqref{freeenergy}.
%, that is,
% \begin{align*}
%\lim_{n\to\infty}\frac{1}{n}\log Z_{n,\beta,h}=\log 2+\frac{\beta^2}{4}(1-q)^2+\e\log \cosh(\beta z\sqrt{q}+h).
% \end{align*}
On the other hand, Toninelli \cite{Ton02} showed that this limit is valid only if $(\beta,h)$ lies inside the AT line in the sense that \eqref{ATline} is valid. Hence, for the rest of the proof, we shall assume that \eqref{ATline} is in force.

	Next, write
	\begin{align*}
	\e\bigl\|\la \sigma\ra-m^{[k]}\bigr\|^2&\leq 2\e\bigl\|\la\sigma\ra-\nu ^{[k]}\bigr\|^2+2\e \|\nu ^{[k]}-m^{[k]}\|^2.
	\end{align*}
	Here, the second term vanishes as $n\to\infty$ by \eqref{thm1:eq1}; the first term can be written as
	\begin{align*}
	\e\bigl\|\la\sigma\ra-\nu ^{[k]}\bigr\|^2&=\e\|\la \sigma\ra\|^2+\e\bigl\|\nu ^{[k]}\bigr\|^2-2\e\bigl \la \la \sigma\ra,\nu ^{[k]}\bigr\ra\\
	&=\e D_\emptyset+\e E_\emptyset^k-2\e R_\emptyset^{k}.
	\end{align*}
	From Proposition \ref{lem-1}, for any $k\geq 2,$
	\begin{align}\label{add:eq-8}
	\lim_{n\to\infty}	\e\bigl\|\la\sigma\ra-\nu ^{[k]}\bigr\|^2&=2q-2\Delta^{\circ(k-1)}\bigl(Q(\beta,q)\bigr).
	\end{align}

It remains to show that the right-hand side of \eqref{add:eq-8} converges to zero as $k\to\infty$ or equivalently,
\begin{align}\label{limit}
	\lim_{k\to\infty}\Delta^{\circ k}(Q(\beta,q))=q.
\end{align} 
From \eqref{delta}, 
	\begin{align*}
	\Delta(t)=\e \TH\bigl(\beta z\sqrt{|t|}+\beta z_1\sqrt{q-|t|}\bigr)\TH\bigl(\beta \mbox{\rm sign}(t) z\sqrt{|t|}+\beta z_2\sqrt{q-|t|}\bigr),\,\,t\in [-q,q].
	\end{align*} 
This function maps $[-q,q]$ into $[-q,q]$ since from the Cauchy-Schwarz inequality,
\begin{align*}
|\Delta(t)|\leq \e\TH^2(\beta z \sqrt{q})=q,\,\,\forall t\in [-q,q].
\end{align*}
In addition, $\Delta$ has a fixed point at $q$ since $\Delta(q)=\e \TH^{2}(\beta z \sqrt{q} )=q$. By
using Gaussian integration by parts and noting that $\tanh'=1/\cosh^2$, for any $t\in [-q, q]$,
	\begin{align*}
	\Delta'(t)&=\beta^2\e \frac{1}{\cosh^2\bigl(\beta z\sqrt{|t|}+\beta z_1\sqrt{q-|t|}+h\bigr)}\frac{1}{\cosh^2\bigl(\beta\mbox{\rm sign}(t)z\sqrt{|t|}+\beta z_2\sqrt{q-|t|}+h\bigr)}.
	\end{align*}
Consequently, from the Cauchy-Schwarz inequality  and the validity of \eqref{ATline}, for any $t\in (-q,q),$
	\begin{align}\label{diffD}
	\Delta'(t)&<\beta^2\e \frac{1}{\cosh^4\bigl(\beta z\sqrt{q}+h\bigr)}\leq 1.
	\end{align} 
	Now note that since $\Delta(q)=q$, if $\Delta(t)=t$ for some $t\in [-q,q),$ then from the mean value theorem, there exists some $t'\in (t,q)$ such that $\Delta'(t')=1$, which contradicts \eqref{diffD}. From this and noting that $\Delta(0)>0$, we must have that $t<\Delta(t)$ for all $t\in [-q,q).$ Consequently, 
	 \begin{align*}
	 Q(\beta,h)<\Delta(Q(\beta,h))
	 \end{align*}
 and since obviously $\Delta'(t)>0$ for all $t\in [-q,q]$, 
     \begin{align*}
     	\Delta^{\circ k}(Q(\beta,h))<\Delta^{\circ (k+1)}(Q(\beta,h)),\,\,\forall k\geq 1.
     \end{align*}
 Hence, $\lim_{k\to\infty}\Delta^{\circ k}(Q(\beta,h))$ exists and this limit must be a fixed point of $\Delta$ and then be equal to $q$, establishing \eqref{limit}. Our proof is complete.

\section{Proof of Theorem \ref{thm2}}\label{sec4.1}

Recall the vector $(W_k,W_{k-1},\ldots,W_1)$ from \eqref{lem2:eq3}. Consider an arbitrary bounded Lipschitz function $\psi:\mathbb{R}^{k+1}\to \mathbb{R}.$ We argue by induction on $k\geq 0$ that
\begin{align}\label{add:proof:eq1}
\lim_{n\to \infty}\e\Bigl|\frac{1}{n}\sum_{i=1}^n\psi\bigl(w_i^{[k]},w_{i}^{[k-1]},\ldots,w_i^{[0]}\bigr)-\e \psi\bigl(W_k,W_{k-1},\ldots,W_0\bigr)\Bigr|=0.
\end{align}
Obviously, the assertion is valid if $k=0,$ since the empirical measure of $w^{[0]}$ converges weakly to $W_0$.
Assume that the above statement is valid up to certain $k\geq 0.$ 
Recall from Proposition \ref{lem3} that for all $0\leq \ell\leq k,$ 
\begin{align*}
w_1^{[\ell+1]}&\asymp_1 w_{\{2\},1}^{[\ell+1]}=\frac{1}{\sqrt{n}}\sum_{j\neq 1,2}a_{1j}f_{\ell}\bigl(w_{\{1,2\},j}^{[\ell]}\bigr),\\
w_{2}^{[\ell+1]}&\asymp_1 w_{\{1\},2}^{[\ell+1]}=\frac{1}{\sqrt{n}}\sum_{j\neq 1,2}a_{2j}f_{\ell}\bigl(w_{\{1,2\},j}^{[\ell]}\bigr).
\end{align*}
Since the first and second rows and columns of $A_n$ are excluded in all $w_{\{1,2\},j}^{[\ell]}$ for all $j\ne 1,2$ and $0\leq \ell\leq k$, it follows  that
\begin{align*}
\bigl(w_{\{2\},1}^{[k+1]},w_{\{2\},1}^{[k]},\ldots,w_{\{2\},1}^{[0]}\bigr)\,\,\mbox{and}\,\,
\bigl(w_{\{1\},2}^{[k+1]},w_{\{1\},2}^{[k]},\ldots,w_{\{1\},2}^{[0]}\bigr)
\end{align*}
are independent conditioning on $(a_{i,j})_{i,j\neq 1,2}$ and each of them is jointly centered Gaussian with covariance, by the induction hypothesis, for $0\leq a,b\leq k,$
\begin{align*}
\e_{1}w_{\{2\},1}^{[a+1]}w_{\{2\},1}^{[b+1]}&=\frac{1}{n}\sum_{j\neq 2}f_{a}\bigl(w_{\{1,2\},j}^{[a]}\bigr)f_{b}\bigl(w_{\{1,2\},j}^{[b]}\bigr)\asymp_1\frac{1}{n}\sum_{j=1}^nf_{a}\bigl(w_{j}^{[a]}\bigr)f_{b}\bigl(w_{j}^{[b]}\bigr)\asymp_1\e f_a(W_a)f_b(W_b),\\
\e_{2} w_{\{1\},2}^{[a+1]}w_{\{1\},2}^{[b+1]}&=\frac{1}{n}\sum_{j\neq 1}f_{a}\bigl(w_{\{1,2\},j}^{[a]}\bigr)f_{b}\bigl(w_{\{1,2\},j}^{[b]}\bigr)\asymp_1\frac{1}{n}\sum_{j=1}^nf_{a}\bigl(w_{j}^{[a]}\bigr)f_{b}\bigl(w_{j}^{[b]}\bigr)\asymp_1 \e f_a(W_a)f_b(W_b).
\end{align*}
From these, for any two bounded Lipschitz functions $\phi_{1},\phi_{2}:\mathbb{R}^{k+2}\to \mathbb{R}$,
\begin{align*}
&\lim_{n\to\infty}\e \bigl[\phi_{1}\bigl(w_1^{[k+1]},w_{1}^{[k]},\ldots,w_{1}^{[0]}\bigr)\phi_{2}\bigl(w_2^{[k+1]},w_{2}^{[k]},\ldots,w_{2}^{[0]}\bigr)\bigr]\\
&=\lim_{n\to\infty}\e \bigl[\phi_{1}\bigl(w_{\{2\},1}^{[k+1]},w_{\{2\},1}^{[k]},\ldots,w_{\{2\},1}^{[0]}\bigr)\phi_{2}\bigl(w_{\{1\},2}^{[k+1]},w_{\{1\},2}^{[k]},\ldots,w_{\{1\},2}^{[0]}\bigr)\bigr]\\
&=\lim_{n\to\infty}\e \Bigl[\e_{1}\bigl[\phi_{1}\bigl(w_{\{2\},1}^{[k+1]},w_{\{2\},1}^{[k]},\ldots,w_{\{2\},1}^{[0]}\bigr)\bigr]
\e_2\bigl[\phi_{2}\bigl(w_{\{1\},2}^{[k+1]},w_{\{1\},2}^{[k]},\ldots,w_{\{1\},2}^{[0]}\bigr)\bigr]\Bigr]\\
&=\e\bigl[\phi_{1}(W_{k+1},W_k,\ldots,W_0)\bigr]\e\bigl[\phi_{2}(W_{k+1},W_k,\ldots,W_0)\bigr].
\end{align*}
Finally, by the symmetry among sites and the above limit, we arrive at
\begin{align}
\notag
&\lim_{n\to\infty}\e\Bigl[\Bigl(\frac{1}{n}\sum_{i=1}^n\phi_{1}(w_i^{[k+1]},w_{i}^{[k]},\ldots,w_{i}^{[0]})\Bigr)
\Bigl(\frac{1}{n}\sum_{i=1}^n\phi_{2}(w_i^{[k+1]},w_{i}^{[k]},\ldots,w_{i}^{[0]})\Bigr)\Bigr]\\
\notag
&=\lim_{n\to\infty}\e\bigl[ \phi_{1}\bigl(w_1^{[k+1]},w_{1}^{[k]},\ldots,w_{1}^{[0]}\bigr)\phi_{2}\bigl(w_2^{[k+1]},w_{2}^{[k]},\ldots,w_{2}^{[0]}\bigr)\bigr]\\
\label{eqn:induct-next-limit}
&=\e\bigl[\phi_{1}(W_{k+1},W_k,\ldots,W_0)\bigr]\e\bigl[\phi_{2}(W_{k+1},W_k,\ldots,W_0)\bigr].
\end{align}
To validate \eqref{add:proof:eq1} for the  $k+1$ case, or equivalently, 
	\[
	\lim_{n\to\infty}\e\Bigl[\frac{1}{n}\sum_{i=1}^n\psi\bigl(w_i^{[k+1]},w_{i}^{[k]},\ldots,w_i^{[0]}\bigr)-\e \psi\bigl(W_{k+1},W_{k},\ldots,W_0\bigr)\Bigr]^{2}=0,
	\]
	we expand the square here and apply \eqref{eqn:induct-next-limit} twice for the choices $\phi_{1}=\phi_{2}=\psi$ and $\phi_{1}=\psi, \phi_{2}\equiv1$. The resulting limits ultimately cancel each other.

\section{Moment controls}\label{sec4}

This section is a preparation for the proof of Theorem \ref{thm1}. 

\subsection{Main estimates}

Let $m\geq 0$. For $0\leq k\leq n-1,$ let $\mathcal{B}_{k,n}(m)$ be the set of all $(P,S,i)$ for $P$ being a multiset of elements in $\{(i,j):1\leq i<j\leq n\}$ with $|P|=m$ counting multiplicities and $i\in[n]$  and $S\in [n]_k$ satisfying that $i\notin S$.  Recall the definition of $w_{S,i}^{[k]}$ from \eqref{iteration}. Throughout this section, we write $$w_{S,i}^{[k]}=w_{S,i}^{[k]}(A)$$ to emphasize its dependence on the Gaussian matrix $A_n.$ Also, recall that $A_n$ is symmetric. For any $P=\{(i_1,j_1),\ldots,(i_m,j_m)\}$ and smooth $F$ defined on the space of $n\times n$ symmetric matrices, we adapt the notation
$$
\partial_PF(A)=\partial_{a_{i_rj_r},a_{i_{r-1}j_{r-1}},\ldots,a_{i_1j_1}}F(A),
$$
the partial derivatives of $F$ in the variables ${a_{i_rj_r},a_{i_{r-1},j_{r-1}},\ldots,a_{i_1j_1}}.$  
The following propositions control the moments of the partial derivatives of $w_{S,i}^{[k]}(A)$ in the entries of $A_n$.

\begin{proposition}\label{lem1}
	For any $k\geq 0$, $m\geq 0$, and $p\geq 1$, there exists a constant $W_{k,m,p}>0$ such that for all $n\geq k+1,$
	\begin{align}
	\begin{split}\label{lem1:eq1}
	\sup_{(P,S,i)\in\mathcal{B}_{k,n}(m)}\bigl(\e \bigl|\partial_Pw_{S,i}^{[k]}(A)\bigr|^p\bigr)^{1/p}&\leq \frac{W_{k,m,p}}{n^{m/2}}
	\end{split}
	\end{align}
	and for any smooth function $\zeta$ with bounded derivatives of all orders, there exists a constant $W_{k,m,p,\zeta}>0$  such that for all $n\geq k+1$, 
	\begin{align}
	\begin{split}\label{lem1:eq2}
	\sup_{(P,S,i)\in\mathcal{B}_{k,n}(m)}\bigl(\e \bigl|\partial_P\bigl(\zeta\bigl(w_{S,i}^{[k]}(A)\bigr)\bigr)\bigr|^p\bigr)^{1/p}&\leq \frac{W_{k,m,p,\zeta}}{n^{m/2}}.
	\end{split}
	\end{align}
\end{proposition}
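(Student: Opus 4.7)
My plan is to prove \eqref{lem1:eq1} and \eqref{lem1:eq2} simultaneously by induction on $k$, with the two statements at level $k$ feeding the proof at level $k+1$. The base case $k=0$ is immediate: since $w_{S,i}^{[0]}=u_i$ does not depend on $A_n$, one has $\partial_P w_{S,i}^{[0]}=0$ whenever $m\ge 1$, while $|u_i|\le 1$ covers $m=0$. The bound \eqref{lem1:eq2} for $k=0$ then follows from the multivariate Fa\`a di Bruno formula and the boundedness of the derivatives of $\zeta$.

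For the inductive step I differentiate the recursion \eqref{iteration}. Because each $a_{ij}$ is linear in the upper-triangular parameters of $A_n$ and $a_{ij}=a_{ji}$, the Leibniz rule applied to the product $a_{ij}f_k(w_{S\cup\{i\},j}^{[k]})$ collapses to
\begin{align*}
\partial_P w_{S,i}^{[k+1]} \;=\; \frac{1}{\sqrt n}\sum_{j\notin S\cup\{i\}}\Bigl[a_{ij}\,\partial_P f_k\bigl(w_{S\cup\{i\},j}^{[k]}\bigr)+r_j\,\partial_{P\setminus\{e_j\}} f_k\bigl(w_{S\cup\{i\},j}^{[k]}\bigr)\Bigr],
\end{align*}
where $e_j:=\{i\wedge j,\,i\vee j\}$ and $r_j$ is the multiplicity of $e_j$ in $P$. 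The self-avoiding structure of the scheme is decisive here: $w_{S\cup\{i\},j}^{[k]}$ only involves entries indexed outside $S\cup\{i\}$, so $\partial_Q f_k(w_{S\cup\{i\},j}^{[k]})$ vanishes as soon as $Q$ contains an edge incident to any element of $S\cup\{i\}$.

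I then split into two cases. If no edge of $P$ is incident to $i$, only the first term survives; writing $B_j:=\partial_P f_k(w_{S\cup\{i\},j}^{[k]})$, conditioning on $(a_{pq})_{p,q\notin S\cup\{i\}}$ renders the remaining sum a centered Gaussian in the independent variables $(a_{ij})_{j\notin S\cup\{i\}}$ with conditional variance $n^{-1}\sum_j B_j^2$, so Jensen and H\"older bound $\e|\partial_P w_{S,i}^{[k+1]}|^p$ by a constant times $n^{-1}\sum_j \e|B_j|^p$, and the inductive \eqref{lem1:eq2} applied to $\zeta=f_k$ delivers the desired $n^{-mp/2}$ rate. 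If some edge of $P$ is incident to $i$, the first term vanishes; the second term is nonzero only when $P$ contains \emph{exactly one} $i$-edge $e_{j^\ast}$ of multiplicity one (otherwise $P\setminus\{e_{j^\ast}\}$ would still carry an $i$-edge), in which case $\partial_P w_{S,i}^{[k+1]}=n^{-1/2}\,\partial_{P\setminus\{e_{j^\ast}\}} f_k(w_{S\cup\{i\},j^\ast}^{[k]})$, and the inductive \eqref{lem1:eq2} with derivative index of size $m-1$ immediately yields the $n^{-m/2}$ rate after multiplication by $n^{-1/2}$.

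Deducing \eqref{lem1:eq2} at level $k+1$ from the just-established \eqref{lem1:eq1} at level $k+1$ is a routine Fa\`a di Bruno computation: for $m\ge 1$, $\partial_P\bigl(\zeta(w_{S,i}^{[k+1]})\bigr)$ decomposes as a finite sum over set partitions $P=P_1\sqcup\cdots\sqcup P_r$ of $\zeta^{(r)}(w_{S,i}^{[k+1]})\prod_\alpha \partial_{P_\alpha} w_{S,i}^{[k+1]}$, and H\"older combined with \eqref{lem1:eq1} contributes the product $\prod_\alpha n^{-|P_\alpha|/2}=n^{-m/2}$, while boundedness of $\zeta^{(r)}$ absorbs the remaining factors; the $m=0$ case is handled via the linear-growth bound $|\zeta(x)|\le|\zeta(0)|+\|\zeta'\|_\infty|x|$ together with the $m=0$ instance of \eqref{lem1:eq1}. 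The principal obstacle throughout is the bookkeeping in the Leibniz expansion, namely isolating exactly which terms survive the exclusion of $S\cup\{i\}$; the payoff of the self-avoiding construction is that only these two combinatorial configurations contribute, and each exactly matches the target $n^{-m/2}$ scaling.
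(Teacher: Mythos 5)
Your proposal is correct and follows the same overall skeleton as the paper: a (joint) induction on $k$, Leibniz applied to the recursion \eqref{iteration} splitting into the term where a derivative hits the $a_{ij}$ factor versus the term where all derivatives hit $f_k(w_{S\cup\{i\},j}^{[k]})$, and then the chain rule / Fa\`a di Bruno to pass from \eqref{lem1:eq1} to \eqref{lem1:eq2}. The genuine difference is in how the surviving sum $\frac{1}{\sqrt n}\sum_{j\notin S\cup\{i\}}a_{ij}\,\partial_P f_k(w_{S\cup\{i\},j}^{[k]})$ is estimated. You observe that, once $P$ carries no $i$-edge, each coefficient $\partial_P f_k(w_{S\cup\{i\},j}^{[k]})$ is measurable with respect to $(a_{pq})_{p,q\notin S\cup\{i\}}$ and hence independent of $(a_{ij})_{j\notin S\cup\{i\}}$, so the sum is conditionally centered Gaussian and a Jensen/power-mean step gives the bound directly. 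The paper instead packages this step in the separate Lemma \ref{lem2}, which controls $\frac{1}{\sqrt n}\sum_j a_j\,\partial_P F_j(a)$ via Gaussian integration by parts in the $a_j$'s and a combinatorial count over the index configurations $(j_1,\dots,j_p)$ appearing in the expanded $p$-th moment. That lemma is stated in a form that would also cover the case where $F_j$ does depend on $a$, but in the actual applications (here and in Proposition \ref{prop3}) the coefficients never depend on the row-$i$ Gaussians, so the IBP contributions vanish and only the ``all indices repeated'' configurations survive; your conditional-Gaussian reduction is exactly that degenerate specialization, reaching the same rate with less machinery.

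Two small remarks. First, the paper presents the induction as being on \eqref{lem1:eq1} alone with \eqref{lem1:eq2} derived afterward, but it in fact uses \eqref{lem1:eq2} at level $k_0$ inside the inductive step (in the guise of \eqref{add:eq--5}); your explicit joint formulation just makes that dependence transparent. Second, you also notice that the ``derivative hits $a_{ij}$'' part survives only for exactly one $i$-edge of multiplicity one, rather than bounding all $m$ candidate terms as the paper does; this sharpens the constant but is immaterial for the rate. Finally, your $m=0$ base case invokes $|u_i|\le 1$, which is not literally forced by the normalized bound $\|u^n\|\le 1$, though the paper's ``obviously holds for $k=0$'' rests on the same implicit reading, so I would not call this a gap you introduced.
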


\begin{proposition}\label{prop3}
	Let $\zeta:\mathbb{R}\to\mathbb{R}$ be a smooth function with bounded derivatives of all orders. For any $k\geq 0,$ $m\geq 0$, and $p\geq 1,$ there exist a constant $ W_{k,m,p,\zeta}' >0 $ such that for any $n\geq k+1,$
	\begin{align*}
	\sup_{}\Bigl(\e\Bigl|\partial_P\Bigl(\zeta\Bigl(\frac{1}{\sqrt{n}}\sum_{j\neq i,i'}a_{ij}f_k\bigl(w_{\{i\},j}^{[k]}(A)\bigr)\Bigr)\Bigr)\Bigr|^p\Bigr)^{1/p}\leq \frac{W_{k,m,p,\zeta}'}{n^{m/2}},
	\end{align*}
	where the supremum is taken over all $P$'s, collections of pairs from $\{(i,j):1\leq i<j\leq n\}$ with $|P|=m$ counting multiplicities and $i,i'\in [n]$ with $i\neq i'.$ 
\end{proposition}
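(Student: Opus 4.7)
The plan is to unfold the outer derivative $\partial_P$ via the generalized chain rule (Fa\`a di Bruno) and reduce the claim to $L^q$ moment bounds on products of partial derivatives of the inner expression. Set $F(A) := \frac{1}{\sqrt{n}}\sum_{j\neq i,i'} a_{ij} f_k(w_{\{i\},j}^{[k]}(A))$. By Fa\`a di Bruno, $\partial_P(\zeta(F(A)))$ is a finite linear combination (with combinatorial coefficients independent of $n$) of terms of the form $\zeta^{(\ell)}(F(A))\prod_{r=1}^\ell \partial_{P_r} F(A)$ indexed by unordered partitions $P = P_1 \sqcup \cdots \sqcup P_\ell$ of $P$ into nonempty blocks. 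Since every derivative of $\zeta$ is uniformly bounded, H\"older's inequality across the $\ell$ factors reduces the proposition to the single estimate
\[
\bigl(\e |\partial_Q F(A)|^q\bigr)^{1/q} \leq C_{k,|Q|,q}\, n^{-|Q|/2}
\]
for every multiset $Q$ of pairs with $|Q|\leq m$ and every $q\geq 1$.

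The key structural observation is that the exclusion set $\{i\}$ in the iteration \eqref{iteration} ensures that $w_{\{i\},j}^{[k]}(A)$ depends only on the entries $(a_{rs})_{r,s\neq i}$. In particular $F(A)$ is affine-linear in the independent Gaussians $(a_{ij})_{j\neq i,i'}$, with coefficients that are independent of them. Decompose $Q = Q_i \sqcup Q_0$, where $Q_i$ collects those pairs in $Q$ that contain the index $i$. Linearity immediately gives $\partial_Q F = 0$ whenever $|Q_i|\geq 2$. If $Q_i = \{(i,j_0)\}$ with multiplicity one, a direct computation yields $\partial_Q F(A) = n^{-1/2}\,\partial_{Q_0}\bigl(f_k(w_{\{i\},j_0}^{[k]}(A))\bigr)$, and the required bound follows from \eqref{lem1:eq2} in Proposition~\ref{lem1} after absorbing the prefactor $n^{-1/2}$ into $n^{-|Q|/2}$.

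The main case is $Q_i = \emptyset$, where $\partial_Q F(A) = \frac{1}{\sqrt{n}}\sum_{j\neq i,i'} a_{ij}\,\partial_{Q_0}\bigl(f_k(w_{\{i\},j}^{[k]}(A))\bigr)$. Conditionally on the $\sigma$-algebra generated by $(a_{rs})_{r,s\neq i}$, this is a centered Gaussian whose variance equals $n^{-1}\sum_j [\partial_{Q_0}(f_k(w_{\{i\},j}^{[k]}))]^2$. Standard Gaussian moment bounds, combined with the power-mean inequality $(\sum_{j=1}^n x_j)^{q/2}\leq n^{q/2-1}\sum x_j^{q/2}$ (for $q\geq 2$, $x_j\geq 0$) and \eqref{lem1:eq2}, then give $(\e|\partial_Q F|^q)^{1/q} \leq C_q\, n^{-|Q|/2}$ for $q\geq 2$; the case $q<2$ follows by Jensen's inequality. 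The main delicacy of the argument lies precisely in isolating the independence between the row-Gaussians $(a_{ij})_{j\neq i}$ and the coefficients $f_k(w_{\{i\},j}^{[k]})$, which is afforded by the exclusion set $\{i\}$ rather than $\{i,i'\}$; this is also the reason the estimate cannot be read off directly from Proposition~\ref{lem1} applied to $w_{\{i'\},i}^{[k+1]}$. Once this independence is recognized, the remainder is a short chain-rule bookkeeping on top of Proposition~\ref{lem1}.
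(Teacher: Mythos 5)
Your proposal is correct, and it diverges from the paper in how the key estimate is obtained. Both arguments share the same skeleton: reduce the claim for $\zeta$ to moment bounds on $\partial_Q F$ for $F(A)=\frac{1}{\sqrt n}\sum_{j\neq i,i'}a_{ij}f_k\bigl(w_{\{i\},j}^{[k]}(A)\bigr)$ (the paper does this via the chain rule, you via Fa\`a di Bruno plus H\"older, which is the same reduction spelled out), and then separate the contribution where a derivative hits an explicit $a_{ij}$ factor, which both you and the paper control directly by \eqref{lem1:eq2}. The difference is in the main term $\frac{1}{\sqrt n}\sum_{j}a_{ij}\,\partial_{Q}\bigl(f_k(w_{\{i\},j}^{[k]})\bigr)$: the paper adds back the missing $j=i'$ summand and invokes Lemma \ref{lem2} in its $2p$-norm form --- the same integration-by-parts moment lemma that drives the induction in Proposition \ref{lem1}, proved by expanding the $p$-th power over index tuples --- whereas you exploit the structural fact that $w_{\{i\},j}^{[k]}$ involves only the entries $(a_{rs})_{r,s\neq i}$, so that $F$ is affine in the row-$i$ Gaussians with coefficients measurable with respect to the complementary entries; this gives $\partial_Q F=0$ when two or more derivatives fall on row-$i$ variables, a trivial case for exactly one such derivative, and, in the main case, exact conditional Gaussianity, after which the power-mean inequality and \eqref{lem1:eq2} finish the bound. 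Your route buys a shorter, more elementary and self-contained argument that bypasses Lemma \ref{lem2} entirely and makes the role of the exclusion set $\{i\}$ explicit (including the correct observation that the claim is not an instance of Proposition \ref{lem1} applied to $w_{\{i'\},i}^{[k+1]}$, since only $\{i\}$ is excluded inside $f_k$); the paper's route buys uniformity with the machinery already set up in Section \ref{sec4}, reusing Lemma \ref{lem2} rather than introducing a separate conditioning argument. Two harmless details you leave implicit: if the single row-$i$ pair is $(i,i')$ the derivative is simply zero since $a_{ii'}$ does not occur in $F$, and under the symmetric convention a pair $(r,s)$ with $r,s\neq i$ never names the variable $a_{ij}$, which is what justifies pulling $a_{ij}$ out of $\partial_Q$ in your main case.
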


These propositions  say that each partial derivative essentially brings up a factor $1/\sqrt{n}$. Indeed, in view of the definition of $w_{S,i}^{[k]}(A)$, although its partial derivatives involve a huge number of multiplications of the entries $a_{ij}/\sqrt{n}$, it turns out that due to the independence of the entries $a_{ij}$ for $i<j$, it can be shown that the total error  introduced by these multiplications is negligible resulting in the desired bounds. Notably similar inequalities were also established in \cite{ChenLam20} in the setting that the entries are independent and match the first and second moments of those of a standard Gaussian random variable.
\subsection{Proof of Proposition \ref{lem1}}

Before turning to the proof of Proposition \ref{lem1}, we prepare two lemmas. Let $r\in [n]$ and $a=(a_1,\ldots,a_r)$ be i.i.d. standard Gaussian random variables.  Let $$
F_1(x),\ldots, F_r(x):\mathbb{R}^r\to \mathbb{R}\,\,\mbox{for $x=(x_1,\ldots,x_r)$}$$
be random smooth functions, whose randomness are independent of $a$. For any $m\geq 0,$ denote by $P$, a multiset of elements from $\{1,\ldots,m\}$  and by $|P|$, the number of elements in $P$ counting multiplicities.  Denote by $\partial_PF_i$ the partial derivatives of $F_i$ with respect to the variables $x_j$ for $j\in P$ counting multiplicities. 

\begin{lemma}
	\label{lem2} Assume that for any $m\geq 0$ and even $p\geq 2$, there exists a constant $K_{m,p}>0$ such that
	\begin{align*}
	\sup_{j\in [r],|P|=m}\bigl(\e |\partial_PF_{j}(a)|^{p}\bigr)^{1/p}&\leq \frac{K_{m,p}}{n^{m/2}},\,\,\forall n\geq r.
	\end{align*}
	Then for any $m\geq 0$ and any even integer $p\geq 2,$ there exists a constant $K_{m,p}'>0$ independent of $n$ such that
	\begin{align}\label{lem2:eq1}
	\sup_{|P|=m}\Bigl(\e \Bigl|\frac{1}{\sqrt{n}}\sum_{j=1}^ra_{ij}\partial_PF_j(a)\Bigr|^p\Bigr)^{1/p}&\leq \frac{K_{m,p}'}{n^{m/2}},\,\,\forall n\geq r.
	\end{align}
\end{lemma}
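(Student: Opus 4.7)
The strategy is Gaussian integration by parts combined with a diagrammatic/combinatorial expansion. By Lyapunov's inequality it is enough to treat even $p \geq 2$, so the goal becomes to bound
\[
\mathbb{E}\bigl[X_P^p\bigr] = \frac{1}{n^{p/2}} \sum_{(j_1,\ldots,j_p) \in [r]^p} \mathbb{E}\Bigl[a_{j_1}\cdots a_{j_p} \prod_{\ell=1}^p \partial_P F_{j_\ell}(a)\Bigr],
\]
where $X_P := \frac{1}{\sqrt n} \sum_{j=1}^r a_j \partial_P F_j(a)$. First I would condition on the (independent) randomness of the $F_j$'s and apply Gaussian integration by parts iteratively to the $a_{j_k}$'s. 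At each step, $a_{j_k}$ is either paired with another $a_{j_l}$, producing a Kronecker delta $\delta_{j_k j_l}$ and consuming two $a$'s, or it is turned into a partial derivative $\partial_{a_{j_k}}$ acting on one of the remaining $p$ factors $\partial_P F_{j_\ell}(a)$, consuming one $a$ and raising that factor's derivative order from $m$ to $m+1$.

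The recursion terminates after at most $p$ steps, leaving a finite ($p$-dependent) sum of diagrams $D$, each specified by a number $s \leq p/2$ of pairings and a distribution $(k_1,\ldots,k_p)$ of the $p-2s$ residual derivatives among the factors, so that the $\ell$-th factor becomes $\partial_{Q_\ell} F_{j_\ell}$ with $|Q_\ell| = m+k_\ell$. Then I would apply Hölder's inequality (with $p$ factors at $L^p$) to each diagram together with the hypothesis to get
\[
\mathbb{E}\Bigl|\prod_{\ell=1}^p \partial_{Q_\ell} F_{j_\ell}(a)\Bigr| \leq \prod_{\ell=1}^p \frac{K_{m+k_\ell,\,p}}{n^{(m+k_\ell)/2}}.
\]
The $s$ Kronecker deltas reduce the free indices from $p$ to $p-s$, and since $r \leq n$ the index sum contributes a factor at most $n^{p-s}$. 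Collecting powers of $n$ within a single diagram yields
\[
n^{-p/2}\cdot n^{p-s}\cdot n^{-(mp + (p-2s))/2} \;=\; n^{-mp/2},
\]
\emph{uniformly in the diagram}. Summing over the finitely many diagrams and taking $p$-th roots produces the claimed bound $\|X_P\|_p \leq K_{m,p}'/n^{m/2}$.

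The main obstacle is purely bookkeeping. Each round of integration by parts introduces new derivatives on some $\partial_P F_{j_\ell}$, and those derivatives may themselves be further differentiated in later rounds; one must be sure that (i) the recursion really terminates after $p$ steps, (ii) the diagram count is a constant depending only on $p$, and (iii) the total derivative order $\sum_\ell k_\ell$ is exactly $p - 2s$. The last identity is what makes the powers of $n$ cancel uniformly: each paired $a$ contributes one extra free index summed over $[r]$, and each differentiated $a$ contributes one extra derivative which is absorbed by the hypothesis into an extra $n^{-1/2}$, and these two effects combine to give the same $n^{-mp/2}$ for every diagram. Everything downstream is standard moment estimation, so once the expansion is organized cleanly the bound falls out.
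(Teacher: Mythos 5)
Your proposal is correct, and the power counting $n^{-p/2}\cdot n^{p-s}\cdot n^{-(mp+p-2s)/2}=n^{-mp/2}$ is exactly right; the hypothesis (which holds for every order $m$ and every even $p$) indeed absorbs the derivatives created along the way, and generalized H\"older with $p$ factors in $L^p$ is legitimate here. The route differs from the paper's in its combinatorial organization. You run the Gaussian integration by parts to exhaustion, producing a full Wick-type diagram expansion in which every factor $a_{j_k}$ is either paired (a Kronecker delta) or converted into one extra derivative on some $F_{j_\ell}$; this forces you to verify the bookkeeping you flag (termination, a $p$-dependent diagram count, and $\sum_\ell k_\ell=p-2s$), but it yields the clean bound $n^{-mp/2}$ uniformly over diagrams. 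The paper instead avoids iterating: it classifies index tuples by the number $d$ of indices appearing exactly once, integrates by parts only those $d$ singleton Gaussians in a single pass, bounds the leftover product of repeated Gaussians crudely by Cauchy--Schwarz (contributing only a constant $(\mathbb{E}|z|^{2p})^{1/2}$), and then uses the count $|\mathcal{I}_d|\leq C_{d,p}\,n^{d+\lfloor (p-d)/2\rfloor}$, so that each class contributes at most $n^{-pm/2}$ up to a constant. The two parametrizations (your number of pairings $s$ versus the paper's number of singletons $d$) encode the same cancellation --- each extra derivative costs $n^{-1/2}$ by hypothesis, each index repetition removes a free summation index --- but the paper's version trades some sharpness in the repeated-index factor for not having to control repeated differentiation of already-differentiated factors, which keeps the argument a one-step integration by parts plus a product-rule estimate. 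Either way the conclusion and constants' structure are the same, so your argument would serve as a valid alternative proof.
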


\begin{proof} Let $p\geq 2$ be even. Let $m\geq 0$ and $P$ with $|P|=m$ be fixed. Write
	\begin{align*}
	\e \Bigl|\frac{1}{\sqrt{n}}\sum_{j=1}^ra_{j}\partial_PF_j(a)\Bigr|^p&=\frac{1}{n^{p/2}}\sum_{j_1,\ldots,j_p\in [r]}\e \bigl[a_{j_1}\cdots a_{j_p}L_{j_1,\ldots,j_p}(a)\bigr],	
	\end{align*}
	where 
	$$
	L_{j_1,\ldots,j_p}(a)=\prod_{s=1}^p \partial_PF_{j_s}(a).
	$$
	For $0\leq d\leq p,$ let $\mathcal{I}_d$ be the collection of all $(j_1,\ldots,j_p)\in [r]^p$ so that there are exactly $d$ indices in this vector that appear once in the list. Note that there exists a constant $C_{d,p}>0$ such that
	\begin{align}\label{lem3:proof:eq1}
	|\mathcal{I}_d|&\leq C_{d,p} n^d\cdot n^{\lfloor (p-d)/2\rfloor} ,
	\end{align}
	where $\lfloor t\rfloor$ is the largest integer less than or equal to $t.$
	Now we control  $\e \bigl[a_{j_1}\cdots a_{j_r}L_{j_1,\ldots,j_p}(a)\bigr].$ For any $(j_1,\ldots,j_p)\in \mathcal{I}_d$, if $j_1',\ldots,j_d'$ are those indices that appear once in $(j_1,\ldots,j_p)$, then from the Gaussian integration by parts, we have that
	\begin{align*}
	\e \bigl[a_{j_1}\cdots a_{j_r}L_{j_1,\ldots,j_p}(a)\bigr]&=\e \Bigl(\prod_{j\neq j_1',\ldots,j_d'}a_j \Bigr)\partial_{x_{j_1'}}\cdots \partial_{x_{j_d}'}L_{j_1,\ldots,j_p}(a)\\
	&\leq \Bigl(\e \Bigl(\prod_{j\neq j_1',\ldots,j_d'}a_j \Bigr)^2\Bigr)^{1/2}\e\bigl[\bigl|\partial_{x_{j_1'}}\cdots \partial_{x_{j_d}'}L_{j_1,\ldots,j_p}(a)\bigr|^2\bigr]^{1/2}.
	\end{align*}
	Here the first term in the last line is bounded above by $\bigl(\e |z|^{2p}\bigr)^{1/2}$. As for the second term, using the product rule, we readily write
	\begin{align*}
	\partial_{x_{j_1'}}\cdots\partial_{x_{j_d}'}L_{j_1,\ldots,j_p}(a)&=\sum \partial_{P_1}\bigl(\partial_PF_{j_1}(a)\bigr)\cdots \partial_{P_p}\bigl(\partial_PF_{j_p}(a)\bigr),
	\end{align*}
	where the sum is over all disjoint $P_1,\ldots,P_p$ with $\cup_{s=1}^pP_s=\{j_1',\ldots,j_d'\}.$ From the given assumption,
	\begin{align*}
	\bigl(	\e\bigl|\partial_{x_{j_1'}}\cdots\partial_{x_{j_d}'}L_{j_1,\ldots,j_p}(a)\bigr|^2\bigr)^{1/2}&\leq \sum \bigl(\e\bigl|\partial_{P_1}\bigl(\partial_PF_{j_1}(a)\bigr)\cdots \partial_{P_p}\bigl(\partial_PF_{j_p}(a)\bigr)\bigr|^2\bigr)^{1/2}\\
	&\leq \sum \prod_{s=1}^p\bigl(\e\bigl|\partial_{P_s}\bigl(\partial_PF_{j_s}(a)\bigr)\bigr|^{2p}\bigr)^{1/2p}\\
	&\leq p^d\prod_{s=1}^p \frac{\max_{0\leq r\leq d}K_{r+m,2p}}{n^{(|P_s|+m)/2}}\\
	&=\frac{1}{n^{(d+pm)/2}}p^d\bigl(\max_{0\leq r\leq d}K_{r+m,2p}\bigr)^p.
	\end{align*}
	Using this and \eqref{lem3:proof:eq1}, our proof is completed since
	\begin{align*}
	&\e \Bigl|\frac{1}{\sqrt{n}}\sum_{j=1}^ra_{j}\partial_PF_j(a)\Bigr|^p\\
	&\leq \frac{1}{n^{p/2}}\cdot \sum_{d=0}^pC_{d,p}n^d\cdot n^{\lfloor (p-d)/2\rfloor}\cdot \frac{1}{n^{(d+pm)/2}}p^d\bigl(\max_{0\leq r\leq d}K_{r+m,2p}\bigr)^p\cdot\bigl(\e|z|^{2p}\bigr)^{1/2}\\
	&=\frac{\bigl(\e|z|^{2p}\bigr)^{1/2}}{n^{pm/2}} \sum_{d=0}^p\frac{1}{n^{(p-d)/{2}-\lfloor (p-d)/{2}\rfloor}}C_{d,p}p^d\bigl(\max_{0\leq r\leq d}K_{r+m,2p}\bigr)^p\\
	&\leq \frac{1}{n^{pm/2}} K_{m,p}',
	\end{align*}
	where 
	$$
	K_{m,p}':=\bigl(\e|z|^{2p}\bigr)^{1/2}\sum_{d=0}^pC_{d,p}p^d\bigl(\max_{0\leq r\leq d}K_{r+m,2p}\bigr)^p.
	$$
\end{proof}

The proof of Proposition \ref{lem1} is argued as follows. First of all, note that \eqref{lem1:eq2} follows from \eqref{lem1:eq1} by applying the chain rule and the H\"older inequality. To show \eqref{lem1:eq1}, we argue by induction over $k.$ Obviously \eqref{lem1:eq1} holds for $k=0$.
Assume that there exists some $k_0\geq 0$ such that the assertion is valid for all $0\leq k\leq k_0$, $m\geq 0,$ and $p\geq 1$. We need to show that  \eqref{lem1:eq1} is valid for $k=k_0+1$ and all $m\geq 0,$ and $p\geq 1$.  Let $m\geq 0$ and $p\geq 1.$ For $n\geq k_0+2$, fix $(P,S,i)\in \mathcal{B}_{k_0+1,n}(m)$. Recall that
$$
w_{S,i}^{[k_0+1]}(A)=\frac{1}{\sqrt{n}}\sum_{j\notin S\cup\{i\}}a_{ij}f_{k_{0}}\bigl(w_{S\cup\{i\},j}^{[k_0]}(A)\bigr). 
$$
Set 
$$
v_{S\cup\{i\},j}(A)=f_{k_{0}}\bigl(w_{S\cup\{i\},j}^{[k_0]}(A)\bigr).
$$
Write $P=\{(i_1,j_1),\ldots,(i_m,j_m)\}$. Note that $A_n$ is symmetric. A straightforward computation yields that
\begin{align}
\nonumber &\partial_Pw_{S,i}^{[k_0+1]}(A)\\
\begin{split}\label{lem1:proof:eq1}
&=\frac{1}{\sqrt{n}}\sum_{r=1}^m\sum_{j\notin S\cup\{i\}}\bigl(\delta_{i,i_r}\delta_{j,j_r}\partial_{P\setminus\{(i_r,j_r)\}}v_{S\cup\{i\},j_r}(A)+\delta_{j,i_r}\delta_{i,j_r}\partial_{P\setminus\{(i_r,j_r)\}}v_{S\cup\{i\},i_r}(A)\bigr)
\end{split}\\
\begin{split}\label{lem1:proof:eq2}
&+\frac{1}{\sqrt{n}}\sum_{j\notin S\cup\{i\}}a_{ij}\partial_{P}v_{S\cup\{i\},j}(A),
\end{split}
\end{align}
where $\delta_{i,i'}=1$ if $i=i'$ and $=0$ otherwise. 
Note that here for all ${j\notin S\cup\{i\}},$
\begin{align*}
&\bigl(\delta_{i,i_r}\delta_{j,j_r}\partial_{P\setminus\{(i_r,j_r)\}}v_{S\cup\{i\},j_r}(A)+\delta_{j,i_r}\delta_{i,j_r}\partial_{P\setminus\{(i_r,j_r)\}}v_{S\cup\{i\},i_r}(A)\bigr)\\
&=\left\{
\begin{array}{ll}
0,&\mbox{if $\delta_{i,i_r}\delta_{j,j_r}=0=\delta_{j,i_r}\delta_{i,j_r}$},\\
\partial_{P\setminus\{(i_r,j_r)\}}v_{S\cup\{i\},j_r}(A),&\mbox{if $\delta_{i,i_r}\delta_{j,j_r}=1$ and $\delta_{j,i_r}\delta_{i,j_r}=0$},\\
\partial_{P\setminus\{(i_r,j_r)\}}v_{S\cup\{i\},i_r}(A),&\mbox{if $\delta_{i,i_r}\delta_{j,j_r}=0$ and $\delta_{j,i_r}\delta_{i,j_r}=1.$}
\end{array}\right.
\end{align*}
To bound each term in \eqref{lem1:proof:eq1} and \eqref{lem1:proof:eq2}, note that from the validity of \eqref{lem1:eq1}  with $k=k_0$, by using chain rule and the H\"older inequality, for any $m\geq 0$ and $p\geq 1,$ there exists a constant $K_{m,p}>0$ independent of $S$ and $i$ such that
\begin{align}\label{add:eq--5}
\sup_{j\notin S\cup\{i\},|P|=m}\Bigl(\e\bigl|\partial_Pv_{S\cup\{i\},j}(A)\bigr|^{p}\Bigr)^{1/p}\leq \frac{K_{m,p}}{n^{m/2}},\,\,\forall n\geq k_0+2.
\end{align}
Consequently, \eqref{lem1:proof:eq1} is bounded above by
\begin{align}\label{lem1:proof:eq3}
\frac{m}{n^{1/2}}\cdot\frac{2K_{m-1,p}}{n^{(m-1)/2}}=\frac{2mK_{m-1,p}}{n^{m/2}},\,\,\forall n\geq k_0+2
\end{align} To handle \eqref{lem1:proof:eq2}, set
\begin{align*}
F_j(A)&=v_{S\cup\{i\},j}(A),\,\,j\notin S\cup\{i\}.
\end{align*}
Note that these functions satisfy the assumption in Lemma \ref{lem2} due to \eqref{add:eq--5}. By applying to \eqref{lem2:eq1} for the $2p$-norm, there exists a constant $K_{m,2p}'>0$ independent of $n$
such that
\begin{align*}
\Bigl(\e\Bigl|\frac{1}{\sqrt{n}}\sum_{j\notin S\cup\{i\}}a_{ij}\partial_PF_j(A)\Bigr|^{2p}\Bigr)^{1/2p}
&\leq \frac{K_{m,2p}'}{n^{m/2}},\,\,\forall n\geq k_0+2.
\end{align*}
Note that this bound is uniformly valid  over all $(P,S,i)\in \mathcal{B}_{k_0+1,n}(m)$. From Jensen's inequality,
\begin{align*}
\sup_{\mathcal{B}_{k_0+1,n}(m)}{\Bigl(\e\Bigl|\frac{1}{\sqrt{n}}\sum_{j\notin S\cup\{i\}}a_{ij}\partial_PF_j(A)\Bigr|^p\Bigr)^{1/p}}&\leq \frac{K_{m,2p}'}{n^{m/2}},\,\,\forall n\geq k_0+2.
\end{align*}
Plugging this and \eqref{lem1:proof:eq3} into \eqref{lem1:proof:eq1} and \eqref{lem1:proof:eq2} and  applying the Minkowski inequality, we obtain that for  all $m\geq 0$ and $p\geq 1,$
\begin{align*}
\sup_{(P,S,i)\in \mathcal{B}_{k_0+1,n}(m)}\bigl(\e\bigl|\partial_Pw_{S,i}^{[k_0+1]}(A)\bigr|^p\bigr)^{1/p}\leq \frac{2mK_{m-1,p}+K_{m,2p}'}{n^{m/2}},\,\,\forall n\geq k_0+2,
\end{align*} 
which implies that \eqref{lem1:eq1} holds for $k=k_0+1$ and this completes the proof of \eqref{lem1:eq1}.

\subsection{Proof of Proposition \ref{prop3}}

Since $\zeta$ has bounded derivatives of all orders, by the virtue of  the chain rule, it suffices to show that for any $m\geq 0$ and $p\geq 1,$ there exists a constant $C>0$ such that
\begin{align}\label{add:eq13}
\sup\Bigl(\e\Bigl|\partial_P\Bigl(\frac{1}{\sqrt{n}}\sum_{j\neq i,i'}a_{ij}f_{k}\bigl(w_{\{i\},j}^{[k]}(A)\bigr)\Bigr)\Bigr|^p\Bigr)^{1/p}\leq \frac{C}{n^{m/2}},\,\,\forall n\geq k+1,
\end{align}
where the supremum is taken over all $P$, sets of elements in $\{(i,j):1\leq i<j\leq n\}$, with $|P|=m$ counting multiplicities and $i,i'\in [n]$ with $i\neq i'.$ To prove this, in a similar manner as \eqref{lem1:proof:eq1} and \eqref{lem1:proof:eq2}, we readily compute that for $P=\{(i_1,j_1),\ldots,(i_m,j_m)\}$,
\begin{align}
\begin{split}\notag
\partial_P\Bigl(\frac{1}{\sqrt{n}}\sum_{j\neq i,i'}a_{ij}f_k\bigl(w_{\{i\},j}^{[k]}(A)\bigr)\Bigr)
\end{split}\\
\begin{split}\label{lem1:proof:eq1'}
&=\frac{1}{\sqrt{n}}\sum_{r=1}^m\sum_{j\neq i,i'}\Bigl(\delta_{i,i_r}\delta_{j,j_r}\partial_{P\setminus\{(i_r,j_r)\}}\bigl(f_k\bigl(w_{\{i\},j_r}^{[k]}(A)\bigr)\bigr)+\delta_{j,i_r}\delta_{i,j_r}\partial_{P\setminus\{(i_r,j_r)\}}
\bigl(f_k\bigl(w_{\{i\},i_r}^{[k]}(A)\bigr)\bigr)\Bigr)
\end{split}\\
\begin{split}\label{lem1:proof:eq2'}
&+\frac{1}{\sqrt{n}}\sum_{j\neq i,i'}a_{ij}\partial_{P}\bigl(f_k\bigl(w_{\{i\},j}^{[k]}(A)\bigr)\bigr).
\end{split}
\end{align}
Here, using \eqref{lem1:eq2}, the $p$-th moment of \eqref{lem1:proof:eq1'} is bounded above by
\begin{align}\label{add:eq12}
\frac{1}{\sqrt{n}}\sum_{r=1}^m\sup_{(P,S,i)\in\mathcal{B}_{k_0,n}(m-1)}\Bigl(\e\Bigl|\partial_{P}\bigl(f_k\bigl(w_{S,i}^{[k]}(A)\bigr)\bigr)\Bigr|^p\Bigr)^{1/p}\leq \frac{C_0}{n^{m/2}},\,\,\forall n\geq k+1
\end{align}
for some constant $C_0>0.$
As for \eqref{lem1:proof:eq2'}, we write
\begin{align*}
\frac{1}{\sqrt{n}}\sum_{j\neq i,i'}a_{ij}\partial_{P}\bigl(f_k\bigl(w_{\{i\},j}^{[k]}(A)\bigr)\bigr)=\frac{1}{\sqrt{n}}\sum_{j\neq i}a_{ij}\partial_{P}\bigl(f_k\bigl(w_{\{i\},j}^{[k]}(A)\bigr)\bigr)-\frac{1}{\sqrt{n}}a_{ii'}\partial_{P}\bigl(f_k\bigl(w_{\{i\},j}^{[k]}(A)\bigr)\bigr)
\end{align*}
and use the Minkowski, Jensen, and Cauchy-Schwarz inequalities to get
\begin{align*}
&\Bigl(\e\Bigl|\frac{1}{\sqrt{n}}\sum_{j\neq i,i'}a_{ij}\partial_{P}\bigl(f_k\bigl(w_{\{i\},j}^{[k]}(A)\bigr)\bigr)\Bigr|^p\Bigr)^{1/p}\\
&\leq \Bigl(\e\Bigl|\frac{1}{\sqrt{n}}\sum_{j\neq i}a_{ij}\partial_{P}\bigl(f_k\bigl(w_{\{i\},j}^{[k]}(A)\bigr)\bigr)\Bigr|^{2p}\Bigr)^{1/2p}+\frac{1}{\sqrt{n}}\bigl(\e|a_{ii'}|^{2p}\bigr)^{1/2p}\bigl(\e\bigl|\partial_{P}\bigl(f_k\bigl(w_{\{i\},i'}^{[k]}(A)\bigr)\bigr)\bigr|^{2p}\bigr)^{1/2p}.
\end{align*}
Here, from \eqref{lem1:eq2}, the second term is bounded above by $C_1/n^{(m+1)/2}.$ Using \eqref{lem1:eq2} again and Lemma~\ref{lem2} for the $2p$-norm, the first term is bounded above by $C_2/n^{m/2}.$ Note that $C_1,C_2 > 0$ are universal constants independent of $n\geq  k_0+1$ and $P$ with $|P|=m$, and $i,i'\in[n]$ with $i\neq i'.$
Combining these together, the $p$-th moment of \eqref{lem1:proof:eq2'} is bounded by $(C_1+C_2)/n^{m/2}$.  This and \eqref{add:eq12} complete  the proof of \eqref{add:eq13}.

\section{Proof of Theorem \ref{thm1}}\label{sec5}

Our proof is based induction argument on $k.$ Before we start the proof, we set up some notations. 

\begin{notation}\label{notation2}\rm For any $x\in \mathbb{R}^n$ and $B$ an $n\times n$ matrix, denote the $2$-to-$2$ operator norm of $B$ by $
	\|B\|=\sup_{\|x\|=1}\|Bx\|.
	$
	For any $n\geq 1,$ let $u^n=(u_i^n)_{i\in[n]}$ and $v^n=(v_i^n)_{i\in [n]}$ be two sequences of random variables and $S_n\subset [n]$, we say that 
	$
	u_{i}^n\asymp_2 v^n_i
	$ for all $i\in S_n$
	if there exists a constant $C>0$ such that all sufficiently large $n,$
	\begin{align*}
	\sup_{i\in S_n}\e \bigl|u_i^n-v_i^n\bigr|^2\leq \frac{C}{n}.
	\end{align*}
	In addition, we say that $u^n\asymp_2 v^n$ if there exists a constant $C>0$ such that for all sufficiently large $n,$ $u_i^n\asymp_2 v_i^n$ for all $i\in [n].$
	For notational convenience, whenever there is no ambiguity, we will ignore the dependence on $n$ in these definitions.
\end{notation}

\subsection{An example}\label{sec:example}

To facilitate our proof, we argue that $w^{[2]}\asymp_2 u^{[2]}$ in this subsection. Note that $a_{ii}=0.$ Recall 
\begin{align}\label{add:eq--1}
u_i^{[2]}&=\frac{1}{\sqrt{n}}\sum_{j=1}^n a_{ij}f_1(u_j^{[1]})-\Bigl(\frac{1}{n}\sum_{j=1}^n f_1'(u_j^{[1]})\Bigr)f_0(u_i^{[0]}),\,\,i\in[n].
\end{align}
Fix $i\in [n]$. For each $j\in [n]$ with $j\neq i$, write
\begin{align*}
u_j^{[1]}&=\frac{1}{\sqrt{n}}\sum_{l\neq j}a_{jl}f_0(u_l^{[0]})=\frac{1}{\sqrt{n}}\sum_{l\neq i,j}a_{jl}f_0(u_l^{[0]})+\frac{a_{ij}}{\sqrt{n}}f_0(u_i^{[0]}).
\end{align*}
From this, we can use the Taylor expansion to get that
\begin{align}\label{add:eq--3}
f_1(u_j^{[1]})&=f_1\Bigl(\frac{1}{\sqrt{n}}\sum_{l\neq i,j}a_{jl}f_0(u_l^{[0]})\Bigr)+\frac{a_{ij}}{\sqrt{n}}f_1'\Bigl(\frac{1}{\sqrt{n}}\sum_{l\neq i,j}a_{jl}f_0(u_l^{[0]})\Bigr)f_0(u_i^{[0]})+\frac{O(a_{ij}^2)}{n}.
\end{align}
It follows that
\begin{align}
\begin{split}\notag
\frac{1}{\sqrt{n}}\sum_{j=1}^n a_{ij}f_1(u_j^{[1]})&\asymp_2 \frac{1}{\sqrt{n}}\sum_{j\neq i}a_{ij}f_1\Bigl(\frac{1}{\sqrt{n}}\sum_{l\neq i,j}a_{jl}f_0(u_l^{[0]})\Bigr)\\
&+\Bigl[\frac{1}{n}\sum_{j\neq i}a_{ij}^2f_1'\Bigl(\frac{1}{\sqrt{n}}\sum_{l\neq i,j}a_{jl}f_0(u_l^{[0]})\Bigr)\Bigr]f_0(u_i^{[0]})
\end{split}\\
\begin{split}\label{add:eq--2}
&= w_i^{[2]}+\Bigl[\frac{1}{n}\sum_{j\neq i}a_{ij}^2f_1'\Bigl(\frac{1}{\sqrt{n}}\sum_{l\neq i,j}a_{jl}f_0(u_l^{[0]})\Bigr)\Bigr]f_0(u_i^{[0]}).
\end{split}
\end{align}
Here, note that for each $i\in [n]$, $\{a_{ij}:j\neq i\}$ is independent of $\{a_{jl}:j\neq i\,\,\mbox{and}\,\, l\neq i,j\}$. This implies that $\{a_{ij}:j\neq i\}$ is independent of 
\begin{align*}
f_1'\Bigl(\frac{1}{\sqrt{n}}\sum_{l\neq i,j}a_{jl}f_0(u_l^{[0]})\Bigr),\,\,\forall j\neq i.
\end{align*}
As a result, using $\e (a_{ij}^2-1)=0$ and $\e (a_{ij}^2-1)^2=2$ yields that
\begin{align*}
\e\Bigl|\frac{1}{n}\sum_{j\neq i}(a_{ij}^2-1)f_1'\Bigl(\frac{1}{\sqrt{n}}\sum_{l\neq i,j}a_{jl}f_0(u_l^{[0]})\Bigr)\Bigr|^2&=\frac{2}{n^2}\sum_{j\neq i}\e\Bigl|f_1'\Bigl(\frac{1}{\sqrt{n}}\sum_{l\neq i,j}a_{jl}f_0(u_l^{[0]})\Bigr)\Bigr|^2\leq \frac{2\|f_1'\|_\infty}{n},
\end{align*}
which means that for all $i\in [n]$,
\begin{align*}
\frac{1}{n}\sum_{j\neq i}a_{ij}^2f_1'\Bigl(\frac{1}{\sqrt{n}}\sum_{l\neq i,j}a_{jl}f_0(u_l^{[0]})\Bigr)&\asymp_2 \frac{1}{n}\sum_{j\neq i}f_1'\Bigl(\frac{1}{\sqrt{n}}\sum_{l\neq i,j}a_{jl}f_0(u_l^{[0]})\Bigr)\\
&\asymp_2 \frac{1}{n}\sum_{j=1}^nf_1'\Bigl(\frac{1}{\sqrt{n}}\sum_{l=1}^na_{jl}f_0(u_l^{[0]})\Bigr).
\end{align*}
Combining \eqref{add:eq--1} and \eqref{add:eq--2} together yields that
$
u^{[2]}\asymp_2 w^{[2]}.
$

The proof of the general case $u^{[k+1]}\asymp_2 w^{[k+1]}$ consists of three major steps. In the first step, using the Taylor expansion as \eqref{add:eq--3} combining with the the induction hypothesis, it can be shown that the correction can be canceled leading to
\begin{align}\label{add:eq--4}
u_i^{[k+1]}&\asymp_2 \frac{1}{\sqrt{n}}\sum_{j\neq i}a_{ij}f_k\Bigl(\frac{1}{\sqrt{n}}\sum_{l\neq i,j}a_{jl}f_{k-1}\bigl(w_{\{j\},l}^{[k-1]}\bigr)\Bigr),\,\,\forall i\in [n].
\end{align}
To complete the proof, it remains to show that the right-hand side is asymptotically $w_i^{[k+1]}.$ The real difficult here is that one has to delete the $i$-th row and column of $A_n$ from $w_{\{j\},l}^{[k-1]}$. Although it is known that $w_{\{j\},l}^{[k-1]}\asymp_2w_{\{i,j\},l}^{[k-1]}$ from Proposition \ref{lem3}, we can not simply replace $w_{\{j\},l}^{[k-1]}$ by $w_{\{i,j\},l}^{[k-1]}$ since the double linear summations in \eqref{add:eq--4} can possibly amplify the accumulated error between them. Fortunately since our iteration adapts self-avoiding paths, the total error remains controllable by a subtle second moment estimate between the right-hand side of \eqref{add:eq--4} and $w^{[k+1]}$, which will be carried out in our second and third steps.

We now perform our main proof in three major steps. For convenience, $C,C_0,C_1,\ldots, C',C'',\ldots $ are universal (positive) constants that do not depend on any $n$ and $i\in [n]$ and they might mean different constants from line to line.

\subsection{Step I: Cancellation of the correction term}

Obviously the assertion holds when $k=0.$ Assume that  it is valid up to some $k\geq 0.$ From \eqref{amp} and the triangle inequality,
\begin{align*}
&\Bigl\|u^{[k+1]}-\frac{1}{\sqrt{n}}A_nf_k(w^{[k]})-\Bigl(\frac{1}{n}\sum_{j=1}^nf_k'(w_j^{[k]})\Bigr)f_{k-1}(w^{[k-1]})\Bigr\|\\
&\leq \frac{1}{\sqrt{n}}\|A_n\|\|f_k(u^{[k]})-f_k(w^{[k]})\|\\
&+M_{k-1}^{(0)}\|f_k'(u^{[k]})-f_k'(w^{[k]})\|\\
&+M_k^{(1)}\|f_{k-1}(u^{[k-1]})-f_{k-1}(w^{[k-1]})\|,
\end{align*}
where $
M_\ell^{(r)}=\|f_\ell^{(r)}\|_\infty.
$
Since $\|A_n\|/\sqrt{n}$ is square-integrable and $f_k',f_{k-1}$ are Lipschitz, the induction hypothesis implies that 
\begin{align*}
u^{[k+1]}\asymp_2\frac{1}{\sqrt{n}}A_nf_k(w^{[k]})-\Bigl(\frac{1}{n}\sum_{j=1}^nf_k'(w_j^{[k]})\Bigr)f_{k-1}(w^{[k-1]}).
\end{align*}
The following lemma is a crucial step, which gets rid of the correction term.

\begin{lemma}\label{lem0} For all $n\geq k+2,$ we have that 
	\begin{align}\label{lem0:eq1}
	u_i^{[k+1]}&\asymp_2 \frac{1}{\sqrt{n}}\sum_{j\neq i}a_{ij}f_k\Bigl(\frac{1}{\sqrt{n}}\sum_{l\neq i,j}a_{jl}f_{k-1}\bigl(w_{\{j\},l}^{[k-1]}\bigr)\Bigr),\,\,\forall i\in [n].
	\end{align}
\end{lemma}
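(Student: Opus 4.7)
The plan is to begin from the representation
\[
u_i^{[k+1]} \asymp_2 \frac{1}{\sqrt{n}}\sum_{j=1}^n a_{ij} f_k\bigl(w_j^{[k]}\bigr) - \Bigl(\frac{1}{n}\sum_{j=1}^n f_k'\bigl(w_j^{[k]}\bigr)\Bigr) f_{k-1}\bigl(w_i^{[k-1]}\bigr)
\]
just established from the AMP iteration and the induction hypothesis, and to show that a Taylor expansion of $f_k$ produces an Onsager-type term which cancels the explicit one, leaving precisely the right-hand side of \eqref{lem0:eq1}. This is the general-$k$ version of the cancellation already exhibited in the worked example of Section~\ref{sec:example}; the additional difficulty now is that $w_j^{[k]}$ carries a nontrivial dependence on the $i$-th row of $A_n$, and Proposition~\ref{lem3} will be the tool that controls it.

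For each $j \neq i$, introduce the ``$i$-deleted'' version
\[
\tilde{w}_j^{[k]} := \frac{1}{\sqrt{n}} \sum_{l \neq i,j} a_{jl}\, f_{k-1}\bigl(w_{\{j\},l}^{[k-1]}\bigr),
\]
so that $w_j^{[k]} = \tilde{w}_j^{[k]} + (a_{ij}/\sqrt{n})\, f_{k-1}(w_{\{j\},i}^{[k-1]})$. Taylor's theorem together with the uniform bounds $\|f_k''\|_\infty, \|f_{k-1}\|_\infty < \infty$ yields
\[
f_k\bigl(w_j^{[k]}\bigr) = f_k\bigl(\tilde{w}_j^{[k]}\bigr) + f_k'\bigl(\tilde{w}_j^{[k]}\bigr) \cdot \frac{a_{ij}}{\sqrt{n}}\, f_{k-1}\bigl(w_{\{j\},i}^{[k-1]}\bigr) + R_j,
\]
with $|R_j| \le C\, a_{ij}^2/n$. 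Multiplying by $a_{ij}/\sqrt{n}$ and summing over $j \neq i$ generates three pieces: the ``main'' piece $\frac{1}{\sqrt{n}}\sum_{j \neq i} a_{ij} f_k(\tilde{w}_j^{[k]})$, which is verbatim the right-hand side of \eqref{lem0:eq1}; a remainder piece $\frac{1}{\sqrt{n}}\sum_{j \neq i} a_{ij} R_j$ whose squared $L^2$-norm is $O(1/n)$ by expanding the square and invoking the sixth-moment bound on $a_{ij}$; and a first-order piece $\frac{1}{n}\sum_{j \neq i} a_{ij}^2\, f_k'(\tilde{w}_j^{[k]})\, f_{k-1}(w_{\{j\},i}^{[k-1]})$, which will be identified with the Onsager correction.

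The heart of the argument is to show that this first-order piece is $\asymp_2$-equivalent to $\bigl(n^{-1}\sum_{j=1}^n f_k'(w_j^{[k]})\bigr) f_{k-1}(w_i^{[k-1]})$ through three successive replacements, each contributing an $L^2$-error of order $1/n$: first, replace $a_{ij}^2$ by its mean $1$, using that $(a_{ij})_{j \neq i}$ is independent of the remaining factors since none of them involves row $i$ of $A_n$; second, replace $\tilde{w}_j^{[k]}$ by $w_j^{[k]}$ through the Lipschitz continuity of $f_k'$ and the $O(1/\sqrt{n})$ $L^2$-distance between the two; third, replace $w_{\{j\},i}^{[k-1]}$ by $w_i^{[k-1]}$ through the Lipschitz continuity of $f_{k-1}$ and Proposition~\ref{lem3}. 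Each replacement is carried out inside an averaged sum, and the error is dominated via Cauchy--Schwarz by the uniform-in-$j$ second-moment bound supplied by the respective input. Once this is done, the factor $f_{k-1}(w_i^{[k-1]})$ comes out of the average and the resulting Onsager-type term cancels exactly the explicit one from the induction hypothesis, yielding \eqref{lem0:eq1}.

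The main obstacle is the uniform-in-$j$ control required in the third replacement: without Proposition~\ref{lem3}, which provides $(\e|w_{\{j\},i}^{[k-1]} - w_i^{[k-1]}|^2)^{1/2} = O(1/\sqrt{n})$ uniformly in $j$, the accumulated $L^2$-error over the average $\frac{1}{n}\sum_{j \neq i}$ could not be kept at the required $1/n$ rate. This uniformity — and more generally the moment controls developed in Section~\ref{sec4} — is what makes the cancellation rigorous rather than merely formal.
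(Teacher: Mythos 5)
Your decomposition via the Taylor expansion in $a_{ij}$, the treatment of the remainder $R_j$, and the second and third replacements all match the paper's argument, and the final cancellation of the Onsager term is assembled the same way. However, the first replacement --- replacing $a_{ij}^2$ by its mean $1$ --- rests on a false independence claim, and this is exactly where the real work of the paper's proof lies. Writing $B_{ij}=f_k'(\tilde w_j^{[k]})$ and $D_{ij}=f_{k-1}\bigl(w_{\{j\},i}^{[k-1]}\bigr)$, it is not true for $k\geq 2$ that the row $(a_{ij})_{j\neq i}$ is independent of the collection $(B_{ij}D_{ij})_{j\neq i}$: the subscript $\{j\}$ deletes only the index $j$, so for instance
\begin{align*}
w_{\{j\},i}^{[k-1]}=\frac{1}{\sqrt{n}}\sum_{l\notin\{i,j\}}a_{il}f_{k-2}\bigl(w_{\{i,j\},l}^{[k-2]}\bigr)
\end{align*}
involves the row-$i$ entries $a_{il}$, $l\neq j$, directly, and $\tilde w_j^{[k]}$ picks them up through the deeper iterates $w_{\{j\},l}^{[k-1]}$. (Your independence claim is correct only in the $k=1$ case, i.e.\ the worked example of Section~\ref{sec:example}, where the innermost layer is $u$ and carries no matrix dependence.) What does hold is only the entrywise independence of $a_{ij}$ from $B_{ij}D_{ij}$. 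Consequently, in $\e\bigl|\frac{1}{n}\sum_{j\neq i}(a_{ij}^2-1)B_{ij}D_{ij}\bigr|^2$ the diagonal terms are harmless, but the roughly $n^2$ cross terms $\e\bigl[(a_{ij}^2-1)(a_{ij'}^2-1)B_{ij}D_{ij}B_{ij'}D_{ij'}\bigr]$ do not vanish, and the trivial bound on them yields only $O(1)$ rather than the required $O(1/n)$.

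The paper closes this gap with a decoupling argument that your proposal is missing: conditionally on all entries other than $a_{ij},a_{ij'}$, the factor $(a_{ij'}^2-1)B_{ij}D_{ij}$ depends only on $a_{ij'}$ and $(a_{ij}^2-1)B_{ij'}D_{ij'}$ only on $a_{ij}$, so each cross term factors into a product of conditional expectations; then, inside $\e_{a_{ij'}}\bigl[(a_{ij'}^2-1)B_{ij}D_{ij}\bigr]$, one replaces $B_{ij}$ and $D_{ij}$ by $f_k'\bigl(w_{\{i,j'\},j}^{[k]}\bigr)$ and $f_{k-1}\bigl(w_{\{j,j'\},i}^{[k-1]}\bigr)$, in which the extra index $j'$ is also deleted. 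These are genuinely independent of $a_{ij'}$, so their contribution vanishes exactly, while the replacement errors are $O(n^{-1/2})$ in $L^2$ by Proposition~\ref{lem3}; hence each cross term is $O(1/n)$ and the whole sum is $O(1/n)$. Without this step (or some substitute for it), your first replacement fails, so the proposal has a genuine gap. Note also that the obstacle you single out as the main one --- uniformity in $j$ in the third replacement --- is in fact the routine part (Cauchy--Schwarz plus Proposition~\ref{lem3} suffices); the delicate point is precisely the cross-term decoupling above.
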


%\begin{remark}
%	\rm Recall that
%	\begin{align*}
%	w_i^{[k+1]}&=\frac{1}{\sqrt{n}}\sum_{j\neq i}a_{ij}f_k\Bigl(\frac{1}{\sqrt{n}}\sum_{l\neq i,j}a_{jl}f_{k-1}\bigl(w_{\{i,j\},l}^{[k-1]}\bigr)\Bigr).
%	\end{align*}
%	From Proposition \ref{lem3}, since $w_{\{j\},l}^{[k-1]}\asymp_2 w_{\{i,j\},l}^{[k-1]}$ for distinct $i,j,l,$ one might feel that the statement $u^{[k+1]}\asymp_2 w^{[k+1]}$ then %follows from Lemma \ref{lem0} directly. However, as   the Gaussian random variables, $a_{ij}$ and $a_{jl}$, at the $(k+1)$-th and $k$-th levels might amplify the errors, we need  %more subtle controls to validate the asymptotics, $u^{[k+1]}\asymp_2 w^{[k+1]}$, which are the main ingredients of Step II and Step III. 
%\end{remark}

\begin{proof}
	%[\bf Proof of Lemma \ref{lem0}]
	For each fixed $i\in [n]$, write by Taylor's expansion with respect to $a_{ij},$
	\begin{align*}
	&f_k(w_j^{[k]})\\
	&=f_k\Bigl(\frac{1}{\sqrt{n}}\sum_{l\neq j}a_{jl}f_{k-1}\bigl(w_{\{j\},l}^{[k-1]}\bigr)\Bigr)\\
	&=f_k\Bigl(\frac{1}{\sqrt{n}}\sum_{l\neq i,j}a_{jl}f_{k-1}\bigl(w_{\{j\},l}^{[k-1]}\bigr)+\frac{a_{ij}}{\sqrt{n}}f_{k-1}\bigl(w_{\{j\},i}^{[k-1]}\bigr)\Bigr)\\
	&=f_k\Bigl(\frac{1}{\sqrt{n}}\sum_{l\neq i,j}a_{jl}f_{k-1}\bigl(w_{\{j\},l}^{[k-1]}\bigr)\Bigr)+\frac{a_{ij}}{\sqrt{n}}f_k'\Bigl(\frac{1}{\sqrt{n}}\sum_{l\neq i,j}a_{jl}f_{k-1}\bigl(w_{\{j\},l}^{[k-1]}\bigr)\Bigr)f_{k-1}\bigl(w_{\{j\},i}^{[k-1]}\bigr)+\frac{O(a_{ij}^2)}{n}.
	\end{align*}
	As a result,
	\begin{align}
	\begin{split}\notag
	u_i^{[k+1]}&\asymp_2 \frac{1}{\sqrt{n}}\sum_{j\neq i}a_{ij}f_k\Bigl(\frac{1}{\sqrt{n}}\sum_{l\neq i,j}a_{jl}f_{k-1}\bigl(w_{\{j\},l}^{[k-1]}\bigr)\Bigr)
	\end{split}\\
	\begin{split}\label{add:eq3}
	&+\frac{1}{n}\sum_{j\neq i}a_{ij}^2B_{ij}D_{ij}-\frac{1}{n}\sum_{j}B_jD_i,\,\,\forall i\in[n],
	\end{split}
	\end{align}
	where
	\begin{align*}
	B_{ij}&=f_k'\Bigl(\frac{1}{\sqrt{n}}\sum_{l\neq i,j}a_{jl}f_{k-1}\bigl(w_{\{j\},l}^{[k-1]}\bigr)\Bigr),\quad D_{ij}=f_{k-1}\bigl(w_{\{j\},i}^{[k-1]}\bigr),\\
	B_j&=f_k'(w_j^{[k]}),\quad D_i=f_{k-1}(w_i^{[k-1]}).
	\end{align*}
	To handle the last two summations, we first claim that
	\begin{align*}
	\sup_{i\in[n]}\e\Bigl|\frac{1}{n}\sum_{j\neq i}(a_{ij}^2-1)B_{ij}D_{ij}\Bigr|^2=O(1/n).
	\end{align*}
   For fixed $i$, write the expectation term as
	\begin{align}\label{add:eq1}
	\frac{1}{n^2}\sum_{j,j'\neq i:j\neq j'}\e \bigl[y_{ij}B_{ij}D_{ij}y_{ij'}B_{ij'}D_{ij'}\bigr]+\frac{1}{n^2}\sum_{j\neq i}\e\bigl[y_{ij}^2B_{ij}^2D_{ij}^2\bigr],
	\end{align}
	where $y_{ij}:=a_{ij}^2-1.$ Here, the second term is of order $O(1/n).$ To control the first term, observe that conditionally on $a_{rr'}$ for $(r,r')\notin \{(i,j),(j,i),(i,j'),(j',i)\}$, $y_{ij'}B_{ij}D_{ij}$ depends only $a_{ij'}=a_{ji'}$ and $y_{ij}B_{ij'}D_{ij'}$ depends only on $a_{ij}=a_{ji}$. It follows that
	\begin{align*}
	\e \bigl[y_{ij}B_{ij}D_{ij}y_{ij'}B_{ij'}D_{ij'}\bigr]&=\e \bigl[\bigl(y_{ij}B_{ij'}D_{ij'}\bigr)\bigl(y_{ij'}B_{ij}D_{ij}\bigr)\bigr]\\
	&=\e\bigl[\e_{a_{ij'}} \bigl[y_{ij}B_{ij'}D_{ij'}\bigr] \e_{a_{ij}}\bigl[y_{ij'}B_{ij}D_{ij}\bigr]\bigr],
	\end{align*}
	where $\e_{a_{ij}}$ is the expectation for $a_{ij}$ and $\e_{a_{ij'}}$ is the expectation for $a_{ij'}$.
	Now using the mean value theorem and Proposition \ref{lem3},
	\begin{align}
	\begin{split}\label{add:eq5}
	B_{ij}&\asymp_2 B_j\asymp_2 f_k'\bigl(w_{\{i,j'\},j}^{[k]}\bigr)=:B_{\{i,j'\},j},\\
	D_{ij}&\asymp_2 f_{k-1}\bigl(w_{\{j,j'\},i}^{[k-1]}\bigr)=:D_{\{j,j'\},i}.
	\end{split}
	\end{align}
	Write
	\begin{align*}
	\e_{a_{ij'}}\bigl[y_{ij'}B_{ij}D_{ij}\bigr]&=\e_{a_{ij'}}\bigl[y_{ij'}\bigl(B_{ij}-B_{\{i,j'\},j}\bigr)\bigl(D_{ij}-D_{\{j,j'\},i}\bigr)\bigr]\\
	&+\e_{a_{ij'}} \bigl[y_{ij'}\bigl(B_{ij}-B_{\{i,j'\},j}\bigr)D_{\{j,j'\},i}\bigr]\\
	&+\e_{a_{ij'}}\bigl[y_{ij'}B_{\{i,j'\},j}\bigl(D_{ij}-D_{\{j,j'\},i}\bigr)\bigr]\\
	&+\e_{a_{ij'}} \bigl[y_{ij'}B_{\{i,j'\},j} D_{\{j,j'\},i}\bigr].
	\end{align*}
	Note that $B_{\{i,j'\},j}$ and $D_{\{j,j'\},i}$ are both independent of $a_{ij'}$ so that  $	\e_{a_{ij'}}\bigl[y_{ij'}B_{\{i,j'\},j} D_{\{j,j'\},i}\bigr]=0.$
	Consequently, from the Cauchy-Schwarz inequality and \eqref{add:eq5}, there exists a constant $C_0>0$ such that
	\begin{align*}
	\bigl(\e\bigl(\e_{a_{ij'}}(y_{ij'}B_{ij}D_{ij})\bigr)^2\bigr)^{1/2}&\leq \frac{C_0}{\sqrt{n}}.
	\end{align*}
	The same inequality is also valid for $\bigl(\e\bigl(\e_{a_{ij}}(y_{ij}B_{ij'}D_{ij'})\bigr)^2\bigr)^{1/2}.$ Using the Cauchy-Schwarz inequality to the first summation of \eqref{add:eq1} completes the proof of our claim.
	
	Next, by the virtue of the above claim, we have
	\begin{align}\label{add:eq2}
	\frac{1}{n}\sum_{j\neq i}a_{ij}^2B_{ij}D_{ij}&\asymp_2 \frac{1}{n}\sum_{j\neq i}B_{ij}D_{ij}.
	\end{align}
	Write
	\begin{align*}
	\frac{1}{n}\sum_{j\neq i}\bigl(B_{ij}D_{ij}-B_jD_i\bigr)
	&=\frac{1}{n}\sum_{j\neq i}(B_{ij}-B_j)D_{ij}+\frac{1}{n}\sum_{j\neq i}(D_{ij}-D_i)B_j.
	\end{align*}
	Here since
	\begin{align*}
	\bigl|B_{ij}-B_j\bigr|&\leq \frac{C_1|a_{ij}|}{\sqrt{n}},
	\end{align*}
	it follows that
	\begin{align*}
	\e\Bigl|\frac{1}{n}\sum_{j\neq i}(B_{ij}-B_j)D_{ij}\Bigr|^2\leq \frac{C_2}{n}.
	\end{align*}
	On the other hand, by Proposition \ref{lem3},
	\begin{align*}
	\e\Bigl|\frac{1}{n}\sum_{j\neq i}(D_{ij}-D_i)B_j\Bigr|^2&\leq \frac{C_3}{n}.
	\end{align*}
	Putting these together yields that
	\begin{align*}
	\frac{1}{n}\sum_{j\neq i}\bigl(B_{ij}D_{ij}-B_jD_i\bigr)\asymp_2 0.
	\end{align*}
	From this and \eqref{add:eq2},
	\begin{align*}
	\frac{1}{n}\sum_{j\neq i}a_{ij}^2B_{ij}D_{ij}\asymp_2\frac{1}{n}\sum_{j\neq i}B_{j}D_{i}\asymp_2 \frac{1}{n}\sum_{j}B_{j}D_{i}.
	\end{align*}
	Hence, the last two summations in \eqref{add:eq3} cancels each other so that \eqref{lem0:eq1} follows.
\end{proof}

From Lemma \ref{lem0}, our proof of Theorem \ref{thm1} is complete if we can show that for all $i\in [n],$
\begin{align*}
\frac{1}{\sqrt{n}}\sum_{j\neq i}a_{ij}f_k\Bigl(\frac{1}{\sqrt{n}}\sum_{l\neq i,j}a_{jl}f_{k-1}\bigl(w_{\{j\},l}^{[k-1]}\bigr)\Bigr)\asymp_2 \frac{1}{\sqrt{n}}\sum_{j\neq i}a_{ij}f_k\Bigl(\frac{1}{\sqrt{n}}\sum_{l\neq i,j}a_{jl}f_{k-1}\bigl(w_{\{i,j\},l}^{[k-1]}\bigr)\Bigr)=w_{i}^{[k+1]}.
\end{align*}
Fix $i\in [n]$. For any $j\neq i,$ set
\begin{align*}
L_j&=f_k\Bigl(\frac{1}{\sqrt{n}}\sum_{l\neq i,j}a_{jl}f_{k-1}\bigl(w_{\{j\},l}^{[k-1]}\bigr)\Bigr),\\
K_{j}&=f_k\Bigl(\frac{1}{\sqrt{n}}\sum_{l\neq i,j}a_{jl}f_{k-1}\bigl(w_{\{i,j\},l}^{[k-1]}\bigr)\Bigr).
\end{align*}
For any two distinct indices $\tau,\iota \in [n]\setminus \{i\}$, if we condition on all $a_{rr'}$'s for $(r,r')\not \in \{(i,\tau), (i,\iota), (\tau, i), (\iota, i)\}$, then $L_{\tau}$ will only depend on $a_{i\iota} = a_{\iota i}$ and $L_{\iota}$ only depends on $a_{i\tau} = a_{\tau i}$. In addition, $(a_{ij})_{j\neq i}$ is independent of $K_{\tau}$ and $K_{\iota}$. It follows that   
\begin{align*}
\e \bigl[a_{i\tau}a_{i\iota} L_{\tau}L_{\iota}\bigr] &= \e \bigl[\e_{a_{i\tau}}\bigl[a_{i\tau}L_\iota\bigr]\e_{a_{i\iota}}\bigl[a_{i\iota}L_\tau\bigr]\bigr],\\
\e \bigl[a_{i\tau}a_{i\iota} L_{\tau}K_{\iota}\bigr] &=\e\bigl[a_{i\tau}\bigr]\e\bigl[a_{i\iota} L_{\tau}K_{\iota}\bigr] =0,\\
\e \bigl[a_{i\tau}a_{i\iota} K_{\tau}K_{\iota}\bigr]  &=\e \bigl[a_{i\tau}a_{i\iota}\bigr]\e\bigl[K_{\tau}K_{\iota}\bigr] =0,
\end{align*}
where recall that $\e_{a_{i\tau}}$ and $\e_{a_{i\iota}}$ are the expectations with respect to $a_{i\tau}$ and $a_{i\iota}$, respectively. From these, 
\begin{align}
\begin{split}\notag
&\e\Bigl|\frac{1}{\sqrt{n}}\sum_{j\neq i}a_{ij}\bigl(L_j-K_j\bigr)\Bigr|^2\\
&=\frac{1}{n}\sum_{\tau,\iota\neq i:\tau\neq \iota}\e \left[a_{i\tau}\bigl(L_{\iota}-K_{\iota}\bigr)a_{i\iota}\bigl(L_\tau-K_\tau\bigr)\right]+\frac{1}{n}\sum_{j\neq i} \e a_{ij}^2\bigl(L_j-K_j\bigr)^2\\
% &=\frac{1}{n}\sum_{\tau,\iota\neq i:\tau\neq \iota}\e \bigl[\e_{a_{i\tau}}\bigl[a_{i\tau}\bigl(L_{\iota}-K_{\iota}\bigr)\bigr] \e_{a_{i\iota}}\bigl[a_{i\iota}\bigl(L_\tau-K_\tau\bigr)\bigr]\bigr]+\frac{1}{n}\sum_{j\neq i} \e a_{ij}^2\bigl(L_j-K_j\bigr)^2
\end{split}\\
\begin{split}\label{add:eq4}
&=\frac{1}{n}\sum_{\tau,\iota\neq i:\tau\neq \iota}\e\bigl[\e_{a_{i\tau}}\bigl[a_{i\tau}L_\iota\bigr]\e_{a_{i\iota}}\bigl[a_{i\iota}L_\tau\bigr]\bigr]+\frac{1}{n}\sum_{j\neq i} \e a_{ij}^2\bigl(L_j-K_j\bigr)^2.
\end{split}
\end{align}
Our next two steps control these two summations.

\subsection{Step II: Diagonal case}

From the mean value theorem, the second summation of \eqref{add:eq4} can be handled by
\begin{align}
\begin{split}\notag
\frac{1}{n}\sum_{j\neq i} \e a_{ij}^2\bigl(L_j-K_j\bigr)^2&=\frac{1}{n}\sum_{j\neq i}\e\bigl(L_j-K_j\bigr)^2\\
&\leq \frac{C}{n}\sum_{j\neq i}\e\Bigl|\frac{1}{\sqrt{n}}\sum_{l\neq i,j}a_{jl}\bigl(f_{k-1}\bigl(w_{\{j\},l}^{[k-1]}\bigr)-f_{k-1}\bigl(w_{\{i,j\},l}^{[k-1]}\bigr)\bigr)\Bigr|^2\\
&=\frac{C}{n^2}\sum_{j\neq i}\sum_{l\neq i,j}\e \bigl|f_{k-1}\bigl(w_{\{j\},l}^{[k-1]}\bigr)-f_{k-1}\bigl(w_{\{i,j\},l}^{[k-1]}\bigr)\bigr|^2\\
&\leq \frac{C'}{n^2}\sum_{j\neq i}\sum_{l\neq i,j}\e \bigl|w_{\{j\},l}^{[k-1]}-w_{\{i,j\},l}^{[k-1]}\bigr|^2
\end{split}\\
\begin{split}\label{add:eq16}
&\leq \frac{C''}{n},
\end{split}
\end{align}
where the second equality used the fact that $(a_{jl})_{l\neq i,j}$ is independent of $(w_{\{j\},l}^{[k-1]})_{l\neq i,j}$ and $(w_{\{i,j\},l}^{[k-1]})_{l\neq i,j}$ and the last inequality used Proposition \ref{lem3}. 

\subsection{Step III: Off-diagonal case}
It remains to show that the first summation of \eqref{add:eq4} is of order $1/n$, which requires more subtle controls of the moments. Fix $i\in[n]$. Let $\tau,\iota\in [n]\setminus\{i\}$ and $\tau\neq \iota.$ First of all, we compute $\e_{a_{i\iota}}[a_{i\iota}L_{\tau}]$ using Gaussian integration by part and the chain rule as follows.  Write $L_{\tau}=f_{k}(\Delta_{\tau})$ for
\begin{align*}
\Delta_\tau&:=\frac{1}{\sqrt{n}}\sum_{\tau_{k-1}\neq i,\tau}a_{\tau \tau_{k-1}}f_{k-1}\bigl(w_{\{\tau \},\tau_{k-1}}^{[k-1]}\bigr).
\end{align*}
Here we would like to call the dummy variable in the summation $\tau_{k-1}$ as its subscript matches  the iteration number. This choice of dummy variable appears to be very convenient later when we need to look back into the $(k-1)$-th, $(k-2)$-th, $\ldots$, iterations. 

Since $\tau\ne \iota$ and $\tau_{k-1}\ne i, \tau$, we see that $a_{\tau\tau_{k-1}} \ne a_{i\iota}$ or $a_{\iota i}$. Applying Gaussian integration by parts yields
\[
\e_{a_{i\iota}}( a_{i\iota}L_{\tau} ) =\frac{1}{\sqrt{n}}\e_{a_{i\iota}} f_{k}'(\Delta_{\tau}) \sum_{\tau_{k-1} \ne i, \tau}a_{\tau \tau_{k-1}}\partial_{a_{i\iota}}f_{k-1}(w_{\{\tau\}, \tau_{k-1}}^{[k-1]}).
\]
In order to compute the partial derivative with respect to $a_{i\iota}$, we proceed by tracking back the iterations until either $a_{i\iota}$ or $a_{\iota i}$ appears at the $r$-th iteration for some $1\le r\le k-1$ (once either appears, neither of them will appear again in $w_{\{\tau,\tau_{k-1},\ldots,\tau_{s}\},\tau_{s-1}}^{[s-1]}$ for all $1\leq s\leq r$ due to the path self-avoiding property). Recall that
\begin{center}
\includegraphics[width=0.77\textwidth]{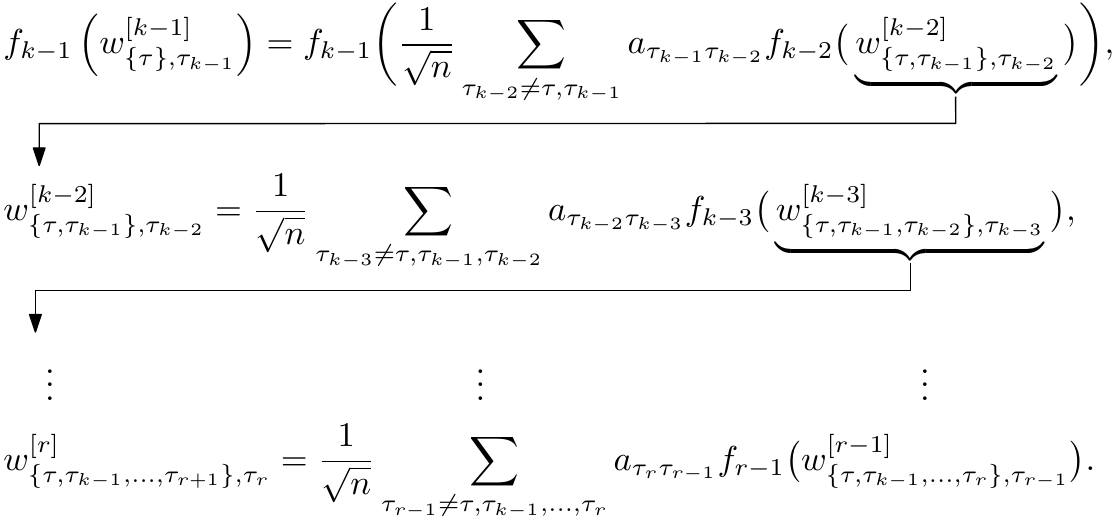}
\end{center}
As long as $(\tau_r,\tau_{r-1})$ equals $(i,\iota)$ or $(\iota,i)$ for the first time for some $1\leq r\leq k-1$, we have that for any $r\le s \le k-1$, 
\begin{align*}
&\partial_{a_{i\iota}}f_{s}\bigl(w_{\{\tau,\tau_{k-1},\ldots,\tau_{s+1}\},\tau_{s}}^{[s]}\bigr)\\
&=\left\{
\begin{array}{ll}
\displaystyle
\frac{1}{\sqrt{n}}f_{s}'\bigl(w_{\{\tau,\tau_{k-1},\ldots,\tau_{s+1}\},\tau_{s}}^{[s]}\bigr)\sum_{\tau_{s-1}}a_{\tau_{s}\tau_{s-1}}
\partial_{a_{i\iota}}f_{s-1}\bigl(w_{\{\tau,\tau_{k-1},\ldots,\tau_{s}\},\tau_{s-1}}^{[s-1]}\bigr), &\text{ if }s>r,\\
\displaystyle
\frac{1}{\sqrt{n}}f_{r}'\bigl(w_{\{\tau,\tau_{k-1},\ldots,\tau_{r+1}\},\tau_{r}}^{[r]}\bigr)f_{r-1}\bigl(w_{\{\tau, \tau_{k-1},\ldots,\tau_{r}\},\tau_{r-1}}^{[r-1]}\bigr),& \text{ if }s=r,
\end{array}
\right.
\end{align*}
where the summation is over all $\tau_{s-1}\ne \tau,  \tau_{k-1}, \ldots, \tau_{s}$. This computation suggests that the partial derivative at the $s$-th iteration for some $s>r$ must involve the partial derivative of the $(s-1)$-th iteration and a factor of $n^{-1/2}$ is brought up every time when the chain rule is applied, until $a_{i\iota}$ or $a_{\iota i}$ appears for the first time at the $r$-th iteration. This in total brings up a factor of $n^{-(k-(r-1))/2}$ and we finally get
\begin{align}
\begin{split}\notag
\e_{a_{i\iota}}\bigl[ a_{i\iota}L_{\tau} \bigr]=\sum_{r=1}^{k-1}\frac{1}{n^{\frac{k-(r-1)}{2}}}\e_{a_{i\iota}} \Bigl[\sum_{I_{\tau,r}\in \mathcal{I}_{\tau,r}}A_{I_{\tau,r}}F_{I_{\tau,r}}(A)\Bigr]\mathbbm 1_{\{(\tau_{r}, \tau_{r-1}) = (i, \iota) \text{ or } (\iota, i)\}},
\end{split}
\end{align} 
where $\mathcal{I}_{\tau, r}$ is the collection of all self-avoiding paths 
$$ I_{\tau,r}=(\tau_k,\tau_{k-1},\tau_{k-2},\ldots,\tau_{r},\tau_{r-1})\in [n]^{k-r+2}$$
of length $k-r+1$ starting from $\tau_k=\tau$ and satisfying $\tau_{k-1}\neq i$,  and  
\begin{align}
\begin{split}\label{add:eq10}
A_{I_{\tau, r}} &:=\prod_{s=r}^{k-1}a_{\tau_{s+1}\tau_{s}}, \\
F_{I_{\tau,r}}(A)&:=f_k'(\Delta_\tau)\Bigl(\prod_{s=r}^{k-1}f_{s}'\bigl(w_{\{\tau,\tau_{k-1},\ldots,\tau_{s+1}\},\tau_{s}}^{[s]}\bigr)\Bigr) f_{r-1}\bigl(w_{\{\tau,\tau_{k-1},\ldots,\tau_{r}\},\tau_{r-1}}^{[r-1]}\bigr).\\
% F_{I_{\iota,r}}(A)&:=f_k'(\Delta_\iota)\Bigl(\prod_{s=r}^{k-1}f_{s}'\bigl(w_{\{\iota,\iota_{k-1},\iota_{k-2},\ldots,\iota_{s+1}\},\iota_{s}}^{[s]}\bigr)\Bigr) f_{r-1}\bigl(w_{\{\iota,\iota_{k-1},\iota_{k-2},\ldots,\iota_{r}\},\iota_{r-1}}^{[r-1]}\bigr).
\end{split}
\end{align}
Similarly,
\begin{align*}
\e_{a_{i\tau}} \bigl[a_{i\tau}L_\iota\bigr]&=\sum_{r=1}^{k-1}\frac{1}{n^{\frac{k-(r-1)}{2}}}\e_{a_{i\tau}} \Bigl[\sum_{I_{\iota,r}\in \mathcal{I}_{\iota,r}}A_{I_{\iota,r}}F_{I_{\iota,r}}(A)\Bigr]\mathbbm 1_{\{(\iota_{r}, \iota_{r-1}) = (i, \tau) \text{ or } (\tau, i)\}}.
\end{align*}
Now, from these
\begin{align}
\begin{split}\notag
&\e\bigl[\e_{a_{i\tau}}\bigl[a_{i\tau}L_\iota\bigr]\e_{a_{i\iota}}\bigl[a_{i\iota}L_\tau\bigr]\bigr]\\
% =&\sum_{r,r'=1}^{k-1}\frac{1}{n^{k+1-\frac{r+r'}{2}}}\e\Bigl[\e_{a_{i\iota}} \Bigl[\!\sum_{I_{\tau,r}\in \mathcal{I}_{\tau,r}}\!A_{I_{\tau,r}}\!F_{I_{\tau,r}}\!(\!A)\Bigr]\e_{a_{i\tau}} \Bigl[\sum_{I_{\iota,r'}\in \mathcal{I}_{\iota,r'}}\!A_{I_{\iota,r'}}\!F_{I_{\iota,r'}}\!(\!A)\Bigr]\Bigr]\mathbbm 1_{\left\{\substack{(\tau_{r}, \tau_{r\!-1}) = (i, \iota) \text{ or } (\iota, i)\\
% (\iota_{r'},\iota_{r'\!-1}) = (i,\tau) \text{ or } (\tau, i)}\right\}} 
\end{split}\\
\begin{split} \label{add:eq8}
=&\sum_{r,r'=1}^{k-1}\frac{1}{n^{k+1-\frac{r+r'}{2}}}\sum_{I_{\tau,r}\in \mathcal{I}_{\tau,r}}\sum_{I_{\iota,r'}\in \mathcal{I}_{\iota,r'}}\e\Bigl[A_{I_{\tau,r}}A_{I_{\iota,r'}}F_{I_{\tau,r}}(A)F_{I_{\iota,r'}}(A)\Bigr]\mathbbm 1_{\left\{\substack{(\tau_{r}, \tau_{r-1}) = (i, \iota) \text{ or } (\iota, i)\\
(\iota_{r'},\iota_{r'-1}) = (i,\tau) \text{ or } (\tau, i)}\right\}} ,
\end{split}
\end{align}
where the last equation used the fact that $A_{I_{\tau,r}}F_{I_{\tau,r}}(A)$ is independent of $a_{i\tau}$ and $A_{I_{\iota,r'}}F_{I_{\iota,r'}}(A)$ is independent of $a_{i\iota}.$
Each term in the summation of the last line is nonzero only if one of the following four cases is valid:
\begin{align*}
(A)&\quad(\tau_{r},\tau_{r-1})=(i,\iota),\,\,(\iota_{r'},\iota_{r'-1})=(i,\tau),\\
(B)&\quad(\tau_{r},\tau_{r-1})=(i,\iota),\,\,(\iota_{r'},\iota_{r'-1})=(\tau,i),\\
(C)&\quad(\tau_{r},\tau_{r-1})=(\iota,i),\,\,(\iota_{r'},\iota_{r'-1})=(i,\tau),\\
(D)&\quad(\tau_{r},\tau_{r-1})=(\iota,i),\,\,(\iota_{r'},\iota_{r'-1})=(\tau,i).
\end{align*}
Note that $\mathcal{I}_{\tau,r}$ and $\mathcal{I}_{\iota,r'}$ are collections of self-avoiding paths starting from $\tau$ and $\iota$, respectively. 
Let $\mathcal{I}_{\tau,\iota,r,r'}(s,t)$ be the collection of all pairs $(I_{\tau,r},I_{\iota,r'})\in \mathcal{I}_{\tau,r}\times \mathcal{I}_{\iota,r'}$ satisfying that
(i) one of $(A)-(D)$ holds,
(ii) there are exactly $s$ edges shared by $I_{\tau,r}$ and $I_{\iota,r'}$ disregard the direction, and
(iii) the number of (distinct) vertices appearing in the shared edges is equal to $t$.
See Figure \ref{fig1}(a) and (d) for two examples of pairs $(I_{\tau,r},I_{\iota,r'})$ in $\mathcal I_{\tau, \iota, k-5, k-6}(3,5)$ for $(i, \tau, \iota)=(1,2,4)$, where the shared edges are marked in blue.

Note that for $(I_{\tau,r},I_{\iota,r'})\in \mathcal{I}_{\tau,\iota,r,r'}(s,t)$, if the edge $(\tau_{r},\tau_{r-1})$ is shared in $I_{\iota,r'}$, it must imply that $\iota_{k-1}=i$ due to $(A)-(D)$, which contradicts  the definition of $\mathcal{I}_{\iota,r'}$ since $\iota_{k-1}\neq i.$ Hence, the last edges $(\tau_{r},\tau_{r-1})$ in $I_{\tau,r}$  and  $(\iota_{r'},\iota_{r'-1})$ in  $I_{\iota,r'}$ must not be among the shared edges. From this, to control the size of $\mathcal{I}_{\tau,\iota,r,r'}(s,t)$, it suffices to consider $s,t$ satisfying 
\begin{align}
\begin{split}\label{add:eq---1}
t = s = 0 \quad \text{or }\quad \left.\begin{array}{l}
\ \ \ \ \ 1\leq s\leq \min(k-r,k-r'),\\
s+1\leq t\leq \min\bigl(2s,k-r+1,k-r'+1\bigr).
\end{array}\right.
\end{split}
\end{align}
We then write
\begin{align}
\begin{split}\label{add:eq6}
&\sum_{I_{\tau,r}\in \mathcal{I}_{\tau,r}}\sum_{I_{\iota,r'}\in \mathcal{I}_{\iota,r'}}\e\Bigl[A_{I_{\tau,r}}A_{I_{\iota,r'}}F_{I_{\tau,r}}(A)F_{I_{\iota,r'}}(A)\Bigr]\mathbbm 1_{\left\{\substack{(\tau_{r}, \tau_{r-1}) = (i, \iota) \text{ or } (\iota, i)\\
(\iota_{r'},\iota_{r'-1}) = (i,\tau) \text{ or } (\tau, i)}\right\}}\\
&=\sum_{s,t}\sum_{(I_{\tau,r},I_{\iota,r'})\in \mathcal{I}_{\tau,\iota,r,r'}(s,t)}\e\Bigl[A_{I_{\tau,r}}A_{I_{\iota,r'}}F_{I_{\tau,r}}(A)F_{I_{\iota,r'}}(A)\Bigr],
\end{split}
\end{align}
where the first summation in the second line is over all $s,t$ satisfying \eqref{add:eq---1}.

Next, we further introduce the notation $\mathcal{I}_{\tau,\iota,r,r'}(s,t,\ell) \subset  \mathcal{I}_{\tau,\iota,r,r'}(s,t)$, where $\ell=0,1,2$ denotes the number of vertices in $\{\tau, \tau_{r}\}$ (or, equivalently, in $\{\iota,\iota_{r'}\}$; see Remark \ref{rmk:number-shared} below) that appear in the shared edges. In Figure \ref{fig1}, (a) and (d) are two examples in the same collection $\mathcal I_{\tau, \iota, k-5, k-6}(3,5)$ but with $\ell = 1$ and $\ell =2$, respectively.
Note that $\mathcal{I}_{\tau,\iota,r,r'}(s,t,\ell)=\emptyset$ if $\ell > t$. 

\begin{remark}
\label{rmk:number-shared}
\rm We claim that for any $(I_{\tau,r},I_{\iota,r'})\in \mathcal{I}_{\tau,\iota,r,r'}(s,t)$, the numbers of vertices in $\{\tau, \tau_{r}\}$ (denoted by $n_{1}$) and $\{\iota, \iota_{r'}\}$ (denoted by $n_{2}$) appearing in the shared edges must be the same, due to $(A)$-$(D)$. For symmetry, we only discuss the cases when $n_{1}<n_{2}$. 
\begin{itemize}
\item Case $n_{1}=0, n_{2}=1$. First of all, suppose $\iota$ is in a shared edge but $\tau, \tau_{r}$ and $\iota_{r'}$ are not. This immediately rules out $(C)$ and $(D)$ because in these two cases $\tau_{r}=\iota$. The cases $(A)$ and $(B)$ also can not occur. Indeed, if either $(A)$ or $(B)$ holds, then  this would force $(\tau_{r}, \tau_{r-1})$ (the last edge in $I_{\tau, r}$) to be a shared edge, a contradiction. Next, suppose that $\iota_{r'}$ is in a shared edge but $\tau, \tau_{r}$ and $\iota$ are not. We see that $(A)$, $(B)$ and $(D)$ can not occur because in these three cases, either $\iota_{r'}=\tau$ or $\iota_{r'}=\tau_{r}$. $(C)$ also can not occur, because in $(C)$, $\iota_{r'}=i$ is the last vertex in $I_{\tau, r}$, forcing the last edge $(\tau_{r}, \tau_{r-1})$ to be a shared edge. 
\item Case $n_{1}=0, n_{2}=2$.  Since both $\iota$ and $\iota_{r'}$ are from shared edges, none of $(A)$, $(B)$, and $(C)$ can occur. This is because in all three cases, the last edge $(\tau_{r}, \tau_{r-1})$  in $I_{\tau, r}$ must be a shared edge, which is not allowed. $(D)$ can not occur either as $\iota_{r'}=\tau$ in the shared edge would contradict $n_{1}=0$.
\item Case $n_{1}=1, n_{2}=2$. We can eliminate $(A)$, $(B)$ and $(C)$ for the same reason as in the $n_{1}=0, n_{2}=2$ case. $(D)$ can not happen either because in $(D)$, $\tau_{r}=\iota$ and $\tau=\iota_{r'}$, and then both $\tau_{r}$ and $\tau$ will be in the shared edges, a contradiction. 
\end{itemize}
\end{remark}

\begin{figure}[ht]
\centering
\includegraphics[width=0.7\textwidth]{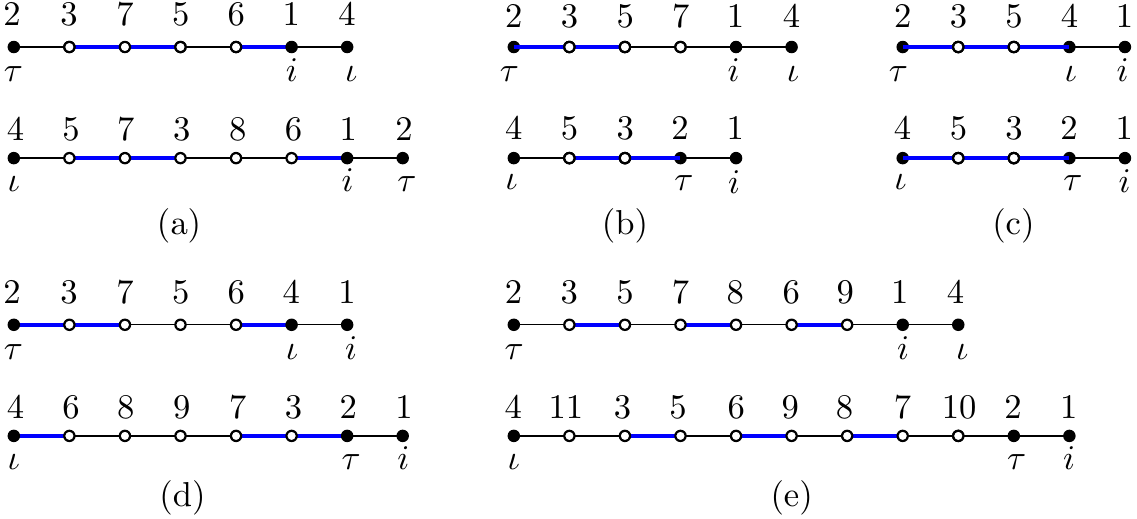}
\caption{\label{fig1} Let $k\geq 9$ and $(i, \tau,\iota)=(1, 2,4)$. These figures are typical examples of elements in $\mathcal{I}_{\tau,\iota,k-5,k-6}(3,5,1)$, $\mathcal{I}_{\tau,\iota,k-4,k-3}(2,3,1)$, $\mathcal{I}_{\tau,\iota,k-3,k-3}(3,4,2),$ $\mathcal{I}_{\tau,\iota,k-5,k-6}(3,5,2)$, $\mathcal{I}_{\tau,\iota,k-7,k-9}(3,6,0)$ from (a) to (e), respectively, where the shared edges are highlighted in blue. To bound the order of the cardinality of $\mathcal{I}_{\tau,\iota,r,r'}(s,t,\ell)$, we only need to consider all possible choices of $\tau_{k-1}, \ldots, \tau_{r+1}$ and $\iota_{k-1}, \ldots, \iota_{r'+1}$ (for example, the open circles in each case) that preserve the self-avoiding property and the number of shared edges. Consequently, from (a) to (e), $|\mathcal{I}_{\tau,\iota,k-5,k-6}(3,5,1)|\leq Cn^{5}$, $|\mathcal{I}_{\tau,\iota,k-4,k-3}(2,3,1)|\leq Cn^{3}$, $|\mathcal{I}_{\tau,\iota,k-3,k-3}(3,4,2)|\leq Cn^{2}$, $|\mathcal{I}_{\tau,\iota,k-5,k-6}(3,5,2)|\leq Cn^{6}$, and $|\mathcal{I}_{\tau,\iota,k-7,k-9}(3,6,0)|\leq Cn^{8}$, where  in each case, the positive constant $C>0$ varies and is independent of $n$.}
\end{figure}

Write
\begin{align}\label{add:eq7}
\mathcal{I}_{\tau,\iota,r,r'}(s,t)=\mathcal{I}_{\tau,\iota,r,r'}(s,t,0)\bigcup \mathcal{I}_{\tau,\iota,r,r'}(s,t,1)\bigcup \mathcal{I}_{\tau,\iota,r,r'}(s,t,2).
\end{align}
The following lemma establishes bounds for the sizes of $\mathcal{I}_{\tau,\iota,r,r'}(s,t,\ell).$

\begin{lemma}\label{lem5}
	For any $1\leq r,r'\leq k-1$, $(s,t)$ satisfying \eqref{add:eq---1},  and $0\leq \ell\leq t$, if $\mathcal{I}_{\tau,\iota,r,r'}(s,t,\ell)$ is nonempty, 
	then 
	\begin{align}\label{lem5:eq1}
	t-\ell\leq \min\bigl(k-r-1,k-r'-1\bigr)
	\end{align}
	 and  there is a constant $C=C(k, r, r', t,\ell)>0$ independent of $n$ such that
	\begin{align}\label{lem5:eq2}
	\bigl|\mathcal{I}_{\tau,\iota,r,r'}(s,t,\ell)\bigr|\leq Cn^{2k-r-r'-t+\ell-2}.
	\end{align}
\end{lemma}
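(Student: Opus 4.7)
The plan is a two-step counting argument: first identify the ``fixed'' and ``free'' positions in each of the two paths, then bound the number of label assignments to the free positions that are compatible with the sharing constraints. Writing $I_{\tau,r}=(\tau_k,\tau_{k-1},\ldots,\tau_{r-1})$, the vertex $\tau_k=\tau$ is prescribed, while $(\tau_r,\tau_{r-1})\in\{(i,\iota),(\iota,i)\}$ is forced by one of the cases $(A)$--$(D)$; hence the only ``free'' positions in $I_{\tau,r}$ are $\tau_{k-1},\ldots,\tau_{r+1}$, totaling $k-r-1$. The analogous statement holds for $I_{\iota,r'}$ with $k-r'-1$ free positions.

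For \eqref{lem5:eq1}, I would invoke the observation already recorded just above \eqref{add:eq---1}: the last edges $(\tau_r,\tau_{r-1})$ and $(\iota_{r'},\iota_{r'-1})$ cannot be among the shared edges, since otherwise $\iota_{k-1}=i$ or $\tau_{k-1}=i$, contradicting the definitions of $\mathcal{I}_{\iota,r'}$ or $\mathcal{I}_{\tau,r}$. Since $I_{\tau,r}$ is self-avoiding, the endpoint $\tau_{r-1}$ is incident in $I_{\tau,r}$ only to this last, non-shared edge, and thus cannot appear in any shared edge. Consequently, the $t$ shared vertices of $I_{\tau,r}$ all lie among $\{\tau,\tau_{k-1},\ldots,\tau_{r+1},\tau_r\}$. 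Removing the $\ell$ contributions from the distinguished set $\{\tau,\tau_r\}$, the remaining $t-\ell$ shared vertices distribute over the $k-r-1$ free positions, giving $t-\ell\le k-r-1$; the symmetric argument on the $\iota$-side yields $t-\ell\le k-r'-1$.

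For \eqref{lem5:eq2}, I first enumerate the ``combinatorial skeleton'' of a pair $(I_{\tau,r},I_{\iota,r'})$: which positions of $I_{\tau,r}$ host shared vertices, which positions of $I_{\iota,r'}$ they are matched with, and the bijection for the $\ell$ distinguished shared vertices lying in $\{\tau,\tau_r\}\cup\{\iota,\iota_{r'}\}$. This skeleton depends only on $k,r,r',s,t,\ell$ and contributes a constant $C=C(k,r,r',s,t,\ell)$. For a fixed skeleton, classify each of the $t$ shared vertices by the type of its positions in the two paths, $\mathrm{F}$ (free) or $\mathrm{X}$ (fixed), and let $b_{\mathrm{FF}},b_{\mathrm{FX}},b_{\mathrm{XF}},b_{\mathrm{XX}}$ be the four counts. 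From $b_{\mathrm{FF}}+b_{\mathrm{FX}}=t-\ell=b_{\mathrm{FF}}+b_{\mathrm{XF}}$ (using the equivalence clause of Remark~\ref{rmk:number-shared}) one gets $b_{\mathrm{FX}}=b_{\mathrm{XF}}=:a$, $b_{\mathrm{FF}}=t-\ell-a$, and $b_{\mathrm{XX}}=\ell-a$. Each $\mathrm{FF}$ vertex contributes one free choice in $n$ (its value is chosen freely in $I_{\tau,r}$ and then forces the matched free position in $I_{\iota,r'}$), whereas $\mathrm{FX}$, $\mathrm{XF}$, and $\mathrm{XX}$ vertices contribute no factor of $n$, being pinned to specific elements of $\{i,\tau,\iota\}$. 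Together with the $(k-r-1)-(t-\ell)$ truly-free positions in $I_{\tau,r}$ and $(k-r'-1)-(t-\ell)$ in $I_{\iota,r'}$, each contributing at most $n$ choices, we obtain
\begin{align*}
|\mathcal{I}_{\tau,\iota,r,r'}(s,t,\ell)|&\le C\cdot n^{(k-r-1)-(t-\ell)}\cdot n^{(k-r'-1)-(t-\ell)}\cdot n^{t-\ell-a}\\
&=C\cdot n^{2k-r-r'-t+\ell-2-a}\le C\cdot n^{2k-r-r'-t+\ell-2}.
\end{align*}

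The main obstacle is the careful bookkeeping of types $\mathrm{FX}$ and $\mathrm{XF}$ and verifying that the auxiliary parameter $a$ drops out (or is absorbed into) the final exponent, so that the bound depends only on the clean quantity $t-\ell$. Figure~\ref{fig1} already illustrates representative cases in which this cancellation takes place, making it plausible that the worst case is $a=0$, which produces precisely the claimed power.
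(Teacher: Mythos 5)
Your argument is correct and follows essentially the same route as the paper: fix the first vertex and the last edge of each path via $(A)$--$(D)$, note the last edges (hence $\tau_{r-1}$, $\iota_{r'-1}$) cannot lie in shared edges to get \eqref{lem5:eq1}, and then count the free positions, with the shared vertices outside $\{\tau,\tau_r\}$ contributing at most $n^{t-\ell}$ choices. Your FF/FX/XF/XX bookkeeping is a finer version of the paper's direct bound (it even yields the slightly sharper exponent $2k-r-r'-t+\ell-2-a$), but the underlying counting is the same.
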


\begin{proof}
 For any $(I_{\tau,r},I_{\iota,r'})\in \mathcal{I}_{\tau,\iota,r,r'}(s,t),$ the first vertices of both paths are already determined and their last edges $(\tau_{r},\tau_{r-1})$ and $(\iota_{r'},\iota_{r'-1})$ are fixed as well due to $(A)-(D)$. Hence, we can only select the vertices, $\tau_{k-1},\ldots,\tau_{r+1}$ and $\iota_{k-1},\ldots,\iota_{r'+1}$, which have cardinalities no larger than $n^{k-r-1}$ and $n^{k'-r'-1}$, respectively. Since there are $t-\ell$ vertices among $\{\tau_{k-1},\ldots,\tau_{r+1}\}$ and $\{\iota_{k-1},\ldots,\iota_{r'+1}\}$ that are shared with each other, \eqref{lem5:eq1} must hold. Also,
	\begin{align*}
	\bigl|\mathcal{I}_{\tau,\iota,r,r'}(s,t,\ell)\bigr|&\leq  Cn^{t-\ell}\cdot n^{(k-r-1)-(t-\ell)}\cdot n^{(k-r'-1)-(t-\ell)}\\
	&=C n^{2k-r-r'-t+\ell-2}
	\end{align*}
	for $0\leq \ell\leq t$, where 
	$$
C=C(k,r,r',t,\ell):=4\cdot (t-\ell)! \binom{k-r-1}{t-\ell} \cdot (t-\ell)!\binom{k-r'-1}{t-\ell}.
$$
	Here, the factor 4 accounts for the four different situations (A)-(D) and the two combinatorial numbers are upper bounds for the numbers of ways that the shared edges in $(I_{\tau,r},I_{\iota,r'})\in \mathcal{I}_{\tau,\iota,r,r'}(s,t,\ell)$ can appear, counting both order and orientation.
\end{proof}
Note that for the unshared edges, the corresponding Gaussian random variables in $A_{I_{\tau,r}}A_{I_{\iota,r'}}$ appear only once and there are $(k-r-s)+(k-r'-s)$ such edges so that we can apply the Gaussian integration by parts to get
\begin{align}\label{add:eq14}
&\e\Bigl[A_{I_{\tau,r}}A_{I_{\iota,r'}}F_{I_{\tau,r}}(A)F_{I_{\iota,r'}}(A)\Bigr]=\e\Bigl[S_{I_{\tau,r},I_{\iota,r'}}\partial_{P_{I_{\tau,r},I_{\iota,r}}}\bigl(F_{I_{\tau,r}}(A)F_{I_{\iota,r'}}(A)\bigr)\Bigr].
\end{align}
Here $S_{I_{\tau,r},I_{\iota,r'}}$ is the product of all $a_{\ell\ell'}$'s with $(\ell,\ell')$ being a shared edge in $(I_{\tau,r},I_{\iota,r'})$ and 
\begin{align}\label{add:eq15}
\e \bigl[S_{I_{\tau,r},I_{\iota,r'}}^2\bigr]\leq \e|z|^{4s}
\end{align}
for $z\thicksim N(0,1).$ The set $P_{I_{\tau,r},I_{\iota,r'}}$ is the collection of unshared edges and $\partial_{P_{I_{\tau,r},I_{\iota,r'}}}$ is the partial derivatives corresponding to the unshared edges in $P_{I_{\tau,r},I_{\iota,r'}}.$ We have the following moment control of these partial derivatives.

\begin{lemma}\label{lem4}
	There exists a constant $C>0$ such that for sufficiently large $n,$ 
	\begin{align*}
	\sup_{(I_{\tau,r},I_{\iota,r'})\in \mathcal{I}_{\tau,\iota,r,r'}(s,t)}\e\bigl|\partial_{P_{I_{\tau,r},I_{\iota,r'}}}\bigl(F_{I_{\tau,r}}(A)F_{I_{\iota,r'}}(A)\bigr)\bigr|^2&\leq \frac{C}{n^{2k-2s-r-r'}}.
	\end{align*}
\end{lemma}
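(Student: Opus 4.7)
The plan is to expand the mixed partial derivative via the generalized Leibniz rule and then bound each resulting term using the moment estimates in Propositions \ref{lem1} and \ref{prop3}. The guiding intuition is that each single partial derivative should contribute a factor of $n^{-1/2}$ to any $L^p$ norm; since
$
|P_{I_{\tau,r},I_{\iota,r'}}|=(k-r-s)+(k-r'-s)=2k-r-r'-2s,
$
this forecasts a cumulative factor of $n^{-(2k-r-r'-2s)/2}$, whose square is exactly the claimed rate.

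To carry this out, first decompose $F_{I_{\tau,r}}(A)\,F_{I_{\iota,r'}}(A)$ as a product of $N:=2k-r-r'+4$ smooth factors, namely the two ``anomalous'' factors $f_k'(\Delta_\tau)$ and $f_k'(\Delta_\iota)$, the $k-r$ intermediate factors $f_s'\bigl(w_{\{\tau,\tau_{k-1},\ldots,\tau_{s+1}\},\tau_s}^{[s]}\bigr)$ for $s=r,\ldots,k-1$, their $k-r'$ counterparts from $F_{I_{\iota,r'}}$, and the two terminal factors $f_{r-1}\bigl(w^{[r-1]}_{\{\tau,\tau_{k-1},\ldots,\tau_r\},\tau_{r-1}}\bigr)$ and $f_{r'-1}\bigl(w^{[r'-1]}_{\{\iota,\iota_{k-1},\ldots,\iota_{r'}\},\iota_{r'-1}}\bigr)$. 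Label these factors $g_1,\ldots,g_N$. Applying the Leibniz rule gives
$$
\partial_{P_{I_{\tau,r},I_{\iota,r'}}}\bigl(F_{I_{\tau,r}}(A)\,F_{I_{\iota,r'}}(A)\bigr)=\sum_{(P_1,\ldots,P_N)}\prod_{j=1}^N \partial_{P_j} g_j,
$$
where the sum ranges over ordered partitions of $P_{I_{\tau,r},I_{\iota,r'}}$ into $N$ (possibly empty) subsets. The number of such partitions is at most $N^{2k-r-r'-2s}$, a constant independent of $n$.

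For each summand I apply the generalized H\"older inequality
$$
\Bigl(\e\Bigl|\prod_{j=1}^N \partial_{P_j} g_j\Bigr|^2\Bigr)^{1/2}\leq \prod_{j=1}^N \bigl(\e\bigl|\partial_{P_j} g_j\bigr|^{2N}\bigr)^{1/(2N)}.
$$
For the intermediate and terminal factors, inequality \eqref{lem1:eq2} provides a bound of the form $(\e|\partial_{P_j}g_j|^{2N})^{1/(2N)}\leq C_j\,n^{-|P_j|/2}$ uniformly over the excluded index sets. For the two anomalous factors $f_k'(\Delta_\tau)$ and $f_k'(\Delta_\iota)$, which each carry the additional exclusion of the vertex $i$ and hence fall outside the scope of Proposition \ref{lem1}, Proposition \ref{prop3} supplies precisely the same form of bound. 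Multiplying these estimates and using $\sum_j |P_j|=2k-r-r'-2s$ shows that each Leibniz summand has $L^2$ norm at most $C'\,n^{-(2k-r-r'-2s)/2}$; summing over the boundedly many partitions and squaring yields the stated estimate, uniformly in $(I_{\tau,r},I_{\iota,r'})\in\mathcal{I}_{\tau,\iota,r,r'}(s,t)$ since none of the constants from Propositions \ref{lem1} and \ref{prop3} depend on the specific sites or excluded index sets. No substantive obstacle arises here; the argument is essentially a bookkeeping exercise that mirrors the derivation of Proposition \ref{prop3} from Proposition \ref{lem1}, the key conceptual ingredient being that those propositions already encode the correct scaling ``one $1/\sqrt n$ per derivative'' in an index-uniform manner.
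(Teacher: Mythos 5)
Your proposal is correct and follows essentially the same route as the paper's proof: bound each factor of $F_{I_{\tau,r}}(A)F_{I_{\iota,r'}}(A)$ via \eqref{lem1:eq2} for the intermediate and terminal factors and via Proposition \ref{prop3} for $f_k'(\Delta_\tau)$ and $f_k'(\Delta_\iota)$, then apply the product (Leibniz) rule and count $|P_{I_{\tau,r},I_{\iota,r'}}|=(k-r-s)+(k-r'-s)$ derivatives, each contributing a factor $n^{-1/2}$. Your version merely makes explicit the bookkeeping (number of Leibniz partitions and the generalized H\"older inequality with exponent $2N$) that the paper leaves implicit.
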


From \eqref{add:eq14}, \eqref{add:eq15}, and Lemma \ref{lem4}, we conclude that there exists some universal constant $C>0$ such that for sufficiently large $n,$
\begin{align}\label{add:eq---2}
\e\Bigl[A_{I_{\tau,r}}A_{I_{\iota,r'}}F_{I_{\tau,r}}(A)F_{I_{\iota,r'}}(A)\Bigr]&\leq  \frac{C}{n^{k-s-(r+r')/2}}.
\end{align}

\begin{proof}[\bf Proof of Lemma \ref{lem4}]
	Recall the terms in the product of \eqref{add:eq10}. For any $m\geq 0$ and $p\geq 1,$ \eqref{lem1:eq2} ensures the existence of positive constants $$W_{k-1,m,p,f_{k-1}'},W_{k-2,m,p,f_{k-2}'},\ldots,W_{r,m,p,f_{r}'},W_{r-1,m,p,f_{r-1}}$$ such that for $n$ large enough, the following inequalities hold,
	\begin{align*}
	\sup_{(P,S,i)\in\mathcal{B}_{s,n}(m)}\Bigl(\e\Bigl|\partial_Pf_{s}'\bigl(w_{S,i}^{[s]}\bigr)\Bigr|^p\Bigr)^{1/p}&\leq \frac{W_{s,m,p,f_s'}}{n^{m/2}},\,\,r\leq s\leq k-1,\\
	\sup_{(P,S,i)\in\mathcal{B}_{r-1,n}(m)} \Bigl(\e\Bigl|\partial_Pf_{r-1}\bigl(w_{S,i}^{[r-1]}\bigr)\Bigr|^p\Bigr)^{1/p}&\leq \frac{W_{r,m,p,f_r}}{n^{m/2}}.
	\end{align*}
	In addition, from Proposition \ref{prop3}, there exists a constant $W_{k,m,p,f_k'}' > 0$ such that
	\begin{align*}
	\sup_{}\Bigl(\e\Bigl|\partial_P\Bigl(f_k'\Bigl(\frac{1}{\sqrt{n}}\sum_{l\neq i,j}a_{jl}f_{k-1}\bigl(w_{\{j\},l}^{[k-1]}(A)\bigr)\Bigr)\Bigr)\Bigr|^p\Bigr)^{1/p}\leq \frac{W_{k,m,p,f_k'}'}{n^{m/2}},
	\end{align*}
	where the supremum is taken over all $P$'s, collections of elements from $\{(i',j'):1\leq i'<j'\leq n\}$ with $|P|=m$ counting multiplicities and $i,j\in [n]$ with $i\neq j.$ These bounds essentially say that each partial derivative will bring up a factor $n^{-1/2}$ module some absolute constant. As a result,  by applying the product rule of the  differentiation, the assertion follows since $|P_{I_{\tau,r},I_{\iota,r'}}|$ is the number of the unshared edges in the pair $(I_{\tau,r},I_{\iota,r'})$ and it is equal to $(k-r-s)+(k-r'-s).$
\end{proof}

Finally, we can bound the off-diagonal term in \eqref{add:eq4} as follows. Using Lemma \ref{lem5} and \eqref{add:eq---2}, we see that for any $1\leq r,r'\leq k-1$, $(s,t)$ satisfying \eqref{add:eq---1}, and $0\leq \ell\leq t$, 
if $\mathcal{I}_{\tau,\iota,r,r'}(s,t,\ell)$ is nonempty, then 
\begin{align*}
&\frac{1}{n^{k+1-(r+r')/2}}\sum_{(I_{\tau,r},I_{\iota,r'})\in \mathcal{I}_{\tau,\iota,r,r'}(s,t,\ell)}\e\Bigl[A_{I_{\tau,r}}A_{I_{\iota,r'}}F_{I_{\tau,r}}(A)F_{I_{\iota,r'}}(A)\Bigr]\\
&\leq \frac{C(k,r, r', t, \ell)}{n^{k+1-(r+r')/2}}\cdot n^{2k-r-r'-t+\ell-2}\cdot\frac{1}{n^{k-s-(r+r')/2}}\\
&=\frac{C(k,r, r', t, \ell)}{n^{3+t-s-\ell}}.
% &\leq \frac{C}{n^2},
\end{align*}
% where the last inequality used  \eqref{add:eq---1} and $\ell\leq \min(2,t)$.
Here, if $s=0$, then $t=\ell=0$ and  
$$\frac{1}{n^{3+t-s-\ell}} = \frac{1}{n^{3}}.$$  
If $s\ge 1$, using $t\ge s+1$ and $\ell\leq 2,$ we have
$$\frac{1}{n^{3+t-s-\ell}}\le \frac{1}{n^{4-\ell}}\le \frac{1}{n^{2}}.$$ 
As a result, from \eqref{add:eq8}, \eqref{add:eq6}, and \eqref{add:eq7}, for some $C''>0$ independent of  $n$,
\begin{align*}
&\e\bigl[\e_{a_{i\tau}}\bigl[a_{i\tau}L_\iota\bigr]\e_{a_{i\iota}}\bigl[a_{i\iota}L_\tau\bigr]\bigr]\leq  \frac{C''}{n^2}.
\end{align*}
Consequently, this bounds the off-diagonal term in \eqref{add:eq4},
\begin{align}\label{add:eq17}
\frac{1}{n}\sum_{\tau,\iota\neq i:\tau\neq \iota}\e\bigl[\e_{a_{i\tau}}\bigl[a_{i\tau}L_\iota\bigr]\e_{a_{i\iota}}\bigl[a_{i\iota}L_\tau\bigr]\bigr]&\leq \frac{C''}{n}.
\end{align}

\subsection{Step IV: Completion of the proof}

Plugging \eqref{add:eq16} and \eqref{add:eq17} into \eqref{add:eq4} and then using Lemma \ref{lem0}, we see that
\begin{align*}
u_i^{[k+1]}&\asymp_2\frac{1}{\sqrt{n}}\sum_{j\neq i}a_{ij}f_k\Bigl(\frac{1}{\sqrt{n}}\sum_{l\neq i,j}a_{jl}f_{k-1}\bigl(w_{\{j\},l}^{[k-1]}\bigr)\Bigr)\asymp_2 w_i^{[k+1]},\,\,\forall i\in [n].
\end{align*}
This implies that $u^{[k+1]}\asymp_2 w^{[k+1]}$ and completes our proof.

\bibliographystyle{plain}
{\footnotesize\bibliography{ref}}

\end{document}